\theoremstyle{plain}
\newtheorem{thm}{\protect\theoremname}
\theoremstyle{plain}
\newtheorem{lem}[thm]{\protect\lemmaname}
\DeclareSymbolFont{matha}{OML}{txmi}{m}{it}
\DeclareMathSymbol{v}{\mathord}{matha}{118}
\providecommand{\lemmaname}{Lemma}
\providecommand{\theoremname}{Theorem}
\begin{document}
\title{Dynamic and Thermodynamic Stability of Superconducting-superfluid
Stars}
\author{Delong Kong}
\email{kongdelong22@mails.ucas.ac.cn}

\affiliation{\textit{School of Physical Sciences, University of Chinese Academy
of Sciences, Beijing 100049, China}}
\author{Yu Tian}
\email{ytian@ucas.ac.cn}

\affiliation{\textit{School of Physical Sciences, University of Chinese Academy
of Sciences, Beijing 100049, China}}
\author{Hongbao Zhang}
\email{hongbaozhang@bnu.edu.cn}

\affiliation{School of Physics and Astronomy, Beijing Normal University, Beijing
100875, China}
\affiliation{Key Laboratory of Multiscale Spin Physics, Ministry of Education,
Beijing Normal University, Beijing 100875, China}

\date{\today}

\begin{abstract}
We present a comprehensive analysis of the dynamic and thermodynamic stability of neutron stars composed of superconducting-superfluid mixtures within the Iyer-Wald formalism. We derive the first law of thermodynamics and the necessary and sufficient condition under which dynamic equilibrium implies thermodynamic equilibrium. By constructing the phase space and canonical energy, we show that the dynamic stability for perturbations, restricted in symplectic complement of trivial perturbations with the ADM 3-momentum unchanged, is equivalent to the non-negativity of the canonical energy. Furthermore, dynamic stability against restricted axisymmetric perturbations guarantees the dynamic stability against all axisymmetric perturbations. We also prove that the positivity of canonical energy on all axisymmetric perturbations within the Lagrangian displacement framework with fixed angular momentum is necessary for thermodynamic stability. In particular, the equivalence of dynamic and thermodynamic stability is established for radial perturbations of static, spherically symmetric isentropic configurations.
\end{abstract}

\maketitle

\section{Introduction}\label{sec1}

The dynamic stability of self-gravitating compact objects is a cornerstone of theoretical astrophysics, determining whether they end up as a stable stars, collapsing to black holes, or exploding in a supernova. For the relativistic neutral or charged stars described by the single perfect fluid model, the criterion for dynamic stability has been established, and it has been found to be closely related to the thermodynamic stability \cite{Green:2013ica,Shi:2022jya}. Specifically, the criterion for dynamic stability of perfect fluid star in dynamic equilibrium is given by non-negativity of the canonical energy associated with the timelike Killing field, and the necessary condition for the thermodynamic stability of stars in thermodynamic equilibrium with respect to the axisymmetric perturbations is the positivity of the canonical energy. Furthermore, the dynamic and thermodynamic stability are equivalent if the background star is static, spherically symmetric isentropic and the perturbations are spherically symmetric.

While the single perfect fluid model provides an excellent approximation for many stellar objects such as red giants and white dwarfs, it is fundamentally inadequate for neutron stars due to their exceptionally complex internal microphysics. The interior of a neutron star is a multi-constituent system exhibiting distinct states across its layers. Broadly, its structure comprises a solid crystalline crust, a liquid outer core, and a potentially exotic inner core, with the equation of state and composition being active areas of research constrained by multi-messenger observations \cite{Langlois:2000gs,Ozel:2016oaf,LIGOScientific:2018cki,Annala:2019puf,Miller:2021qha}. Crucially for macroscopic dynamics, a defining feature of the relevant layers (the inner crust and the outer core) is the prevalence of superfluidity and superconductivity. In these regions, neutrons are believed to form a superfluid, while protons are in superconducting state, all permeated by a degenerate electron gas \cite{haensel2006neutron,Chamel:2008ca}. In the relevant layers, the superfluid neutrons make up the most of the mass density, the superconducting protons make up a small but significant part of the mass density, and the degenerate non-superconducting electrons make up a negligibly small fraction of the mass density, but nevertheless have an important role when the electromagnetic effects are concerned. This multi-component nature is not a minor correction but the central paradigm for explaining key observational phenomena. Therefore, a physically realistic hydrodynamic description of neutron star interiors must move beyond the single-fluid approximation. A multi-constituent fluid model, accounting for the distinct dynamics between the superfluid neutrons, the superconducting protons, and the electron plasma, is essential.

The two-fluid model of non-relativistic superfluid is developed by Laudau, and then generalized to the relativistic case and multi-constituent fluid model by Khalatnikov and Carter \cite{KHALATNIKOV198270,Carter:1987qr,Carter:1992gmy,CARTER1992243,Carter:1993aq}. For a basic representation of the neutron star's superconducting superfluid region, the two-fluid model is often adequate, where one constituent is the superfluid neutron while the other represents everything else, i.e., the approximately rigid background consists of protons and electrons that are tended by the short range electromagnetic interactions. Particularly, the two-constituent fluid model including allowance for ``transfusion'', meaning the slow transfer of baryonic matter (due to the process like beta decay and so on) between the neutron superfluid and the other ``normal'' fluid, has been constructed by Langlois et al. \cite{Langlois:1997bz}. However, because of its neglect of the electromagnetically interacting constituents, which will play an essential role in phenomena involving magnetic effects, the superconducting-superfluid mixtures \cite{Carter:1998rn} would be a more suitable framework.

The stability properties of non-relativistic rotating superfluid neutron stars have been studied long ago by using the canonical
energy formulation \cite{Andersson:2004nv}. And more recently, there have been some studies on the radial stability of neutron stars in relativistic case \cite{Caballero:2024qtv,Canullan-Pascual:2024jsu,Caballero:2025omv,Kumar:2025oyx}. For instance, reference \cite{Caballero:2024qtv} establishes a set of stability criteria for two perfect-fluid relativistic star, which is taken as a model of dark matter admixed neutron stars, by studying the radial mode perturbation equations, and provides also an alternative stability criterion (i.e., the positivity of canonical energy) in the same way as done for single perfect fluid stars. In \cite{Kumar:2025oyx}, the eigenvalue problem for a coupled system of equations with small-amplitude radial perturbations is solved and the critical line corresponding to stability boundaries is derived. Nevertheless, most of above researchers concern the model of two non-interacting perfect fluids, which are the simplification of the two-fluid model mentioned above where there are non-gravitational interactions between the two fluids in general, and the background star is taken to be spherically symmetric.

In this paper, we consider the relativistic neutron stars described by the non-transfusive superconducting-superfluid mixtures. By working with Iyer-Wald formalism \cite{Lee:1990nz,Wald:1993nt,Iyer:1994ys} rather than analyzing the perturbation equations, we will give the necessary and sufficient conditions for thermodynamic equilibrium, construct the phase space of our system, and then establish the criteria for both dynamic stability and thermodynamic stability as the positivity of canonical energy, which are equivalent if the background star is static, spherically symmetric isentropic and the perturbations are spherically symmetric, i.e., the radial perturbations. In our stability analysis, which includes the treatment of trivial displacements, we employ the most general form of Eulerian perturbations without any gauge fixing. This approach is necessitated by the presence of multiple Lagrangian displacements in a multi-constituent system and is more general than the single perfect fluid case, where a simplification to the Lagrangian perturbations, such as the gauge used in \cite{Green:2013ica,Shi:2022jya}, is possible.

Similar to the arguments given in the introduction of \cite{Green:2013ica}, when some perturbations acting on the star, if it is thermodynamically stable, the dissipative processes will make it back to the equilibrium, otherwise the dissipative processes will make it away from equilibrium. Our criterion for thermodynamic stability is established by examining the behavior of the entropy $S$, and it should be equivalent to the criterion by examining the dissipative dynamics as the dissipative processes tend to drive entropy to increase. So although our approach is incapable of yielding any information concerning the growth rate of any thermodynamic instability, it should lead to results equivalent to those that obtained by considering dissipation in principle. Furthermore, our results can be directly generalized to the case of superconducting-superfluid mixtures with one ``normal'' component and arbitrary number of ``super'' components, and even the case including the allowance for transfusion.

The rest of this paper is structured as follows. In Sec. \ref{sec2}, we review relativistic model of the superconducting-superfluid mixtures. In Sec. \ref{sec3}, we give a quick introduction to the Iyer-Wald formalism, and list some results that will be used in later. In Sec. \ref{sec4}, we derive the first law of thermodynamics, and after defining the thermodynamic equilibrium, the necessary and sufficient condition for the star in thermodynamic equilibrium is derived. Although the fluid model we used does not include the transfusion, but we will also give a simple argument about the transfusive case and show one of our definitions is suitable for such case at the end of this section. In Sec. \ref{sec5}, we devote ourselves to constructing the phase space and calculating the symplectic complement of trivial perturbations. As a by-product, we also derive the eigenvalue equation of propagation velocity $v$ of the sound wave, which will be useful in finding the degeneracy of pre-symplectic form. With such preparations, the criterion for the dynamic stability is established by introducing the canonical energy and taking advantage of the physically stationary perturbations in Sec. \ref{sec6}, and the necessary condition for the thermodynamic stability is established in Sec. \ref{sec7}. Finally, we give the conclusion and discussion in Sec. \ref{sec8}.

The notations and conventions of \cite{Wald:1984rg} will be followed by us, except the indices are not required to be balanced in our equations if no confusion arises. The capital Latin letters $\left(\text{X},\text{Y},...\right)$ denote the ``chemical'' indices which take the value $\text{n}$ for the neutrons, $\text{p}$ for protons, and $\text{e}$ for electrons, while $\Upsilon$ is used to be the chemical indices for ``super'' constituents, i.e., the neutrons and the protons. The early Latin letters $\left(a,b,c,...\right)$ and late Latin letters $\left(p,q,r,...\right)$ denote abstract spacetime indices, while the middle Latin letters $\left(i,j,k,...\right)$ denote concrete spatial indices on a spacelike Cauchy surface unless specified otherwise. Bold typeface will indicate the differential form indices on spacetime have been omitted, for instance, $\boldsymbol{N}$ denotes the tensor field $N_{abc}=N_{\left[abc\right]}$. We generally set $16\pi G = 1$ and $c=1$.

\section{Relativistic model of superconducting-superfluid mixtures}

\label{sec2}

In this section, we will review the Carter's relativistic model of superconducting-superfluid mixtures \cite{Carter:1998rn}. The three independent constituents under consideration are the superfluid neutrons with conserved particle current $n_{\text{n}}^{a}$, the superconducting protons with conserved particle current $n_{\text{p}}^{a}$, and the degenerate non-superconducting background of electrons with conserved particle current $n_{\text{e}}^{a}$. The electrons and protons have crucially important roles as electromagnetic effects are concerned, and in terms of the electron charge coupling constant $e$ the corresponding total electric current vector will be given by 
\begin{equation}
j^{a}=e\left(n_{\text{p}}^{a}-n_{\text{e}}^{a}\right).\label{electric current}
\end{equation}
Besides three principal constituents that have just been listed, there is a fourth constituent, namely the conserved entropy current $s^{a}=sn_{\text{e}}^{a}$.
Within the standard multi-constituent fluid model for superfluids, the entropy is associated exclusively with the normal (non-superfluid) component. Accordingly, as we consider that all neutrons and protons are in ``super" states in our model, the entropy current is only carried by the ``normal'' electron fluid with $s$ denoting the entropy per electron. For convenience, we shall use the capital Latin letters $\text{X}=\text{n},\text{p},\text{e}$ below as the ``chemical'' indices of three relevant constituents, and using this convention, the equation Eq. \eqref{electric current} can be rewritten in the concise form
\begin{equation}
j^{a}=\sum_{\text{X}}e^{\text{X}}n_{\text{X}}^{a},
\end{equation}
where the charges per neutron, proton, and electron are given respectively by $e^{\text{n}}=0$, $e^{\text{p}}=e$, and $e^{\text{e}}=-e$.

\subsection{Master function}

The central quantity of Carter's theory of multi-constituent fluid is the so-called \emph{master function} \cite{Carter:1987qr}, which is taken to be the total thermodynamic energy density $-\Lambda_{\text{M}}$, depending only on the metric $g_{ab}$ and the ``hydrodynamic'' part of the system, i.e., $n_{\text{n}}^{a}$, $n_{\text{p}}^{a}$, $n_{\text{e}}^{a}$ and $s$. More exactly, $\Lambda_{\text{M}}$ is a function of all scalar combinations obtained by their mutual contractions 
\begin{equation}
\Lambda_{\text{M}}=\Lambda_{\text{M}}\left(n_{\text{X}}^{2},x_{\text{XY}}^{2},s\right),
\end{equation}
where $n_{\text{X}}^{2}=-n_{\text{X}}^{a}n_{\text{X}}^{b}g_{ab}$, and $x_{\text{XY}}^{2}=-n_{\text{X}}^{a}n_{\text{Y}}^{b}g_{ab}$. The master function encodes all information about the local thermodynamic state of the fluid, and can also serve as a Lagrangian density in the absence of electromagnetic effects. The variation of $\Lambda_{M}$ gives that
\begin{equation}
\delta\Lambda_{\text{M}}=\sum_{\text{X}}\mu_{a}^{\text{X}}\delta n_{\text{X}}^{a}-Tn_{\text{e}}\delta s+\sum_{\text{X}}\frac{1}{2}n_{\text{X}}^{a}\mu^{\text{X}b}\delta g_{ab},\label{Lambda variation}
\end{equation}
where the effective momentum covectors respectively associated with the corresponding current $n_{\text{X}}^{a}$ are given by
\begin{equation}
\mu_{a}^{\text{X}}=-2\frac{\partial\Lambda_{\text{M}}}{\partial n_{\text{X}}^{2}}n_{\text{X}a}-\sum_{\text{Y}\neq\text{X}}\frac{\partial\Lambda_{\text{M}}}{\partial x_{\text{XY}}^{2}}n_{\text{Y}a},\label{effective momentum}
\end{equation}
and the temperature is given by
\begin{equation}
T=-\frac{1}{n_{\text{e}}}\frac{\partial\Lambda_{\text{M}}}{\partial s},
\end{equation}
Denote
\begin{align}
\mathscr{A} & =-\frac{\partial\Lambda_{\text{M}}}{\partial x_{\text{np}}^{2}},\quad\mathscr{B}=-\frac{\partial\Lambda_{\text{M}}}{\partial x_{\text{ne}}^{2}},\quad\mathscr{C}=-\frac{\partial\Lambda_{\text{M}}}{\partial x_{\text{pe}}^{2}},\nonumber \\
\mathscr{D} & =-2\frac{\partial\Lambda_{\text{M}}}{\partial n_{\text{n}}^{2}},\quad\mathscr{E}=-2\frac{\partial\Lambda_{\text{M}}}{\partial n_{\text{p}}^{2}},\quad\mathscr{F}=-2\frac{\partial\Lambda_{\text{M}}}{\partial n_{\text{e}}^{2}},
\end{align}
the effective momentum covectors Eq. \eqref{effective momentum} can be rewritten in the form 
\begin{equation}
\mu_{a}^{\text{X}}=\sum_{\text{Y}}\mathbb{I}_{\text{XY}}n_{\text{Y}a},
\end{equation}
where 
\begin{equation}
\mathbb{I}=\left(\begin{array}{ccc}
\mathscr{D} & \mathscr{A} & \mathscr{B}\\
\mathscr{A} & \mathscr{E} & \mathscr{C}\\
\mathscr{B} & \mathscr{C} & \mathscr{F}
\end{array}\right),\label{inertia matrix}
\end{equation}
is the \emph{inertia matrix} and we will assume that it is positive definite like it in the two fluid model \cite{CARTER1992243}.

\subsection{Dynamical fields, variations, and Lagrangian displacements}

\label{subsec2.2}

To develop a Lagrangian description of the superconducting-superfluid mixtures, not only are we required to have the spacetime manifold $\mathcal{M}$, on which the metric $g_{ab}$ and the electromagnetic potential $A_{a}$ are defined, but also for each constituent we should introduce a fiducial manifold $\mathcal{M}_{\text{X}}^{\prime}$, called fluid spacetime, which is diffeomorphic to $\mathcal{M}$. Then with a fixed scalar field $s^{\prime}$ on $\mathcal{M}_{\text{e}}^{\prime}$, and fixed closed 3-form $\boldsymbol{N}^{\text{X}\prime}$ on $\mathcal{M}_{\text{X}}^{\prime}$, one can define the physical fluid fields on $\mathcal{M}$ by pushing forward with diffeomorphism $\chi_{\text{X}}$ as
\begin{equation}
\iota_{n_{\text{X}}}\boldsymbol{\epsilon}\equiv\boldsymbol{N}^{\text{X}}=\chi_{\text{X}*}\boldsymbol{N}^{\text{X}\prime},\quad s=\chi_{\text{e}*}s^{\prime},
\end{equation}
where $\boldsymbol{\epsilon}$ is the associated spacetime volume element. So we can take $\phi=\left(g_{ab},A_{a},\chi_{\text{n}},\chi_{\text{p}},\chi_{\text{e}}\right)$ as the dynamical fields, for convenience, we shall write $\phi=\left(g_{ab},A_{a},\chi_{\text{X}}\right)$.

The variations about an arbitrary field configuration $\phi$ can be formulated by introducing a one-parameter family of dynamical fields $\phi\left(\lambda\right)=\left(g_{ab}\left(\lambda\right),A_{a}\left(\lambda\right),\chi_{\text{X}}(\lambda)\right)$ with $\phi\left(0\right)=\phi$. Since for each constituent, $\rho_{\text{X}}\left(\lambda\right)=\chi_{\text{X}}(\lambda)\circ\chi_{\text{X}}^{-1}$ give rise to one-parameter family of diffeomorphism on $\mathcal{M}$ generated to first order by a vector field $\xi_{\text{X}}^{a}$ known as \emph{Lagrangian displacement}, hence the first order perturbation is completely specified by $\delta\phi=\left(\delta g_{ab},\delta A_{a},\xi_{\text{X}}^{a}\right)$. The first order variations of $\boldsymbol{N}^{\text{X}}$ and $s$ are given by
\begin{equation}
\delta\boldsymbol{N}^{\text{X}}=-\mathscr{L}_{\xi_{\text{X}}}\boldsymbol{N}^{\text{X}},\quad\delta s=-\mathscr{L}_{\xi_{\text{e}}}s.\label{N=000026s variation}
\end{equation}
A first order perturbation is said to be \emph{trivial} if $\delta g_{ab}=0$, $\delta A_{a}=0$, $\mathscr{L}_{\xi_{\text{X}}}\boldsymbol{N}^{\text{X}}=0$,
and $\mathscr{L}_{\xi_{\text{e}}}s=0$, i.e., if all of the physical variables are unchanged by the perturbation. The associated displacement $\xi_{\text{X}}^{a}$ is called \emph{trivial displacement}.

As a consequence of the variations Eq. \eqref{N=000026s variation}, the perturbations of the particle current $n_{\text{X}}^{a}$, the relevant unit flow $u_{\text{X}}^{a}=\frac{1}{n_{\text{X}}}n_{\text{X}}^{a}$, and the particle density $n_{\text{X}}$ in its own rest frame are given by
\begin{align}
\delta n_{\text{X}}^{a} & =-\mathscr{L}_{\xi_{\text{X}}}n_{\text{X}}^{a}-n_{\text{X}}^{a}\nabla_{b}\xi_{\text{X}}^{b}-\frac{1}{2}n_{\text{X}}^{a}g^{bc}\delta g_{bc},\label{n variation}\\
\delta u_{\text{X}}^{a} & =\frac{1}{2}u_{\text{X}}^{a}u_{\text{X}}^{b}u_{\text{X}}^{c}\delta g_{bc}-q_{\text{X}b}^{a}\mathscr{L}_{\xi_{\text{X}}}u_{\text{X}}^{b},\label{u variation}\\
\delta n_{\text{X}} & =-\mathscr{L}_{\xi_{\text{X}}}n_{\text{X}}-n_{\text{X}}q_{\text{X}b}^{a}\nabla_{a}\xi_{\text{X}}^{b}-\frac{1}{2}nq_{\text{X}}^{ab}\delta g_{ab},
\end{align}
where $q_{\text{X}}^{ab}=g^{ab}+u_{\text{X}}^{a}u_{\text{X}}^{b}$.

\subsection{Lagrangian of superconducting-superfluid mixtures}

The standard minimal prescription for inclusion of electromagnetic interactions is to use a combined Lagrangian scalar density in which the ``matter'' contribution $\Lambda_{\text{M}}$ is augmented by an electromagnetic field contribution $\Lambda_{F}=-\frac{1}{4}F_{ab}F^{ab}$ and a gauge dependent coupling term of the usual form to give a total Lagrangian $\boldsymbol{\mathcal{L}}$ expressible as
\begin{equation}
\boldsymbol{\mathcal{L}}=\boldsymbol{\epsilon}\left(\Lambda_{\text{M}}-\frac{1}{4}F_{ab}F^{ab}+j^{a}A_{a}\right),
\end{equation}
where $F_{ab}=2\nabla_{[a}A_{b]}$. With Eqs. \eqref{Lambda variation}, \eqref{N=000026s variation}, and \eqref{n variation}, one finds that the variation of Lagrangian is given by
\begin{align}
\delta\boldsymbol{\mathcal{L}}= & -\boldsymbol{\epsilon}\sum_{\text{X}}\left(n_{\text{X}}^{b}w_{ba}^{\text{X}}-\pi_{a}^{\text{X}}\nabla_{b}n_{\text{X}}^{b}\right)\xi_{\text{X}}^{a}+\boldsymbol{\epsilon}Tn_{\text{e}}\nabla_{a}s\xi_{\text{e}}^{a}\nonumber \\
 & +\boldsymbol{\epsilon}\left(j^{a}-\nabla_{b}F^{ab}\right)\delta A_{a}+\frac{1}{2}\boldsymbol{\epsilon}T^{ab}\delta g_{ab}\nonumber \\
 & +\boldsymbol{\epsilon}\nabla_{a}\left(-F^{ab}\delta A_{b}+\sum_{\text{X}}\pi_{b}^{\text{X}}n_{\text{X}}^{[a}\xi_{\text{X}}^{b]}\right),
\end{align}
where the gauge dependent total momentum covectors are given by
\begin{equation}
\pi_{a}^{\text{X}}=\mu_{a}^{\text{X}}+e^{\text{X}}A_{a},
\end{equation}
the vorticity tensors are given by
\begin{equation}
w_{ab}^{\text{X}}=2\nabla_{[a}\pi_{b]}^{\text{X}}=2\nabla_{[a}\mu_{b]}^{\text{X}}+e^{\text{X}}F_{ab},
\end{equation}
and the energy-momentum tensor is given by\footnote{It is easy to check that $\sum_{\text{X}}n_{\text{X}}^{a}\mu^{\text{X}b}=\sum_{\text{X}}n_{\text{X}}^{b}\mu^{\text{X}a}$ by using Eq. \eqref{effective momentum}.}
\begin{align}
T^{ab} & =T_{F}^{ab}+T_{\text{M}}^{ab},\\
T_{F}^{ab} & =F^{ca}F_{c}^{\ b}-\frac{1}{4}F_{cd}F^{cd}g^{ab},\\
T_{\text{M}}^{ab} & =\sum_{\text{X}}n_{\text{X}}^{a}\mu^{\text{X}b}+\Psi_{\text{M}}g^{ab},\label{TM}
\end{align}
with the generalization of pressure \cite{Carter:1987qr}
\begin{equation}
\Psi_{\text{M}}=\Lambda_{\text{M}}-\sum_{\text{X}}\mu_{a}^{\text{X}}n_{\text{X}}^{a}.\label{Pressure}
\end{equation}
Combining Eqs. \eqref{Lambda variation} and \eqref{Pressure}, we find that the variation of $\Psi_{\text{M}}$ is given by
\begin{equation}
\delta\Psi_{\text{M}}=-\sum_{\text{X}}n_{\text{X}}^{a}\delta\mu_{a}^{\text{X}}-Tn_{\text{e}}\delta s+\sum_{\text{X}}\frac{1}{2}n_{\text{X}}^{a}\mu^{\text{X}b}\delta g_{ab}.\label{variation of pressure}
\end{equation}

Taking account the separate conservation of particle currents of each constituent and the conservation of entropy current, 
\begin{equation}
\nabla_{a}n_{\text{X}}^{a}=0,\quad u_{\text{e}}^{a}\nabla_{a}s=0,\label{conservation of n=000026s}
\end{equation}
we find the equations of motion of neutrons, protons, and electrons are respectively given by
\begin{align}
f_{\text{n}}^{a} & =-n_{\text{n}}^{b}w_{ba}^{\text{n}}=0,\label{eomn}\\
f_{\text{p}}^{a} & =-n_{\text{p}}^{b}w_{ba}^{\text{p}}=0,\label{eomp}\\
f_{\text{e}}^{a} & =-n_{\text{e}}^{b}w_{ba}^{\text{e}}+Tn_{\text{e}}\nabla_{\mu}s=0,\label{eome}
\end{align}
and the equation of motion of electromagnetic field is given by
\begin{equation}
E_{F}^{a}=j^{a}-\nabla_{b}F^{ab}=0.
\end{equation}

\section{Lagrangian framework for diffeomorphism covariant theories}

\label{sec3}

In this section, we will review the Iyer-Wald formalism for diffeomorphism covariant theories. We will apply these results to the vacuum Einstein-Maxwell Lagrangian in Sec. \ref{sec4}, and to the Einstein-superconducting-superfluid Lagrangian in Sec. \ref{sec5}.

Consider the variation of the diffeomorphism covariant Lagrangian
\begin{equation}
\delta\boldsymbol{\mathcal{L}}=\boldsymbol{E}\cdot\delta\phi+d\boldsymbol{\theta}\left(\phi;\delta\phi\right),\label{variation of Lagrangian}
\end{equation}
where $\boldsymbol{E}$ is the equations of motion, and $\boldsymbol{\theta}(\phi,\delta\phi)$ is the symplectic potential. Now with $\delta\phi$ formally viewed
as a vector in the tangent space at $\phi$ of the space of field configuration $\mathcal{F}$, denoted as $\delta\phi^{A}$, we obtain a linear map at each $\phi\in\mathcal{F}$ from vectors, $\delta\phi^{A}$, into numbers by integration of the 3-form $\boldsymbol{\theta}\left(\phi;\delta\phi\right)$ over a Cauchy surface $\Sigma$. We can interpret this linear map as defining a 1-form field $\Theta_{A}$ on $\mathcal{F}$ by
\begin{equation}
\Theta_{A}\delta\phi^{A}=\int_{\Sigma}\boldsymbol{\theta}\left(\phi;\delta\phi\right),
\end{equation}
and the pre-symplectic form (rather than symplectic form since it has degeneracy as argued below) is defined by 
\begin{equation}
W_{AB}=\left(\mathcal{D}\Theta\right)_{AB},
\end{equation}
where $\mathcal{D}$ represents the exterior derivative on forms on $\mathcal{F}$. For 1-form $\Theta_{A}$, one has\footnote{A standard result of differential geometry states that for 1-form $\boldsymbol{\mathcal{Y}}$ and vectors $\mathcal{X}_1,\mathcal{X}_2$,
\begin{equation}
d\boldsymbol{\mathcal{Y}}\left(\mathcal{X}_1,\mathcal{X}_2\right)=\mathcal{X}_1\left(\boldsymbol{\mathcal{Y}}\left(\mathcal{X}_2\right)\right)-\mathcal{X}_2\left(\boldsymbol{\mathcal{Y}}\left(\mathcal{X}_1\right)\right)-\boldsymbol{\mathcal{Y}}\left(\left[\mathcal{X}_1,\mathcal{X}_2\right]\right).
\end{equation}}
\begin{align}
\left(\mathcal{D}\Theta\right)_{AB}\delta_{1}\phi^{A}\delta_{2}\phi^{B}= & \mathbb{L}_{\delta_{1}\phi}\left(\Theta_{B}\delta_{2}\phi^{B}\right)-\mathcal{\mathbb{L}}_{\delta_{2}\phi}\left(\Theta_{A}\delta_{1}\phi^{A}\right)\nonumber \\
 & -\Theta_{A}\left[\delta_{1}\phi,\delta_{2}\phi\right]^{A},\label{presymplectic}
\end{align}
where $\mathbb{L}$ denotes the Lie derivative on $\mathcal{F}$. Since with the covariant derivative $D_{A}$ on $\mathcal{F}$, we can formally write that the variation induced by the field variations
$\delta_{1}\phi$ as
\begin{equation}
\delta_{1}\left(\Theta_{A}\delta_{2}\phi^{A}\right)=\delta_{1}\phi^{A}D_{A}\left(\Theta_{A}\delta_{2}\phi^{A}\right)=\mathbb{L}_{\delta_{1}\phi}\left(\Theta_{A}\delta_{2}\phi^{A}\right),
\end{equation}
and note that
\begin{equation}
\delta_{1}\left(\Theta_{A}\delta_{2}\phi^{A}\right)=\int_{\Sigma}\delta_{1}\boldsymbol{\theta}\left(\phi;\delta_{2}\phi\right),
\end{equation}
then Eq. \eqref{presymplectic} amounts to saying
\begin{equation}
W_{AB}\delta_{1}\phi^{A}\delta_{2}\phi^{B}=\int_{\Sigma}\boldsymbol{\omega}\left(\phi;\delta_{1}\phi,\delta_{2}\phi\right),\label{pre-symplectic form}
\end{equation}
where the pre-symplectic current 3-form $\boldsymbol{\omega}$ on spacetime is defined by
\begin{align}
 & \boldsymbol{\omega}\left(\phi;\delta_{1}\phi,\delta_{2}\phi\right)\nonumber \\
= & \delta_{1}\boldsymbol{\theta}\left(\phi;\delta_{2}\phi\right)-\delta_{2}\boldsymbol{\theta}\left(\phi;\delta_{1}\phi\right)-\boldsymbol{\theta}\left(\phi;\delta_{1}\delta_{2}\phi-\delta_{2}\delta_{1}\phi\right),\label{pre-symplectic current}
\end{align}
and $\delta_{1}$ and $\delta_{2}$ denote the variation of quantities induced by the field variations $\delta_{1}\phi$ and $\delta_{2}\phi$ respectively. It immediately that 
\begin{align}
 &d\boldsymbol{\omega}\left(\phi;\delta_{1}\phi,\delta_{2}\phi\right)\nonumber\\
=&\delta_{1}d\boldsymbol{\theta}\left(\phi;\delta_{2}\phi\right)-\delta_{2}d\boldsymbol{\theta}\left(\phi;\delta_{1}\phi\right)-d\boldsymbol{\theta}\left(\phi;\delta_{1}\delta_{2}\phi-\delta_{2}\delta_{1}\phi\right)\nonumber\\
=&\delta_1\left(\delta_2\boldsymbol{\mathcal{L}}-\boldsymbol{E}\cdot\delta_2\phi\right)-\delta_2\left(\delta_1\boldsymbol{\mathcal{L}}-\boldsymbol{E}\cdot\delta_1\phi\right)\nonumber\\
 &-\left[\left(\delta_1\delta_2-\delta_2\delta_1\right)\boldsymbol{\mathcal{L}}-\boldsymbol{E}\cdot\left(\delta_1\delta_2-\delta_2\delta_1\right)\phi\right]\nonumber\\
=&\delta_{2}\boldsymbol{E}\cdot\delta_{1}\phi-\delta_{1}\boldsymbol{E}\cdot\delta_{2}\phi,
\end{align}
so $\boldsymbol{\omega}$ is closed whenever $\delta_{1}\phi$ and $\delta_{2}\phi$ satisfy the linearized equations of motion $\delta_{1}\boldsymbol{E}=\delta_{2}\boldsymbol{E}=0$. Consequently, if the linearized equations of motion hold, then $W_{AB}\delta\phi^{A}\delta\phi^{B}$ is conserved in the sense that it takes the same value if the integral defining this quantity is performed over the surface $\Sigma^{\prime}$ rather than $\Sigma$, where $\Sigma^{\prime}$ and $\Sigma$ bound a compact region. For asymptotically flat spacetime, $W_{AB}\delta\phi^{A}\delta\phi^{B}$ takes the same value on any two asymptotically flat Cauchy surfaces $\Sigma$ and $\Sigma^{\prime}$ provided that $\delta_{1}\phi$ and $\delta_{2}\phi$ satisfy the linearized equations of motion and have suitable fall-off at infinity.

For a diffeomorphism covariant Lagrangian, the Noether current 3-form on spacetime associated with an arbitrary vector field $\mathcal{X}^{a}$ is defined by
\begin{equation}
\boldsymbol{\mathcal{J}}_{\mathcal{X}}=\boldsymbol{\theta}\left(\phi;\mathscr{L}_{\mathcal{X}}\phi\right)-\iota_{\mathcal{X}}\boldsymbol{\mathcal{L}}.\label{Noether current}
\end{equation}
A simple calculation \cite{Iyer:1994ys} shows that the first variation of $\boldsymbol{\mathcal{J}}_{\mathcal{X}}$ (with $\mathcal{X}^{a}$ fixed, i.e., unvaried) satisfies 
\begin{equation}
\delta\boldsymbol{\mathcal{J}}_{\mathcal{X}}=-\iota_{\mathcal{X}}\left(\boldsymbol{E}\cdot\delta\phi\right)+\boldsymbol{\omega}\left(\phi;\delta\phi,\mathscr{L}_{\mathcal{X}}\phi\right)+d\left[\iota_{\mathcal{X}}\boldsymbol{\theta}\left(\phi;\delta\phi\right)\right],
\end{equation}
where it has not been assumed that $\phi$ satisfies the field equations nor that $\delta\phi$ satisfies the linearized field equations. Furthermore, it can be shown that $\boldsymbol{\mathcal{J}}_{\mathcal{X}}$ can be written in the form \cite{Iyer:1995kg}
\begin{equation}
\boldsymbol{\mathcal{J}}_{\mathcal{X}}=\boldsymbol{C}_{\mathcal{X}}+d\boldsymbol{Q}_{\mathcal{X}},
\end{equation}
where $\boldsymbol{Q}_{\mathcal{X}}$ is the Noether charge and $\boldsymbol{C}_{\mathcal{X}}\equiv \mathcal{X}^{a}\boldsymbol{C}_{a}$ with $\boldsymbol{C}_{a}=0$ being the constraint equations of the theory \cite{Seifert:2006kv}. Having written in this form, we obtain the fundamental identity
\begin{equation}
\boldsymbol{\omega}\left(\phi;\delta\phi,\mathscr{L}_{\mathcal{X}}\phi\right)=\iota_{\mathcal{X}}\boldsymbol{E}\cdot\delta\phi+\delta\boldsymbol{C}_{\mathcal{X}}+d\left[\delta\boldsymbol{Q}_{\mathcal{X}}-\iota_{\mathcal{X}}\boldsymbol{\theta}\left(\phi;\delta\phi\right)\right],\label{fundamental identity}
\end{equation}
It should be emphasized that this fundamental identity holds for arbitrary $\mathcal{X}^{a}$, $\phi$, and $\delta\phi$.

One immediate consequence of Eq. \eqref{fundamental identity} is the gauge invariance of the symplectic form. If $\phi$ satisfies the equations of motion, $\boldsymbol{E}=0$, $\delta\phi$ satisfies the linearized constraints, $\delta\boldsymbol{C}_{a}=0$, and $\mathcal{X}^{a}$ is of compact support (or vanishes sufficiently rapidly at infinity and/or any boundaries), integration of Eq. \eqref{fundamental identity} over a Cauchy surface $\Sigma$ yields
\begin{equation}
W_{AB}\delta\phi^{A}\mathscr{L}_{\mathcal{X}}\phi^{B}=0.
\end{equation}
Consequently, the value of $W_{AB}\delta\phi^{A}\delta\phi^{B}$ is unchanged if either $\delta_{1}\phi$ or $\delta_{2}\phi$ is altered by a gauge transformation $\delta\phi\rightarrow\delta\phi+\mathscr{L}_{\mathcal{X}}\phi$ with $\mathcal{X}^{a}$ of compact support.

Another very important application concerns the case where $\mathcal{X}^{a}$ approaches a nontrivial asymptotic symmetry rather than being of compact support, in which case we can derive a formula for the Hamiltonian, $H_{\mathcal{X}}$, conjugate to the notion of ``translations'' defined by $\mathcal{X}^{a}$, and, thereby, a definition of ADM-type conserved quantities. Consider asymptotically flat spacetime with one asymptotically flat ``end''. Integrating Eq. \eqref{fundamental identity} over a Cauchy surface $\Sigma$, we have
\begin{align}
 & W_{AB}\delta\phi^{A}\mathcal{L}_{\mathcal{X}}\phi^{B}\nonumber \\
= & \int_{\Sigma}\left(\iota_{\mathcal{X}}\boldsymbol{E}\cdot\delta\phi+\delta\boldsymbol{C}_{\mathcal{X}}\right)+\int_{S_{\infty}}\left[\delta\boldsymbol{Q}_{\mathcal{X}}-\iota_{\mathcal{X}}\boldsymbol{\theta}\left(\phi;\delta\phi\right)\right],\label{symplectic form}
\end{align}
where the second integral is taken over a 2-sphere $S$ that limits to infinity (additional boundary terms would appear if $\Sigma$ terminated at a bifurcate Killing horizon or if here were additional asymptotically flat ends). Suppose in this limit, we have
\begin{equation}
\lim_{S\rightarrow S_{\infty}}\int_{S}\iota_{\mathcal{X}}\boldsymbol{\theta}\left(\phi;\delta\phi\right)=\lim_{S\rightarrow S_{\infty}}\delta\int_{S}\iota_{\mathcal{X}}\boldsymbol{B},\label{limit}
\end{equation}
for some 3-form $\boldsymbol{B}$ constructed from $\phi$ and the background asymptotic structure near infinity. Then, if $\phi$ satisfies the equations of motion, $\boldsymbol{E}=0-$but $\delta\phi$ is not required to satisfy the linearized equations of motion$-$we have
\begin{equation}
W_{AB}\delta\phi^{A}\mathscr{L}_{\mathcal{X}}\phi^{B}=\delta H_{\mathcal{X}},\label{delta Hamiltonian}
\end{equation}
where
\begin{equation}
H_{\mathcal{X}}=\int_{\Sigma}\boldsymbol{C}_{\mathcal{X}}+\int_{S_{\infty}}\left(\boldsymbol{Q}_{\mathcal{X}}-\iota_{\mathcal{X}}\boldsymbol{B}\right).\label{Hamiltonian}
\end{equation}

Writing 
\begin{equation}
\delta H_{\mathcal{X}}=\delta\phi^{A}D_{A}H_{\mathcal{X}}=\left(\mathcal{D}H_{\mathcal{X}}\right)_{A}\delta\phi^{A},
\end{equation}
we may rewrite Eq. \eqref{delta Hamiltonian} as
\begin{equation}
W_{AB}\mathscr{L}_{\mathcal{X}}\phi^{B}=\left(\mathcal{D}H_{\mathcal{X}}\right)_{A}.
\end{equation}
We now pass from the field configuration space, $\mathcal{F}$, to phase space, $\mathcal{P}$, by factoring by the degeneracy orbits of $W_{AB}$. On $\mathcal{P}$, $W_{AB}$ is well defined and, by construction, is nondegenerate. Let $W^{AB}$ denote the inverse of $W_{AB}$, so that $W^{AB}W_{BC}=\delta_{C}^{A}$ which denotes the identity map on $\mathcal{P}$. Then we have
\begin{equation}
\left(\mathscr{L}_{\mathcal{X}}\phi\right)^{A}=W^{AB}\left(\mathcal{D}H_{\mathcal{X}}\right)_{B},
\end{equation}
which is the usual form of Hamilton's equations of motion on a symplectic manifold. Thus if both the asymptotic conditions on $\phi$ and the asymptotic behavior of $\mathcal{X}^{a}$ are such that a 3-form $\boldsymbol{B}$ satisfying Eq. \eqref{limit} exists, then Eq. \eqref{Hamiltonian} yields a Hamiltonian conjugate to the notion of ``translations'' defined by $\mathcal{X}^{a}$. Note that when evaluated on solutions, $\boldsymbol{C}_{\mathcal{X}}=0$, $H_{\mathcal{X}}$ is purely a ``surface term''
\begin{equation}
H_{\mathcal{X}}\vert_{\boldsymbol{E}=0}=\int_{S_{\infty}}\left(\boldsymbol{Q}_{\mathcal{X}}-\iota_{\mathcal{X}}\boldsymbol{B}\right).\label{surface charge}
\end{equation}

In the case where $\mathcal{X}^{a}$ is asymptotic to a time translation $t^{a}$ at infinity, and $\boldsymbol{B}$ satisfying Eq. \eqref{limit} can be found, then Eq. \eqref{surface charge} defines the ADM mass
\begin{equation}
M=\int_{S_{\infty}}\left(\boldsymbol{Q}_{t}-\iota_{t}\boldsymbol{B}\right).\label{ADM mass}
\end{equation}
In the case where $\mathcal{X}^{a}$ is asymptotic to a rotation $\varphi^{a}$ tangent to $\Sigma$ at infinity and $S$ is chosen so that $\mathcal{X}^{a}$ is tangent to $S$, the pull back of $\iota_{\mathcal{X}}\boldsymbol{\theta}$ to $S$ vanishes, then Eq. \eqref{surface charge} with $\mathcal{X}^{a}=\varphi^{a}$ and $\boldsymbol{B}=0$ defines minus the ADM angular momentum
\begin{equation}
J=-\int_{S_{\infty}}\boldsymbol{Q}_{\varphi}.\label{ADM angular momentum}
\end{equation}

Finally, let us return to Eq. \eqref{symplectic form} in the case where $\phi$ has a time translation symmetry, i.e. $\mathcal{L}_{t}\phi=0$ for a vector field $t^{a}$ that approaches a time translation at infinity. We further assume that the equations of motion, $\boldsymbol{E}=0$, hold in a neighborhood of infinity, but we do not assume that they hold in the interior of the spacetime. We similarly assume that $\delta\phi$ satisfies the linearized constraints near infinity, but do not assume that these hold in the interior of the spacetime, nor do we make any symmetry assumptions on $\delta\phi$. Then the left side of Eq. \eqref{symplectic form} vanishes if $\mathcal{X}^{a}=t^{a}$, and the surface integral on the right side simply yields $\delta M$. Thus we obtain
\begin{equation}
\delta M=-\int_{\Sigma}\left(\iota_{t}\boldsymbol{E}\cdot\delta\phi+\delta\boldsymbol{C}_{t}\right).\label{delta M}
\end{equation}

\section{First law of thermodynamics and thermodynamic equilibrium}

\label{sec4}

By applying the results of the previous section to the vacuum Einstein-Maxwell Lagrangian, we will derive the first law of thermodynamics for our superconducting-superfluid star. We will also show that on a $t-\varphi$ reflection invariant Cauchy surface $\Sigma$ of the background spacetime, a solution to the linearized Einstein-Maxwell constraint equations always can be found for any given axisymmetric specifications of variation of the thermodynamic quantities. Accordingly, we finally give two kinds of definition of thermodynamic equilibrium, and find the necessary and sufficient conditions for a superconducting-superfluid star in dynamic equilibrium to be in both kinds of thermodynamic equilibrium.

\subsection{First law of thermodynamics}

Consider the vacuum Einstein-Maxwell Lagrangian
\begin{equation}
\boldsymbol{\mathcal{L}}=\boldsymbol{\epsilon}\left(R-\frac{1}{4}F_{ab}F^{ab}\right).
\end{equation}
For this Lagrangian, the equations of motion Eq. \eqref{variation of Lagrangian} are given by
\begin{align}
\boldsymbol{E}_{G}^{ab} & =\boldsymbol{\epsilon}\left(-G^{ab}+\frac{1}{2}T_{F}^{ab}\right),\\
\boldsymbol{E}_{F}^{a} & =-\boldsymbol{\epsilon}\nabla_{b}F^{ab},
\end{align}
the constraint 3-form $\boldsymbol{C}_{\mathcal{X}}$ is
\begin{equation}
\left(\boldsymbol{C}_{\mathcal{X}}\right)_{abc}=-\mathcal{X}^{d}\boldsymbol{\epsilon}_{eabc}\left[2\left(E_{G}\right)_{\ d}^{e}+E_{F}^{e}A_{d}\right],
\end{equation}
and the Noether charge 2-form $\boldsymbol{Q}_{\mathcal{X}}$ is
\begin{equation}
\left(\boldsymbol{Q}_{\mathcal{X}}\right)_{ab}=-\boldsymbol{\epsilon}_{abcd}\left(\nabla^{c}\mathcal{X}^{d}+\frac{1}{2}F^{cd}\mathcal{X}^{e}A_{e}\right).\label{Noether charge}
\end{equation}

In what follows, we consider a stationary, axisymmetric spacetime with the electromagnetic potential satisfying $\mathscr{L}_{t}A_{a}=\mathscr{L}_{\varphi}A_{a}=0$ for the timelike and axial Killing fields $t^{a}$ and $\varphi^{a}$, where the metric $g_{ab}$ and the electromagnetic potential $A_{a}$ are vacuum solution of Einstein-Maxwell's equation near infinity, and satisfy
\begin{align}
G^{ab}-\frac{1}{2}T_{F}^{ab} & =\frac{1}{2}T_{\text{M}}^{ab},\\
\nabla_{b}F^{ab} & =j^{a},
\end{align}
for some $T_{\text{M}}^{ab}$ and $j^{a}$ of the superconducting-superfluid mixtures form Eqs. \eqref{TM} and \eqref{electric current} having compact spatial support. In addition, we assume that $\delta g_{ab}$ and $\delta A_{a}$ satisfies the linearized Einstein-superconducting-superfluid equations
\begin{align}
\delta\left(G^{ab}-\frac{1}{2}T_{F}^{ab}\right) & =\frac{1}{2}\delta T_{\text{M}}^{ab},\label{linearized EG}\\
\delta\left(\nabla_{b}F^{ab}\right) & =\delta j^{a},\label{linearized EF}
\end{align}
where $\delta T_{\text{M}}^{ab}$ and $\delta j^{a}$ take the forms of perturbed Eqs. \eqref{TM} and \eqref{electric current} of compact spatial support. However, we impose no symmetry conditions on $\delta g_{ab}$ and $\delta A_{a}$.

For convenience, we choose $\Sigma$ to be axisymmetric Cauchy surface in the sense that $\varphi^{a}$ is tangent to $\Sigma$. With Eq. \eqref{Noether charge} and the axial Killing field $\varphi^{a}$, the ADM angular momentum can be expressed as follows
\begin{widetext}
\begin{align}
J & =-\int_{S_{\infty}}\boldsymbol{Q}_{\varphi}=-\int_{\Sigma}d\boldsymbol{Q}_{\varphi}\nonumber \\
 & =2\int_{\Sigma}\nabla_{e}\left(\nabla^{[d}\varphi^{e]}+\frac{1}{2}F^{de}A_{f}\varphi^{f}\right)\boldsymbol{\epsilon}_{dabc}\nonumber \\
 & =\int_{\Sigma}\left[2R_{\ e}^{d}\varphi^{e}+\nabla_{e}F^{de}A_{f}\varphi^{f}+F^{de}\varphi^{f}\nabla_{e}A_{f}+F^{de}\left(\mathscr{L}_{\varphi}A_{e}-\varphi^{f}\nabla_{f}A_{e}\right)\right]\boldsymbol{\epsilon}_{dabc}\nonumber \\
 & =\int_{\Sigma}\left[2R_{\ e}^{d}\varphi^{e}+\nabla_{e}F^{de}A_{f}\varphi^{f}+F^{de}F_{ef}\varphi^{f}\right]\boldsymbol{\epsilon}_{dabc}\nonumber \\
 & =\int_{\Sigma}\left[2\left(R_{\ f}^{d}-\frac{1}{2}F^{de}F_{fe}\right)\varphi^{f}+\nabla_{e}F^{de}A_{f}\varphi^{f}\right]\boldsymbol{\epsilon}_{dabc}\nonumber \\
 & =\int_{\Sigma}\left[\left(T_{\text{M}}\right)_{\ e}^{d}+j^{d}A_{e}\right]\varphi^{e}\boldsymbol{\epsilon}_{dabc}\nonumber \\
 & =\int_{\Sigma}\sum_{\text{X}}\left(\pi_{a}^{\text{X}}\varphi^{a}\boldsymbol{N}^{\text{X}}\right),\label{total angular momentum}
\end{align}
\end{widetext}
where we have used the Stokes' theorem in the second step, the Killing field identity $\nabla_{a}\nabla_{b}\varphi_{c}=R_{dabc}\varphi^{d}$ in the fourth step, $\mathscr{L}_{\varphi}A_{a}=0$ in the fifth step, and the fact that the pull back of $\varphi^{d}\boldsymbol{\epsilon}_{dabc}$ to $\Sigma$ vanishes in the seventh and eighth steps. So we see that the total angular momentum is the sum of angular momentum of each constituents, i.e.,
\begin{equation}
J=\sum_{\text{X}}J^{\text{X}},\quad J^{\text{X}}=\int_{\Sigma}J_{\text{X}}^{a}\boldsymbol{\epsilon}_{abcd},
\end{equation}
with angular momentum current
\begin{equation}
J_{\text{X}}^{a}=\varphi^{b}\pi_{b}^{\text{X}}n_{\text{X}}^{a}.\label{angular momentum current}
\end{equation}

Our superconducting-superfluid star is in \emph{dynamic equilibrium} if it is the solution to the equations of motion Eqs. \eqref{eomn}-\eqref{eome}, it satisfies
\begin{align}
\mathscr{L}_{t}n_{\text{X}}^{a} & =\mathscr{L}_{\varphi}n_{\text{X}}^{a}=0,\\
\mathscr{L}_{t}s & =\mathscr{L}_{\varphi}s=0,
\end{align}
and the unit flow associated to each constituent are given by the following circular flow condition
\begin{equation}
u_{\text{X}}^{a}=\frac{1}{\vert v_{\text{X}}\vert}\left(t^{a}+\Omega_{\text{X}}\varphi^{a}\right),\label{circular flow condition}
\end{equation}
with $\Omega_{\text{X}}$ the angular velocity and
\begin{equation}
\vert v_{\text{X}}\vert^{2}=-g_{ab}\left(t^{a}+\Omega_{\text{X}}\varphi^{a}\right)\left(t^{b}+\Omega_{\text{X}}\varphi^{b}\right).
\end{equation}
When the star is in dynamic equilibrium, the angular momentum currents Eq. \eqref{angular momentum current} of each constituent are conserved separately, indeed
\begin{align}
\nabla_{a}J_{\text{X}}^{a} & =\nabla_{a}\left(\varphi^{b}\pi_{b}^{\text{X}}n_{\text{X}}^{a}\right)\nonumber \\
 & =n_{\text{X}}^{a}\left(\mathscr{L}_{\varphi}\pi_{a}^{\text{X}}-\varphi^{b}\nabla_{b}\pi_{a}^{\text{X}}+\varphi^{b}\nabla_{a}\pi_{b}^{\text{X}}\right)\nonumber \\
 & =n_{\text{X}}^{a}\mathscr{L}_{\varphi}\pi_{a}^{\text{X}}+\varphi^{b}n_{\text{X}}^{a}w_{ab}^{\text{X}}\nonumber \\
 & =n_{\text{X}}^{a}\mathscr{L}_{\varphi}\pi_{a}^{\text{X}}+\delta_{\text{X}}^{\text{e}}Tn_{\text{e}}\mathscr{L}_{\varphi}s\nonumber \\
 & =0,\label{conservation of j}
\end{align}
where we have used the conservation of particle current in the second step, the fluid equations of motion in the fourth step, and the dynamic equilibrium conditions in the fifth steps. $\delta_{\text{X}}^{\text{e}}$ is the Kronecker symbol to determine whether the index $\text{X}$ is corresponding to electron.

With above preparation, Eq. \eqref{delta M} yields\footnote{The vector fields $t^{a}$ and $\varphi^{a}$ are fixed, i.e., $\delta t^{a}=\delta\varphi^{a}=0$.}
\begin{widetext}
\begin{align}
\delta M= & \int_{\Sigma}t^{a}\left\{ \left(G^{bc}-\frac{1}{2}T_{F}^{bc}\right)\delta g_{bc}\boldsymbol{\epsilon}_{adef}+\nabla_{b}F^{cb}\delta A_{c}\boldsymbol{\epsilon}_{adef}+\delta\left[2\left(E_{G}\right)_{\ a}^{b}\boldsymbol{\epsilon}_{bdef}+A_{a}E_{F}^{b}\boldsymbol{\epsilon}_{bdef}\right]\right\} \nonumber \\
= & \int_{\Sigma}t^{a}\left\{ \frac{1}{2}T_{\text{M}}^{bc}\delta g_{bc}\boldsymbol{\epsilon}_{adef}+j^{b}\delta A_{b}\boldsymbol{\epsilon}_{adef}-\delta\left[\left(T_{\text{M}}\right)_{\ a}^{b}\boldsymbol{\epsilon}_{bdef}+A_{a}j^{b}\boldsymbol{\epsilon}_{bdef}\right]\right\} \nonumber \\
= & \int_{\Sigma}\sum_{\text{X}}t^{a}\left[\frac{1}{2}n_{\text{X}}^{b}\mu^{\text{X}c}\delta g_{bc}\boldsymbol{\epsilon}_{adef}+e^{\text{X}}n_{\text{X}}^{b}\delta A_{b}\boldsymbol{\epsilon}_{adef}-\delta\left(\mu_{a}^{\text{X}}n_{\text{X}}^{b}\boldsymbol{\epsilon}_{bdef}+e^{\text{X}}A_{a}n_{\text{X}}^{b}\boldsymbol{\epsilon}_{bdef}\right)\right]\nonumber \\
 & +\int_{\Sigma}t^{a}\left[\frac{1}{2}\Psi g^{bc}\delta g_{bc}\boldsymbol{\epsilon}_{adef}-\delta\left(\Psi\boldsymbol{\epsilon}_{adef}\right)\right]\nonumber \\
= & \int_{\Sigma}\sum_{\text{X}}t^{a}\left[\frac{1}{2}n_{\text{X}}^{b}\mu^{\text{X}c}\delta g_{bc}\boldsymbol{\epsilon}_{adef}+e^{\text{X}}n_{\text{X}}^{b}\delta A_{b}\boldsymbol{\epsilon}_{adef}-\delta\left(\pi_{a}^{\text{X}}n_{\text{X}}^{b}\boldsymbol{\epsilon}_{bdef}\right)\right]-\int_{\Sigma}t^{a}\delta\Psi\boldsymbol{\epsilon}_{adef}\nonumber \\
= & \int_{\Sigma}\sum_{\text{X}}t^{a}\left[n_{\text{X}}^{b}\delta\pi_{b}^{\text{X}}\boldsymbol{\epsilon}_{adef}-\delta\left(\pi_{a}^{\text{X}}n_{\text{X}}^{b}\boldsymbol{\epsilon}_{bdef}\right)\right]+\int_{\Sigma}t^{a}Tn_{\text{e}}\delta s\boldsymbol{\epsilon}_{adef}\nonumber \\
= & \int_{\Sigma}\sum_{\text{X}}\left\{ \vert v_{\text{X}}\vert u_{\text{X}}^{a}\left[n_{\text{X}}^{b}\delta\pi_{b}^{\text{X}}\boldsymbol{\epsilon}_{adef}-\delta\left(\pi_{a}^{\text{X}}n_{\text{X}}^{b}\boldsymbol{\epsilon}_{bdef}\right)\right]+\Omega_{\text{X}}\varphi^{a}\delta\left(\pi_{a}^{\text{X}}n_{\text{X}}^{b}\boldsymbol{\epsilon}_{bdef}\right)\right\} +\int_{\Sigma}\vert v_{\text{e}}\vert n_{\text{e}}^{a}T\delta s\boldsymbol{\epsilon}_{adef}\nonumber \\
= & \int_{\Sigma}\sum_{\text{X}}\left[\vert v_{\text{X}}\vert\left(-u_{\text{X}}^{a}\pi_{a}^{\text{X}}\right)\delta\left(n_{\text{X}}^{b}\boldsymbol{\epsilon}_{bdef}\right)+\Omega_{\text{X}}\delta\left(\varphi^{a}\pi_{a}^{\text{X}}n_{\text{X}}^{b}\boldsymbol{\epsilon}_{bdef}\right)\right]-\int_{\Sigma}\vert v_{\text{e}}\vert Ts\delta\left(n_{\text{e}}^{a}\boldsymbol{\epsilon}_{adef}\right)+\int_{\Sigma}\vert v_{\text{e}}\vert T\delta\left(sn_{\text{e}}^{a}\boldsymbol{\epsilon}_{adef}\right),
\end{align}
\end{widetext}
where we have used the Eq. \eqref{variation of pressure} in the fifth step, and the circular flow condition Eq. \eqref{circular flow condition} and the fact that pullback of $\varphi^{a}\boldsymbol{\epsilon}_{abcd}$ to $\Sigma$ vanishes in the sixth step. Define the redshifted chemical potentials of each constituent and the redshifted temperature by
\begin{align}
\tilde{\mu}_{\text{n}} & =-u_{\text{n}}^{a}\pi_{a}^{\text{n}}\vert v_{\text{n}}\vert,\nonumber \\
\tilde{\mu}_{\text{p}} & =-u_{\text{p}}^{a}\pi_{a}^{\text{p}}\vert v_{\text{p}}\vert,\nonumber \\
\tilde{\mu}_{\text{e}} & =-(u_{\text{e}}^{a}\pi_{a}^{\text{e}}+Ts)\vert v_{\text{e}}\vert,\nonumber \\
\tilde{T} & =T\vert v_{\text{e}}\vert,
\end{align}
then we end up with the desired form of the first law of thermodynamics holding for arbitrary perturbations off of a superconducting-superfluid star in dynamic equilibrium
\begin{equation}
\delta M=\int_{\Sigma}\left(\sum_{\text{X}}\tilde{\mu}_{\text{X}}\delta\boldsymbol{N}^{\text{X}}+\tilde{T}\delta\boldsymbol{S}+\sum_{\text{X}}\Omega_{\text{X}}\delta\boldsymbol{J}^{\text{X}}\right),\label{first law}
\end{equation}
where $\boldsymbol{N}^{\text{X}}$, $\boldsymbol{S}$, and $\boldsymbol{J}^{\text{X}}$ are the Hodge dual 3-form respectively to the particle current $n_{\text{X}}^{a}$, the entropy current $sn_{\text{e}}^{a}$, and the angular momentum current $J_{\text{X}}^{a}$. By the conservation law Eqs. \eqref{conservation of n=000026s} and \eqref{conservation of j}, they are all closed forms. The number of particles $N^{\text{X}}$ of each constituent, the total entropy $S$, and the angular momentum $J^{\text{X}}$ of each constituent are given by
\begin{equation}
N^{\text{X}}=\int_{\Sigma}\boldsymbol{N}^{\text{X}},\quad S=\int_{\Sigma}\boldsymbol{S},\quad J^{\text{X}}=\int_{\Sigma}\boldsymbol{J}^{\text{X}}.\label{NSJ}
\end{equation}

\subsection{Existence of desired solutions to the linearized constraints}

Before we going to talk about the thermodynamic equilibrium, we want first to check that whether the linearized constraint equations Eqs. \eqref{linearized EG} and \eqref{linearized EF} will prevent us from choosing $\delta\boldsymbol{N}^{\text{X}}$, $\delta\boldsymbol{S}$, and $\delta\boldsymbol{J}^{\text{X}}$ freely. Let $\Sigma$ be a $t-\varphi$ reflection invariant Cauchy surface for a star in dynamic equilibrium, and we would like to fix our coordinate system in which the metric takes
\begin{equation}
ds^{2}=-\alpha^{2}d\tau^{2}+h_{ij}\left(dx^{i}+\beta^{i}d\tau\right)\left(dx^{j}+\beta^{j}d\tau\right),\label{coordinate}
\end{equation}
with $\Sigma$ given by the surface of $\tau=0$, and the unit normal covector of $\Sigma$ is $\nu_{a}=-\alpha\left(d\tau\right)_{a}$. Let $\boldsymbol{e}$ be a fixed, non-dynamical volume element on $\Sigma$, so the volume element associated with the induced metric on $\Sigma$ is $\sqrt{h}\boldsymbol{e}$. Consider perturbations off of this background, the linearized Hamiltonian constraint on $\Sigma$ is
\begin{equation}
0=-2\delta\left(\sqrt{h}\nu^{a}\nu^{b}G_{ab}\right)+\delta\left(\sqrt{h}\nu^{a}\nu^{b}T_{ab}\right),\label{Einstein constraint}
\end{equation}
the linearized momentum constraint is
\begin{equation}
0=-2\delta\left(\sqrt{h}h_{a}^{b}\nu^{c}G_{bc}\right)+\delta\left(\sqrt{h}h_{a}^{b}\nu^{c}T_{bc}\right),
\end{equation}
and the linearized constraint from the electromagnetic potential
\begin{equation}
0=-\delta\left(\sqrt{h}\nu_{a}\nabla_{b}F^{ab}\right)+\delta\left(\sqrt{h}\nu_{a}j^{a}\right).\label{Maxwell constraint}
\end{equation}

Let
\begin{align}
\mathcal{N}_{\text{X}} & \equiv-\sqrt{h}n_{\text{X}}^{a}\nu_{a},\nonumber \\
\mathcal{S} & \equiv s\mathcal{N}_{\text{e}},\nonumber \\
\mathcal{J}_{\text{X}} & \equiv\varphi^{a}\pi_{a}^{\text{X}}\mathcal{N}_{\text{X}},
\end{align}
then the pullback of $\delta\boldsymbol{N}^{\text{X}}$, $\delta\boldsymbol{S}$, and $\delta\boldsymbol{J}^{\text{X}}$ on $\Sigma$ are given by
\begin{align}
\delta\boldsymbol{N}^{\text{X}} & =\left(\delta\mathcal{N}_{\text{X}}\right)\boldsymbol{e},\nonumber \\
\delta\boldsymbol{S} & =\left(\delta\mathcal{S}\right)\boldsymbol{e},\nonumber \\
\delta\boldsymbol{J}^{\text{X}} & =\left(\delta\mathcal{J}_{\text{X}}\right)\boldsymbol{e},
\end{align}
And to facilitate our calculation, below we like to work with the gauge in which $\alpha=1$, $\beta^{i}=0$ and $\delta\alpha=\delta\beta^{i}=0$
on $\Sigma$. 

By the Gauss-Codazzi equation
\begin{equation}
2\nu_{a}\nu_{b}G^{ab}=R^{\left(3\right)}-K_{ab}K^{ab}+K^{2},
\end{equation}
the circular flow condition $u_{\text{X}}^{a}=-\nu^{a}u_{\text{X}}^{b}\nu_{b}+\frac{u_{\text{X}}^{b}\varphi_{b}}{\varphi^{c}\varphi_{c}}\varphi^{a}$, as well as the background $K=0$ due to the fact that $K_{ab}$ is odd under $t-\varphi$ reflection, the linearized Hamiltonian constraint takes the form
\begin{align}
 & \sqrt{h}\bigg\{-R^{\left(3\right)ij}\delta h_{ij}+D^{i}D^{j}\delta h_{ij}-D^{i}D_{i}\delta h_{j}^{j}\nonumber \\
 & +h^{-1}\pi_{ij}\pi^{ij}\delta h_{k}^{k}-2h^{-1}\pi_{ij}\delta\pi^{ij}-2h^{-1}\pi_{i}^{j}\pi^{ik}\delta h_{jk}\nonumber \\
 & -E_{G}^{ab}\nu_{a}\nu_{b}\delta h_{j}^{j}+\frac{1}{2}T_{\text{M}}^{ij}\delta h_{ij}\nonumber \\
 & -h^{-1}\Pi_{i}\delta\Pi^{i}+\frac{1}{2}h^{-1}\Pi_{i}\Pi^{i}\delta h_{j}^{j}-\frac{1}{2}h^{-1}\Pi^{i}\Pi^{j}\delta h_{ij}\nonumber \\
 & +\left(D_{k}A^{i}-D^{i}A_{k}\right)D^{[k}A^{j]}\delta h_{ij}-2D^{[i}A^{j]}D_{[i}\delta A_{j]}\bigg\}\nonumber \\
 & -\sum_{\text{X}}\frac{1}{u_{\text{X}}^{d}\nu_{d}}\frac{u_{\text{X}}^{b}\varphi_{b}}{\varphi^{c}\varphi_{c}}\mathcal{N}_{\text{X}}e^{\text{X}}\varphi^{a}\delta A_{a}\nonumber \\
= & \sum_{\text{X}}\left(-\frac{1}{u_{\text{X}}^{c}\nu_{c}}\right)\left(-u_{\text{X}}^{a}\mu_{a}^{\text{X}}-e^{\text{X}}\frac{u_{\text{X}}^{b}\varphi_{b}}{\varphi^{d}\varphi_{d}}\varphi^{a}A_{a}\right)\delta\mathcal{N}_{\text{X}}\nonumber \\
 & -\frac{1}{u_{\text{e}}^{c}\nu_{c}}\left(-Ts\delta\mathcal{N}_{\text{e}}+T\delta\mathcal{S}\right)-\sum_{\text{X}}\frac{1}{u_{\text{X}}^{c}\nu_{c}}\frac{u_{\text{X}}^{a}\varphi_{a}}{\varphi^{b}\varphi_{b}}\delta\mathcal{J}_{\text{X}}.\label{Hamiltonian constraint}
\end{align}
Here $\delta h_{i}^{j}=h^{jk}\delta h_{ik}$, $D$ and $R^{\left(3\right)ij}$ are the derivative operator and the Ricci tensor associated with $h_{ij}$ on $\Sigma$, while $\pi^{ij}=\sqrt{h}\left(K^{ij}-Kh^{ij}\right)$ and $\Pi^{i}=\sqrt{h}\nu_{a}F^{ai}$ with $K_{ij}$ the extrinsic curvature. Similarly, with the Codazzi-Maindardi equation
\begin{equation}
h_{ab}\nu_{c}G^{bc}=D_{b}K_{\ a}^{b}-D_{a}K,
\end{equation}
the $\varphi^{i}$-component of the linearized momentum constraint is
\begin{align}
 & \sqrt{h}\varphi^{i}\bigg\{2D_{j}\left(h^{-\frac{1}{2}}\delta\pi_{i}^{j}\right)+2D_{j}(h^{-\frac{1}{2}}\pi^{jk})\delta h_{ik}\nonumber \\
 & +2h^{-\frac{1}{2}}\pi^{jk}D_{j}\delta h_{ik}-\pi^{jk}D_{i}\delta h_{jk}\nonumber \\
 & -2h^{-\frac{1}{2}}\Pi^{j}D_{[i}\delta A_{j]}-2h^{-\frac{1}{2}}D_{[i}A_{j]}\delta\Pi^{j}\bigg\}\nonumber \\
 & -\sum_{\text{X}}e^{\text{X}}\mathcal{N}_{\text{X}}\varphi^{i}\delta A_{i}\nonumber \\
= & \sum_{\text{X}}\left(e^{\text{X}}\varphi^{a}A_{a}\delta\mathcal{N}_{\text{X}}-\delta\mathcal{J}_{\text{X}}\right),\label{momentum constraint para phi}
\end{align}
and the components of the linearized momentum constraints perpendicular to $\varphi^{a}$ are
\begin{align}
 & \sqrt{h}\left(h^{il}-\frac{\varphi^{i}\varphi^{l}}{\vert\varphi\vert^{2}}\right)\bigg\{2D_{j}\left(h^{-\frac{1}{2}}\delta\pi_{i}^{j}\right)\nonumber \\
 & +2h^{-\frac{1}{2}}\pi^{jk}D_{j}\delta h_{ik}-\pi^{jk}D_{i}\delta h_{jk}\nonumber \\
 & -2h^{-\frac{1}{2}}\Pi^{j}D_{[i}\delta A_{j]}-2h^{-\frac{1}{2}}D_{[i}A_{j]}\delta\Pi^{j}\nonumber \\
 & -\left(D_{k}A^{j}-D^{j}A_{k}\right)\Pi^{k}\delta h_{ij}\bigg\}\nonumber \\
= & -\sum_{\text{X}}\mathcal{N}_{\text{X}}\delta\left(\mu_{i}^{\text{X}}h^{il}\right)_{\perp},\label{momentum constraint perp phi}
\end{align}
where $\delta\pi_{i}^{j}=h_{ik}\delta\pi^{jk}$ and the symbol $\perp$ means that $\delta\left(\mu_{i}^{\text{X}}h^{il}\right)_{\perp}=\left(h_{\ j}^{l}-\frac{\varphi^{l}\varphi_{j}}{\vert\varphi\vert^{2}}\right)\delta\left(\mu_{i}^{\text{X}}h^{ij}\right)$. Finally, the linearized constraint from the electromagnetic potential is given by
\begin{equation}
D_{i}\left(h^{-\frac{1}{2}}\delta\Pi^{i}\right)=-\sum_{\text{X}}e^{\text{X}}\delta\mathcal{N}_{\text{X}}.\label{gauge constraint}
\end{equation}

The following lemma show that on a $t-\varphi$ reflection invariant Cauchy surface $\Sigma$ of the background spacetime, a solution to the linearized Einstein-Maxwell constraint equations Eqs. \eqref{Einstein constraint}-\eqref{Maxwell constraint} always can be found for any given axisymmetric specifications of $\delta\boldsymbol{N}^{\text{X}}$, $\delta\boldsymbol{S}$, and $\delta\boldsymbol{J}^{\text{X}}$.
\begin{lem}
\label{lem:solution to constraint}Let $\delta\mathcal{N}_{\text{X}}$, $\delta\mathcal{S}$, and $\delta\mathcal{J}_{\text{X}}$ be specified arbitrary smooth, axisymmetric functions with compact support, such that $\delta\mathcal{J}_{\text{X}}/\varphi^{a}\varphi_{a}$ also is smooth. Then we can choose the remaining initial data $\left(\delta h_{ij},\delta\pi^{ij},\delta A_{i},\delta\Pi^{i},\delta\mu_{\perp}^{\text{X}i}\right)$ so as to solve the linearized constraints Eqs. \eqref{Hamiltonian constraint}, \eqref{momentum constraint para phi}, \eqref{momentum constraint perp phi}, and \eqref{gauge constraint}.
\end{lem}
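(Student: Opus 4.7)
The strategy is to exploit the fact that the constraint system is severely underdetermined: the five distinct initial datum sets $(\delta h_{ij},\delta\pi^{ij},\delta A_i,\delta\Pi^i,\delta\mu_\perp^{\text{X}\,i})$ must satisfy only one scalar Hamiltonian constraint, three momentum constraints, and one electromagnetic Gauss constraint. The plan is to dispatch the four constraints sequentially, reserving one piece of initial data for each and fixing the residual gauge freedom at convenience.

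First, I would address the electromagnetic Gauss constraint Eq.~\eqref{gauge constraint}, which reads $D_i(h^{-1/2}\delta\Pi^i)=-\sum_{\text{X}}e^{\text{X}}\delta\mathcal{N}_{\text{X}}$. Because the source is smooth and compactly supported, setting $\delta\Pi^i=h^{1/2}D^i\phi$ reduces this to a scalar Poisson equation on the asymptotically flat 3-manifold $(\Sigma,h_{ij})$, which admits a smooth solution $\phi$ with $O(1/r)$ decay by standard elliptic theory. Next, the perpendicular momentum constraint Eq.~\eqref{momentum constraint perp phi} contains the still-free matter datum $\delta\mu_\perp^{\text{X}\,i}$ on its right-hand side; after any convenient choice of the left-hand side contributions from $\delta\pi^{ij}$, one simply \emph{defines} $\delta\mu_\perp^{\text{X}\,i}$ so as to balance the equation, which is legitimate since $\delta\mu_\perp^{\text{X}\,i}$ is a free matter perturbation.

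Third, for the axial component of the momentum constraint Eq.~\eqref{momentum constraint para phi}, I would exploit remaining freedom in $\delta\pi^{ij}$ by writing, e.g., $\varphi^i\delta\pi_i{}^j=\sqrt{h}\,D^j\chi$ for some scalar $\chi$, so that after subtracting the contributions already determined by the previous steps the equation reduces to a Poisson equation for $\chi$ with source proportional to $\delta\mathcal{J}_{\text{X}}$ (together with contributions from $\delta A_i$ which I would set to zero for simplicity); the hypothesis that $\delta\mathcal{J}_{\text{X}}/(\varphi^a\varphi_a)$ is smooth is precisely what guarantees regularity of this source on the rotation axis. Finally, the Hamiltonian constraint Eq.~\eqref{Hamiltonian constraint} is handled via a conformal ansatz $\delta h_{ij}=\psi h_{ij}$, which recasts it as a linear elliptic Lichnerowicz-type equation for $\psi$ with a compactly supported source consisting of the prescribed $\delta\mathcal{N}_{\text{X}}$, $\delta\mathcal{S}$, $\delta\mathcal{J}_{\text{X}}$ and the already-determined $\delta\Pi^i$, $\delta\pi^{ij}$; the Fredholm alternative applied to this operator on the asymptotically flat manifold furnishes a smooth $\psi$ with $O(1/r)$ fall-off.

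The principal technical obstacle is not existence per se---the count of unknowns versus equations makes solvability clear in principle---but verifying that the Poisson and Lichnerowicz problems produced above admit smooth solutions that remain regular across the rotation axis and possess the correct asymptotic decay. Smoothness on axis is controlled by the axisymmetry hypotheses on the prescribed functions together with the explicit regularity assumption on $\delta\mathcal{J}_{\text{X}}/\varphi^a\varphi_a$, while asymptotic fall-off follows from the compact support of all prescribed sources combined with standard elliptic theory on asymptotically flat manifolds.
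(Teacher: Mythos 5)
Your overall strategy---an explicit ansatz, a chain of elliptic solves, and closing the perpendicular momentum constraint algebraically by defining $\delta\mu_{\perp}^{\text{X}i}$---is the same as the paper's, and your treatment of the Gauss constraint \eqref{gauge constraint} and of Eq.~\eqref{momentum constraint perp phi} matches it exactly. However, there are two genuine gaps. First, your sequential scheme is circular as stated: the axial momentum constraint \eqref{momentum constraint para phi} contains the terms $2D_j(h^{-1/2}\pi^{jk})\delta h_{ik}+2h^{-1/2}\pi^{jk}D_j\delta h_{ik}-\pi^{jk}D_i\delta h_{jk}$, which depend on $\delta h_{ij}=\psi h_{ij}$, while the Hamiltonian constraint for $\psi$ depends on $\delta\pi^{ij}$ through $\pi_{ij}\delta\pi^{ij}$; on a rotating background $\pi^{ij}\neq0$, so you cannot solve for $\chi$ before $\psi$ and then for $\psi$ afterwards. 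The paper resolves this with the specific combined ansatz $\delta\pi^{ij}=\sqrt{h}D^{(i}H^{j)}-\psi\pi^{ij}$ of Eq.~\eqref{solution}: the extra $-\psi\pi^{ij}$ piece makes every $\psi$-dependent term cancel out of the momentum constraints (using $K=0$ on the $t$--$\varphi$ reflection invariant slice), so the system becomes genuinely triangular, $\Phi\to H^i\to\psi\to\delta\mu_{\perp}^{\text{X}i}$. Your ansatz $\varphi^i\delta\pi_i{}^j=\sqrt{h}D^j\chi$ lacks this device, so either you must add the $-\psi\pi^{ij}$ term and verify the cancellation, or you must solve a coupled elliptic system; you address neither.

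Second, your appeal to ``the Fredholm alternative'' for the Hamiltonian constraint hides the one place where the physics of the matter model enters. With the conformal ansatz the constraint becomes $-D^iD_i\psi+\mathscr{M}\psi=\text{source}$ (Eq.~\eqref{equation of psi}); existence of a decaying solution for an \emph{arbitrary} compactly supported source requires the operator to have trivial kernel, and the paper establishes this by showing $\mathscr{M}\geq0$, which in turn uses the positive definiteness of the inertia matrix \eqref{inertia matrix} to control the term quadratic in $\varphi_a n_{\text{X}}^a$. Without verifying the sign of this zeroth-order coefficient, the Fredholm alternative could just as well return an obstruction. The remaining ingredients of your proposal---the role of the smoothness of $\delta\mathcal{J}_{\text{X}}/\varphi^a\varphi_a$ in keeping the source $\mathscr{J}\varphi^i/\vert\varphi\vert^2$ regular on the axis, and the purely algebraic determination of $\delta\mu_{\perp}^{\text{X}i}$---are correct and agree with the paper.
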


\begin{proof}
We choose $\delta h_{ij}$, $\delta\pi^{ij}$, $\delta A_{i}$, and $\delta\Pi^{i}$ be of the form
\begin{align}
\delta h_{ij} & =\psi h_{ij},\quad\delta\pi^{ij}=\sqrt{h}D^{(i}H^{j)}-\psi\pi^{ij},\nonumber \\
\delta A_{i} & =0,\quad\delta\Pi^{i}=\sqrt{h}D^{i}\Phi,\label{solution}
\end{align}
then the linearized constraint Eq. \eqref{gauge constraint} reduces to
\begin{equation}
D_{i}D^{i}\Phi=-\sum_{\text{X}}e^{\text{X}}\delta\mathcal{N}_{\text{X}},
\end{equation}
which has a unique axisymmetric solution that goes to zero at infinity given any prescribed perturbations $\delta\mathcal{N}_{\text{X}}$ \cite{Shi:2021dvd}. Furthermore, Eq. \eqref{momentum constraint para phi} can be cast into 
\begin{align}
 & \varphi_{i}\left(D_{j}D^{(i}H^{j)}-D^{[i}A^{j]}D_{j}\Phi\right)\nonumber \\
= & \frac{1}{2\sqrt{h}}\sum_{\text{X}}\left(e^{\text{X}}\varphi^{a}A_{a}\delta\mathcal{N}_{\text{X}}-\delta\mathcal{J}_{\text{X}}\right)\equiv\mathscr{J},
\end{align}
so we choose $H^{i}$ to satisfy 
\begin{equation}
D_{j}D^{(j}H^{i)}=D^{[i}A^{j]}D_{j}\Phi+\frac{\mathscr{J}\varphi^{i}}{\vert\varphi\vert^{2}}.\label{equation of H}
\end{equation}
Since the right side is a smooth vector field of compact support, there also exists a unique solution to Eq. \eqref{equation of H} that goes to zero at infinity \cite{Chrusciel:2003sr}.

By substituting Eq. \eqref{solution} into the linearized Hamiltonian constraint Eq. \eqref{Hamiltonian constraint}, we have
\begin{align}
 & -D^{i}D_{i}\psi+\mathscr{M}\psi\nonumber \\
= & h^{-\frac{1}{2}}\pi_{ij}D^{i}H^{j}+\frac{1}{2}h^{-\frac{1}{2}}\Pi_{i}D^{i}\Phi\nonumber \\
 & +\sum_{\text{X}}\left(-\frac{1}{u_{\text{X}}^{c}\nu_{c}}\right)\left(-u_{\text{X}}^{a}\mu_{a}^{\text{X}}-e^{\text{X}}\frac{u_{\text{X}}^{b}\varphi_{b}}{\varphi^{d}\varphi_{d}}\varphi^{a}A_{a}\right)\delta\mathcal{N}_{\text{X}}\nonumber \\
 & -\frac{1}{u_{\text{e}}^{c}\nu_{c}}\left(-Ts\delta\mathcal{N}_{\text{e}}+T\delta\mathcal{S}\right)-\sum_{\text{X}}\frac{1}{u_{\text{X}}^{c}\nu_{c}}\frac{u_{\text{X}}^{a}\varphi_{a}}{\varphi^{b}\varphi_{b}}\delta\mathcal{J}_{\text{X}},\label{equation of psi}
\end{align}
where
\begin{align}
\mathscr{M}= & h^{-1}\pi_{ij}\pi^{ij}+\frac{1}{4}h^{-1}\Pi_{i}\Pi^{i}+\frac{1}{2}D_{[i}A_{j]}D^{[i}A^{j]}\nonumber \\
 & +\frac{1}{2\vert\varphi\vert^{2}}\sum_{\text{X,Y}}\left(\varphi_{a}n_{\text{X}}^{a}\right)\mathbb{I}_{\text{XY}}\left(\varphi_{b}n_{\text{Y}}^{b}\right)+\frac{1}{4}\left(-\Lambda_{\text{M}}+3\Psi_{\text{M}}\right),
\end{align}
The positive definiteness of the inertia matrix Eq. \eqref{inertia matrix} implies that $\mathscr{M}$ is non-negative, and since the right side of Eq. \eqref{equation of psi} vanishes suitably rapidly at infinity, then there exists a unique solution, $\psi$, of this equation that vanishes at infinity \cite{Chrusciel:2003sr}. 

Finally, Eq. \eqref{momentum constraint perp phi} boils down into
\begin{align}
 & \sqrt{h}\psi\left[-2D_{j}(h^{-1/2}\pi_{i}^{j})+2D_{[i}A_{j]}\Pi^{j}\right]\left(h^{ik}-\frac{\varphi^{i}\varphi^{k}}{\vert\varphi\vert^{2}}\right)\nonumber \\
= & -\sum_{\text{X}}\mathcal{N}_{\text{X}}\delta(\mu_{i}^{\text{X}}h^{ik})_{\perp},\label{equation of mu}
\end{align}
where we have used Eq. \eqref{equation of H}. Clearly, Eq. \eqref{equation of mu} is an algebraic equation for $\delta(\mu_{i}^{\text{X}}h^{ik})_{\perp}$ and can be readily fulfilled.
\end{proof}

\subsection{Thermodynamic equilibrium}

\label{subsec4.3}

Due to the conservation law Eqs. \eqref{conservation of n=000026s} and \eqref{conservation of j}, the quantities in Eq. \eqref{NSJ} are all conserved. So based on these quantities, we give our first definition of thermodynamic equilibrium: a superconducting-superfluid star in dynamic equilibrium is said to be in \emph{weak thermodynamic equilibrium} if and only if $\delta S=0$ with respect to all perturbations that satisfy the linearized Einstein-Maxwell constraint equations and for which $\delta M=\delta N^{\text{X}}=\delta J^{\text{X}}=0$. The necessary and sufficient condition for a superconducting-superfluid star in dynamic equilibrium to be in weak thermodynamic equilibrium is given in the following theorem.
\begin{thm}
\label{thm:weakly thermodynamic equilibrium}A dynamic equilibrium configuration is in weak thermodynamic equilibrium if and only if $\tilde{\mu}_{\text{X}}$, $\tilde{T}$, and $\Omega_{\text{X}}$ are uniform throughout the star.
\end{thm}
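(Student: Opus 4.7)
The strategy is to combine the first law Eq.~\eqref{first law} with the freedom provided by Lemma~\ref{lem:solution to constraint}, which allows us to realize arbitrary smooth axisymmetric test data $(\delta\mathcal{N}_{\text X},\delta\mathcal{S},\delta\mathcal{J}_{\text X})$ of compact support inside the star as admissible perturbations satisfying the linearized Einstein--Maxwell constraints. The ``if'' direction then follows by pulling the uniform potentials out of the first law, and the ``only if'' direction is a dual, linear-algebra argument that converts the global identity $\delta S = 0$ into the pointwise statement that $\tilde\mu_{\text X}$, $\tilde T$, $\Omega_{\text X}$ must be constants throughout the star.

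The ``if'' direction is immediate. If $\tilde\mu_{\text X}, \tilde T$ and $\Omega_{\text X}$ are constants on $\Sigma$, pulling them outside the integrals in Eq.~\eqref{first law} yields
$$\delta M = \tilde T\, \delta S + \sum_{\text X} \tilde\mu_{\text X}\, \delta N^{\text X} + \sum_{\text X} \Omega_{\text X}\, \delta J^{\text X}.$$
Imposing $\delta M = \delta N^{\text X} = \delta J^{\text X} = 0$ forces $\tilde T\,\delta S = 0$, and since $\tilde T > 0$ we obtain $\delta S = 0$.

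For the converse I would argue dually. By Lemma~\ref{lem:solution to constraint} the maps $L_S$, $L_M$, $L_{N^{\text X}}$, $L_{J^{\text X}}$ defined on the test space of triples $(\delta\mathcal{N}_{\text X},\delta\mathcal{S},\delta\mathcal{J}_{\text X})$ via Eqs.~\eqref{first law} and~\eqref{NSJ} correspond to honest linear functionals on admissible perturbations. Weak thermodynamic equilibrium is exactly the statement that $L_S$ vanishes on the common kernel of $L_M$, $L_{N^{\text X}}$, $L_{J^{\text X}}$; since the latter span a finite-dimensional subspace of the dual, a standard linear-algebra argument yields constants $\alpha$, $\beta_{\text X}$, $\gamma_{\text X}$ with
$$L_S = \alpha\, L_M + \sum_{\text X} \beta_{\text X}\, L_{N^{\text X}} + \sum_{\text X} \gamma_{\text X}\, L_{J^{\text X}}$$
identically on the test space. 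Substituting Eq.~\eqref{first law} and applying the fundamental lemma of the calculus of variations to independent bumps in $\delta\mathcal{S}$, $\delta\mathcal{N}_{\text X}$, and $\delta\mathcal{J}_{\text X}$ produces the pointwise identities $\alpha\tilde T \equiv 1$, $\alpha\tilde\mu_{\text X} + \beta_{\text X} \equiv 0$ and $\alpha\Omega_{\text X} + \gamma_{\text X} \equiv 0$. The first of these forces $\alpha \neq 0$ and $\tilde T \equiv 1/\alpha$, after which $\tilde\mu_{\text X}$ and $\Omega_{\text X}$ are also constants.

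The main obstacle lies in arranging the test data cleanly enough for the duality step. The restriction in Lemma~\ref{lem:solution to constraint} that $\delta\mathcal{J}_{\text X}/\varphi^a\varphi_a$ be smooth requires that the angular-momentum test functions vanish to second order on the axis, but this still leaves enough bumps off the axis to fix $\alpha\Omega_{\text X} + \gamma_{\text X} \equiv 0$ there, from which uniformity of $\Omega_{\text X}$ on the whole star follows by continuity. For $\delta\mathcal{N}_{\text X}$ and $\delta\mathcal{S}$ no such care is required, so the corresponding uniformity of $\tilde\mu_{\text X}$ and $\tilde T$ is immediate on the whole interior.
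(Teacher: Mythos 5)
Your proof is correct, and it rests on the same two ingredients the paper uses: the first law Eq.~\eqref{first law} and the freedom to prescribe $(\delta\mathcal{N}_{\text{X}},\delta\mathcal{S},\delta\mathcal{J}_{\text{X}})$ guaranteed by Lemma~\ref{lem:solution to constraint}. The ``if'' direction is identical to the paper's (and you are right to make explicit that it silently needs $\tilde{T}\neq 0$, which the paper glosses over). For the ``only if'' direction, however, you take a genuinely different route. The paper argues by contradiction, one potential at a time: assuming, say, $\tilde{T}$ non-uniform, it constructs an explicit second perturbation with $\delta_{2}\boldsymbol{S}\propto(\tilde{T}-\overline{T})\nu^{a}\boldsymbol{\epsilon}_{abcd}$, for which $\delta_{2}S$, $\delta_{2}N^{\text{X}}$, $\delta_{2}J^{\text{X}}$ all vanish while $\delta_{2}M=\varepsilon\int_{\Gamma}(\tilde{T}-\overline{T})^{2}\nu^{a}\boldsymbol{\epsilon}_{abcd}\neq 0$, and uses it to cancel the nonzero $\delta_{1}M$ of an entropy-changing perturbation, contradicting weak equilibrium. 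Your argument instead invokes the standard fact that a linear functional vanishing on the common kernel of finitely many functionals lies in their span, and then extracts the pointwise relations $\alpha\tilde{T}\equiv 1$, $\alpha\tilde{\mu}_{\text{X}}+\beta_{\text{X}}\equiv 0$, $\alpha\Omega_{\text{X}}+\gamma_{\text{X}}\equiv 0$ by localizing with bumps. Both are Lagrange-multiplier arguments in disguise; yours treats all seven potentials at once, avoids the ``without loss of generality'' case split, and yields the extra conclusion that $\tilde{T}=1/\alpha$ is nowhere zero, while the paper's is more constructive in exhibiting the witnessing perturbation explicitly. Your handling of the axis regularity condition on $\delta\mathcal{J}_{\text{X}}/\varphi^{a}\varphi_{a}$ (bumps supported off the axis plus continuity) is also sound.
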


\begin{proof}
If $\tilde{\mu}_{\text{X}}$, $\tilde{T}$, and $\Omega_{\text{X}}$ are uniform throughout the star, then the first law Eq. \eqref{first law} reduces to
\begin{equation}
\delta M=\sum_{\text{X}}\tilde{\mu}_{\text{X}}\delta N^{\text{X}}+\tilde{T}\delta S+\sum_{\text{X}}\Omega_{\text{X}}\delta J^{\text{X}}.
\end{equation}
It is evident that $\delta S=0$ for any perturbation with $\delta M=\delta N^{\text{X}}=\delta J^{\text{X}}=0$, i.e., the star is in weak thermodynamic equilibrium.

According to Lemma \ref{lem:solution to constraint}, consider a star in weak thermodynamic equilibrium and an arbitrary perturbation $\delta_{1}$ satisfying $\delta_{1}N^{\text{X}}=\delta_{1}J^{\text{X}}=0$ and $\delta_{1}S\neq0$, then there must be $\delta_{1}M\neq0$. Suppose that at least one of $\tilde{\mu}_{\text{X}}$, $\tilde{T}$, and $\Omega_{\text{X}}$ were not uniform throughout the star, without loss of generality, say $\tilde{T}$ is not uniform. As our star is a compact object, let $\Gamma$ be the compact support of the star, which is contained in some open subset $\mathcal{U}$ of $\Sigma$ whose closure is also compact. According to the existence of $C_\infty$ partition of unity, there exists a smooth bump function $\kappa:\Sigma\rightarrow\left[0,1\right]$ such that $\kappa\equiv1$ on $\Gamma$ and $\text{supp}\kappa\subset\mathcal{U}$. Let 
\begin{equation}
\overline{T}=\frac{\int_{\mathcal{U}}\kappa\tilde{T}\nu^{a}\boldsymbol{\epsilon}_{abcd}}{\int_{\mathcal{U}}\kappa\nu^{a}\boldsymbol{\epsilon}_{abcd}}.
\end{equation}
Now choose a second perturbation $\delta_{2}$ with perturbative parameter $\varepsilon$ satisfying the linearized constraint equations given by
\begin{align}
\delta_{2}\boldsymbol{N}^{\text{X}} & =\delta_{2}\boldsymbol{J}^{\text{X}}=0,\nonumber \\
\delta_{2}\boldsymbol{S}\vert_{\mathcal{U}} & =\varepsilon\kappa\left(\tilde{T}-\overline{T}\right)\nu^{a}\boldsymbol{\epsilon}_{abcd},\quad\delta_{2}\boldsymbol{S}\vert_{\Sigma\backslash\mathcal{U}}=0,
\end{align}
since clearly $\delta_{2}\boldsymbol{N}^{\text{X}}$, $\delta_{2}\boldsymbol{J}^{\text{X}}$, and $\delta_2\boldsymbol{S}$ are smooth with compact supports, we can always do this choice by using Lemma \ref{lem:solution to constraint} again. With such a perturbation, one has
\begin{equation}
\delta_{2}N^{\text{X}}=\delta_{2}J^{\text{X}}=\delta_{2}S=0,
\end{equation}
and
\begin{align}
\delta_{2}M &=\int_{\Sigma}\left(\sum_{\text{X}}\tilde{\mu}_{\text{X}}\delta_2\boldsymbol{N}^{\text{X}}+\tilde{T}\delta_2\boldsymbol{S}+\sum_{\text{X}}\Omega_{\text{X}}\delta_2\boldsymbol{J}^{\text{X}}\right)\nonumber \\
&=\int_{\Sigma}\left(\tilde{T}-\overline{T}\right)\delta_2\boldsymbol{S}+\overline{T}\delta_2S\\
 & =\varepsilon\int_{\mathcal{U}}\kappa\left(\tilde{T}-\overline{T}\right)^{2}\nu^{a}\boldsymbol{\epsilon}_{abcd}.
\end{align}
Adjust suitable $\varepsilon$ such that $\delta_{2}M=-\delta_{1}M$, then we find a perturbation $\delta=\delta_{1}+\delta_{2}$ satisfying $\delta M=\delta N^{\text{X}}=\delta J^{\text{X}}=0$ but $\delta S\neq0$, which leads to a contradiction. Hence $\tilde{T}$ must be uniform throughout the star, and similar arguments show that $\tilde{\mu}_{\text{X}}$ and $\Omega_{\text{X}}$ need also be uniform throughout the star.
\end{proof}
The conservation of particle number and angular momentum of each constituent in fact also indicate the conservation of the total particle number $N=\sum_{\text{X}}N^{\text{X}}$ and the total angular momentum $J=\sum_{\text{X}}J^{\text{X}}$. So based on the total particle number $N$, the total entropy $S$, and the total angular momentum $J$, we give our second definition of thermodynamic equilibrium: a superconducting-superfluid star in dynamic equilibrium is said to be in \emph{strong thermodynamic equilibrium} if and only if $\delta S=0$ with respect to all perturbations that satisfy the linearized Einstein-Maxwell constraint equations and for which $\delta M=\delta N=\delta J=0$. The necessary and sufficient condition for a superconducting-superfluid star in dynamic equilibrium to be in strong thermodynamic equilibrium is given in the following theorem.
\begin{thm}
\label{thm:strongly thermodynamic equilibrium}A dynamic equilibrium configuration is in strong thermodynamic equilibrium if and only if there is no differential rotation (i.e., there is some $\Omega$ such that $\Omega_{\text{X}}=\Omega$ for $\text{X}=\text{n},\text{p},\text{e}$), the configuration is in chemical equilibrium (i.e., there is some $\tilde{\mu}$ such that $\tilde{\mu}_{\text{X}}=\tilde{\mu}$ for $\text{X}=\text{n},\text{p},\text{e}$), and $\tilde{\mu}$, $\tilde{T}$, $\Omega$ are all uniform throughout the star.
\end{thm}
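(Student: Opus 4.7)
The plan is to follow the pattern of the proof of Theorem \ref{thm:weakly thermodynamic equilibrium}, exploiting the weaker constraints $\delta N = \delta J = 0$ (in place of $\delta N^{\text{X}} = \delta J^{\text{X}} = 0$) to construct additional ``transmutation'' perturbations that reshuffle particle number and angular momentum among the constituents. For the ``if'' direction the three conditions collapse Eq. \eqref{first law} to $\delta M = \tilde{\mu}\,\delta N + \tilde{T}\,\delta S + \Omega\,\delta J$, so $\delta M = \delta N = \delta J = 0$ immediately forces $\delta S = 0$.

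For the ``only if'' direction, I would first observe that strong thermodynamic equilibrium implies weak thermodynamic equilibrium, since any perturbation with $\delta M = \delta N^{\text{X}} = \delta J^{\text{X}} = 0$ automatically satisfies $\delta M = \delta N = \delta J = 0$. Theorem \ref{thm:weakly thermodynamic equilibrium} then delivers the uniformity of $\tilde{\mu}_{\text{X}}$, $\tilde{T}$, and $\Omega_{\text{X}}$ throughout the star, so the only remaining content of the theorem is the across-species equalities $\tilde{\mu}_{\text{n}} = \tilde{\mu}_{\text{p}} = \tilde{\mu}_{\text{e}}$ and $\Omega_{\text{n}} = \Omega_{\text{p}} = \Omega_{\text{e}}$.

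I would prove these by contradiction. Suppose $\tilde{\mu}_{\text{n}} \neq \tilde{\mu}_{\text{p}}$. Pick a smooth, axisymmetric, compactly supported function $f$ on $\Sigma$ with $\int_{\Sigma} f\,\boldsymbol{e} = 1$ and, via Lemma \ref{lem:solution to constraint}, construct a perturbation with $\delta\mathcal{N}_{\text{n}} = f$, $\delta\mathcal{N}_{\text{p}} = -f$, $\delta\mathcal{N}_{\text{e}} = 0$, $\delta\mathcal{J}_{\text{X}} = 0$, and $\delta\mathcal{S} = c f$ for a constant $c$ to be fixed. By construction $\delta N = \delta J = 0$, and since the chemical potentials and temperature are now uniform, Eq. \eqref{first law} gives $\delta M = (\tilde{\mu}_{\text{n}} - \tilde{\mu}_{\text{p}}) + \tilde{T}\,c$. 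Setting $c = -(\tilde{\mu}_{\text{n}} - \tilde{\mu}_{\text{p}})/\tilde{T}$ makes $\delta M = 0$ while $\delta S = c \neq 0$, contradicting strong equilibrium. Re-running this for the (n,e) and (p,e) pairs delivers full chemical equilibrium. The equalities $\Omega_{\text{n}} = \Omega_{\text{p}} = \Omega_{\text{e}}$ follow by the same strategy with the transmutation shifted to angular momentum: take $\delta\mathcal{N}_{\text{X}} = 0$, $\delta\mathcal{J}_{\text{n}} = g$, $\delta\mathcal{J}_{\text{p}} = -g$, $\delta\mathcal{J}_{\text{e}} = 0$, together with an entropy counterterm, deriving the analogous contradiction from $\Omega_{\text{n}} \neq \Omega_{\text{p}}$.

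The only technicality I would need to verify is the smoothness hypothesis of Lemma \ref{lem:solution to constraint} on $\delta\mathcal{J}_{\text{X}}/\varphi^a\varphi_a$, which is easily arranged by choosing the support of $g$ to avoid the symmetry axis. Beyond that I do not anticipate a serious obstacle: the argument is a direct upgrade of the averaged-temperature counterterm strategy of Theorem \ref{thm:weakly thermodynamic equilibrium}, in which the entropy perturbation is now used to absorb the $\delta M$-cost of species conversion or angular-momentum reshuffling instead of spatial temperature fluctuations.
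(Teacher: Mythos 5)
Your proof is correct and rests on the same two ingredients as the paper's: Lemma \ref{lem:solution to constraint} to prescribe $(\delta\mathcal{N}_{\text{X}},\delta\mathcal{S},\delta\mathcal{J}_{\text{X}})$ freely, and the first law to evaluate $\delta M$ of the resulting perturbation. The organization differs in two genuine but minor ways. First, you extract uniformity up front by noting that strong equilibrium implies weak equilibrium (the constraint set $\delta M=\delta N^{\text{X}}=\delta J^{\text{X}}=0$ is contained in $\delta M=\delta N=\delta J=0$) and invoking Theorem \ref{thm:weakly thermodynamic equilibrium}; the paper proves the across-species equalities first and defers uniformity to a repeat of Theorem \ref{thm:weakly thermodynamic equilibrium}'s argument. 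Your ordering is cleaner, since $\tilde{\mu}_{\text{X}}$, $\tilde{T}$, $\Omega_{\text{X}}$ are then constants that pull out of the integrals. Second, your counterterm is an entropy perturbation $\delta\mathcal{S}=cf$ attached to the transmutation itself, with $c=-(\tilde{\mu}_{\text{n}}-\tilde{\mu}_{\text{p}})/\tilde{T}$, whereas the paper scales the transmutation by $\varepsilon$ against an arbitrary independent perturbation $\delta_{1}$ with $\delta_{1}S\neq0$ (hence $\delta_{1}M\neq0$). Your version divides by $\tilde{T}$, but this is harmless under the hypotheses: if $\tilde{T}$ vanished, a pure localized entropy perturbation would already violate strong equilibrium, and the ``if'' direction of both theorems implicitly requires $\tilde{T}\neq0$ in any case. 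Your remark about keeping the support of the angular-momentum transmutation off the symmetry axis actually attends to the smoothness hypothesis $\delta\mathcal{J}_{\text{X}}/\varphi^{a}\varphi_{a}$ of Lemma \ref{lem:solution to constraint} more carefully than the paper's own construction does.
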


\begin{proof}
The proof of ``if'' part is straightforward and almost same as the proof of Theorem \ref{thm:weakly thermodynamic equilibrium}.

If the star is in strong thermodynamic equilibrium, consider an arbitrary perturbation $\delta_{1}$ such that $\delta_{1}N=\delta_{1}J=0$ and $\delta_{1}S\neq0$, then there must be $\delta_{1}M\neq0$. Suppose there exists differential rotation, without loss of generality, say $\Omega_{\text{n}}\neq\Omega_{\text{p}}$, choose a second perturbation $\delta_{2}$ with perturbative parameter $\varepsilon$ such that
\begin{align}
\delta_{2}\boldsymbol{N}^{\text{X}} & =\delta_{2}\boldsymbol{S}=\delta_{2}\boldsymbol{J}^{\text{e}}=0,\nonumber \\
\delta_{2}\boldsymbol{J}^{\text{n}}\vert_{\mathcal{U}} & =-\delta_{2}\boldsymbol{J}^{\text{p}}\vert_{\mathcal{U}}=\varepsilon\kappa\left(\Omega_{\text{n}}-\Omega_{\text{p}}\right)\nu^{a}\boldsymbol{\epsilon}_{abcd},\nonumber \\
\delta_{2}\boldsymbol{J}^{\text{n}}\vert_{\Sigma\backslash\mathcal{U}} & =\delta_{2}\boldsymbol{J}^{\text{p}}\vert_{\Sigma\backslash\mathcal{U}}=0,
\end{align}
then one has
\begin{equation}
\delta_2 N=\delta_2 S=\delta_2 J=0,
\end{equation}
and
\begin{align}
\delta_{2}M & =\int_{\Sigma}\left(\sum_{\text{X}}\tilde{\mu}_{\text{X}}\delta_2\boldsymbol{N}^{\text{X}}+\tilde{T}\delta_2\boldsymbol{S}+\sum_{\text{X}}\Omega_{\text{X}}\delta_2\boldsymbol{J}^{\text{X}}\right)\nonumber \\
 & =\varepsilon\int_{\mathcal{U}}\kappa\left(\Omega_{\text{n}}-\Omega_{\text{p}}\right)^{2}\nu^{a}\boldsymbol{\epsilon}_{abcd}.
\end{align}
Adjust suitable $\varepsilon$ such that $\delta_{2}M=-\delta_{1}M$, then we find a perturbation $\delta=\delta_{1}+\delta_{2}$ satisfying $\delta M=\delta N=\delta J=0$, but $\delta S\neq0$. Hence there need be no differential rotation, and similarly, the configuration need be in chemical equilibrium. The proof of that $\tilde{\mu}$, $\tilde{T}$, $\Omega$ should be uniform throughout the star is also almost same as the proof of Theorem \ref{thm:weakly thermodynamic equilibrium}.
\end{proof}
In fact, the chemical equilibrium can not be established unless there is no differential rotation, which can be seen from a simple argument: If the three species rotate at different rates, we must work in one of the rest frames of these species, and the result will depend on which frame we choose. For instance, in the frame rotating with the neutrons, the proton and electron chemical potentials will have an additional kinematic piece. This is also true for the any two chemical potentials in the frame that co-rotates with the third constituents. In other words, it would seem possible to have chemical equilibrium only if the three fluids co-rotate.

As stated in the introduction, we will not consider the transfusive effect in the star, but we can give a simple argument here about the thermodynamic equilibrium in the transfusive case. If the transfusive effect exists, rather than the separate conservation laws Eq.\eqref{conservation of n=000026s} and the equations of motion Eqs. \eqref{eomn}-\eqref{eome} of each constituents, there will only be the conservation of total particle current and entropy current
\begin{equation}
\nabla_{a}n^{a}=\nabla_{a}\left(\sum_{\text{X}}n_{\text{X}}^{a}\right)=0,\quad\nabla_{a}\left(sn_{\text{e}}^{a}\right)=0.
\end{equation}
And the equation of motion is given by \cite{Carter:1987qr}
\begin{equation}
\nabla_{a}T_{\ b}^{a}=\nabla_{a}\left(T_{\text{M}}\right)_{\ b}^{a}+j^{a}F_{ab}=0.\label{conservation of em tensor}
\end{equation}
When the equations of motion of each constituent are not satisfied, Eq. \eqref{conservation of j} implies that the angular momentum of each constituent may be no longer conserved. However, the total angular momentum $J$ is still conserved. Indeed, note that the total electric current vector $j^{a}$ is conserved
\begin{align}
\nabla_{a}j^{a} & =\nabla_{a}\nabla_{b}F^{ab}=\frac{1}{2}\left[\nabla_{a},\nabla_{b}\right]F^{ab}\nonumber \\
 & =\frac{1}{2}\left(R_{\ cab}^{a}F^{cb}+R_{\ cab}^{b}F^{ac}\right)\nonumber \\
 & =\frac{1}{2}\left(R_{cb}F^{cb}-R_{ca}F^{ac}\right)\nonumber \\
 & =0,
\end{align}
accordingly, the total angular momentum current $J^{a}=\left(T_{\text{M}}\right)_{\ b}^{a}\varphi^{b}+j^{a}A_{b}\varphi^{b}$ in Eq. \eqref{total angular momentum} satisfies the conservation law
\begin{align}
 & \nabla_{a}J^{a}\nonumber \\
= & \nabla_{a}\left[\left(T_{\text{M}}\right)_{\ b}^{a}\varphi^{b}+j^{a}A_{b}\varphi^{b}\right]\nonumber \\
= & \varphi^{b}\nabla_{a}\left(T_{\text{M}}\right)_{\ b}^{a}+T_{\text{M}}^{ab}\nabla_{a}\varphi_{b}+j^{a}\left(\varphi^{b}\nabla_{a}A_{b}+A_{b}\nabla_{a}\varphi^{b}\right)\nonumber \\
= & -j^{a}\varphi^{b}F_{ab}+j^{a}\left(2\varphi^{b}\nabla_{[a}A_{b]}+\mathscr{L}_{\varphi}A_{a}\right)\nonumber \\
= & 0,
\end{align}
where we have used $\varphi^{a}$ is Killing vector, the equation of motion Eq. \eqref{conservation of em tensor}, and $\mathscr{L}_{\varphi}A_{a}=0$.

Now, in the transfusive case, without the conserved quantities given in Eq. \eqref{NSJ}, our definition of weak thermodynamic equilibrium obviously can not work, but the definition of strong thermodynamic equilibrium still works due to the conservation of the total particle number $N$, the total entropy $S$, and the total angular momentum $J$. As shown in \cite{Langlois:1997bz}, if there is a differential rotation, the different constituents will drag each other until they have same angular velocity. This statement actually coincides our condition that there will be no differential rotation as stated in Theorem \ref{thm:strongly thermodynamic equilibrium}. On the other hand, although the three fluids with differential rotation will finally be locked together, this progress likely takes many hundreds of dynamical timescales, so when the timescale under consideration is not too long, the transfusive effect is not important, and the non-transfusive model will be a good approximation to describe the neutron star.

\section{Lagrangian formulation of superconducting-superfluid mixtures: symplectic structure, phase space, and trivial displacements}

\label{sec5}

We want to construct the phase space of our Einstein-superconducting-superfluid system. To achieve this, we will give the pre-symplectic form $W$ by using Wald formalism introduced in Sec. \ref{sec3}, and we will also determine the characteristic equation of sound velocity for our neutron star. With these preparations, we then seek the degeneracy of $W$ and construct the phase space by factoring these degeneracy. Moreover, since in the next section, it will be important to determine the symplectic complement corresponding to field variations of the trivial perturbation within the subspace consisting of weak solutions to the linearized constraints, so we will talk about the trivial perturbations generated by the trivial displacements in the last subsection.

\subsection{Lagrangian and symplectic form}

The Lagrangian for the Einstein-superconducting-superfluid system is taken to be
\begin{equation}
\boldsymbol{\mathcal{L}}=\boldsymbol{\epsilon}\left(R-\frac{1}{4}F_{ab}F^{ab}+j^{a}A_{a}+\Lambda_{\text{M}}\right),\label{Lagrangian for star}
\end{equation}
then the variation of Lagrangian yields
\begin{align}
\delta\boldsymbol{\mathcal{L}}= & \boldsymbol{\epsilon}E_{G}^{ab}\delta g_{ab}+\boldsymbol{\epsilon}E_{F}^{a}\delta A_{a}+\boldsymbol{\epsilon}\sum_{\text{X}}f_{a}^{\text{X}}\xi_{\text{X}}^{a}\nonumber \\
 & +d(\iota_{x}\boldsymbol{\epsilon}+\iota_{y}\boldsymbol{\epsilon}+\sum_{\text{X}}\iota_{z_{\text{X}}}\boldsymbol{\epsilon}),\label{Lagrangian variation}
\end{align}
where
\begin{align}
x^{a} & =g^{ab}g^{cd}\left(\nabla_{d}\delta g_{bc}-\nabla_{b}\delta g_{cd}\right),\\
y^{a} & =-F^{ab}\delta A_{b},\\
z_{\text{X}}^{a} & =2\pi_{b}^{\text{X}}n_{\text{X}}^{[a}\xi_{\text{X}}^{b]},
\end{align}
and the equations of motion that beside Eqs. \eqref{eomn}-\eqref{eome} are given by
\begin{align}
E_{G}^{ab} & =-G^{ab}+\frac{1}{2}T^{ab}=0,\\
E_{F}^{a} & =j^{a}-\nabla_{b}F^{ab}=0.
\end{align}
From Eq. \eqref{Lagrangian variation}, we may also read off the symplectic potential current $\boldsymbol{\theta}$
\begin{align}
\boldsymbol{\theta}\left(\phi;\delta\phi\right)= & \boldsymbol{\theta}^{\left(G\right)}\left(g;\delta g\right)+\boldsymbol{\theta}^{\left(F\right)}\left(\phi;\delta\phi\right)+\boldsymbol{\theta}^{\left(\text{M}\right)}\left(\phi;\delta\phi\right)\nonumber \\
= & g^{de}g^{fh}(\nabla_{h}\delta g_{ef}-\nabla_{e}\delta g_{fh})\boldsymbol{\epsilon}_{dabc}\nonumber \\
 & -F^{de}\delta A_{e}\boldsymbol{\epsilon}_{dabc}+\sum_{\text{X}}\xi_{\text{X}}^{d}\left(\boldsymbol{P}_{d}^{\text{X}}\right)_{abc},
\end{align}
with
\begin{equation}
\left(\boldsymbol{P}_{d}^{\text{X}}\right)_{abc}=2\pi_{e}^{\text{X}}n_{\text{X}}^{[f}\delta_{d}^{e]}\boldsymbol{\epsilon}_{fabc},
\end{equation}
And the constraint 3-form $\boldsymbol{C}_{\mathcal{X}}$ is still
\begin{equation}
\left(\boldsymbol{C}_{\mathcal{X}}\right)_{abc}=-\mathcal{X}^{d}\boldsymbol{\epsilon}_{eabc}\left[2\left(E_{G}\right)_{\ d}^{e}+E_{F}^{e}A_{d}\right].
\end{equation}

In order to calculate the pre-symplectic current $\boldsymbol{\omega}$ using Eq. \eqref{pre-symplectic current}, for perturbations $\delta_{1}\phi=\left(\delta_{1}g_{ab},\delta_{1}A_{a},\xi_{\text{X}}^{a}\right)$ and $\delta_{2}\phi=\left(\delta_{2}g_{ab},\delta_{2}A_{a},\zeta_{\text{X}}^{a}\right)$, we choose $\left(\delta_{1}g_{ab},\delta_{1}A_{a}\right)$ and $\left(\delta_{2}g_{ab},\delta_{2}A_{a}\right)$ as variations along a two-parameter family of metrics and gauge fields $\left(g_{ab}\left(\lambda_{1},\lambda_{2}\right),A_{a}\left(\lambda_{1},\lambda_{2}\right)\right)$, which means that
\begin{equation}
\delta_{1}\delta_{2}g_{ab}=\delta_{2}\delta_{1}g_{ab},\quad\delta_{1}\delta_{2}A_{a}=\delta_{2}\delta_{1}A_{a},
\end{equation}
and we choose $\xi_{\text{X}}^{a}$ and $\zeta_{\text{X}}^{a}$ to be fixed, i.e.,
\begin{align}
\delta_{1}\zeta_{\text{X}}^{a}=0,\quad\delta_{2}\xi_{\text{X}}^{a}=0.
\end{align}
Since when the Lie derivative acting on the forms, one has $\left[\mathscr{L}_{\mathcal{X}},\mathscr{L}_{\mathcal{Y}}\right]=\mathscr{L}_{\left[\mathcal{X},\mathcal{Y}\right]}$, so that
\begin{align}
\delta_{1}\delta_{2}\boldsymbol{N}^{\text{X}}-\delta_{2}\delta_{1}\boldsymbol{N}^{\text{X}} & =-\delta_{1}\left(\mathscr{L}_{\zeta_{\text{X}}}\boldsymbol{N}^{\text{X}}\right)+\delta_{2}\left(\mathscr{L}_{\xi_{\text{X}}}\boldsymbol{N}^{\text{X}}\right)\nonumber \\
 & =-\mathscr{L}_{\zeta_{\text{X}}}\delta_{1}\boldsymbol{N}^{\text{X}}+\mathscr{L}_{\xi_{\text{X}}}\delta_{2}\boldsymbol{N}^{\text{X}}\nonumber \\
 & =-\left[\mathscr{L}_{\xi_{\text{X}}},\mathscr{L}_{\zeta_{\text{X}}}\right]\boldsymbol{N}^{\text{X}}\nonumber \\
 & =-\mathscr{L}_{\left[\xi_{\text{X}},\zeta_{\text{X}}\right]}\boldsymbol{N}^{\text{X}},
\end{align}
and similarly
\begin{equation}
\delta_{1}\delta_{2}s-\delta_{2}\delta_{1}s=-\mathscr{L}_{\left[\xi_{\text{e}},\zeta_{\text{e}}\right]}s,
\end{equation}
so that we conclude that the perturbation $\delta_{1}\delta_{2}\phi-\delta_{2}\delta_{1}\phi=\left(\delta g_{ab}=0,\delta  A_{a}=0,\left[\xi_{\text{X}},\zeta_{\text{X}}\right]^{a}\right)$. 

Thus the pre-symplectic form Eq. \eqref{pre-symplectic form} is given by
\begin{align}
 & W_{AB}\delta_{1}\phi^{A}\delta_{2}\phi^{B}\nonumber \\
= & \int_{\Sigma}\left[\delta_{1}\boldsymbol{\theta}\left(\phi;\delta_{2}\phi\right)-\delta_{2}\boldsymbol{\theta}\left(\phi;\delta_{1}\phi\right)-\boldsymbol{\theta}\left(\phi;\delta_{1}\delta_{2}\phi-\delta_{2}\delta_{1}\phi\right)\right]\nonumber \\
= & \int_{\Sigma}\left(\delta_{1}\boldsymbol{\pi}^{ij}\delta_{2}h_{ij}-\delta_{2}\boldsymbol{\pi}^{ij}\delta_{1}h_{ij}\right)\nonumber \\
 & +\int_{\Sigma}\left(\delta_{1}\boldsymbol{\Pi}^{i}\delta_{2}A_{i}-\delta_{2}\boldsymbol{\Pi}^{i}\delta_{1}A_{i}\right)\nonumber \\
 & +\int_{\Sigma}\sum_{\text{X}}\left(\zeta_{\text{X}}^{a}\delta_{1}\boldsymbol{P}_{a}^{\text{X}}-\xi_{\text{X}}^{a}\delta_{2}\boldsymbol{P}_{a}^{\text{X}}-\left[\xi_{\text{X}},\zeta_{\text{X}}\right]^{a}\boldsymbol{P}_{a}^{\text{X}}\right),\label{explicity presymplectic form}
\end{align}
where we have used the well known expression \cite{Burnett:1990tww} for the gravitational and electromagnetic parts of symplectic current, and
\begin{equation}
\boldsymbol{\pi}^{ij}=(K^{ij}-Kh^{ij})\hat{\boldsymbol{\epsilon}},\quad\boldsymbol{\Pi}^{i}=\nu_{a}F^{ai}\hat{\boldsymbol{\epsilon}},
\end{equation}
with $\hat{\boldsymbol{\epsilon}}=\nu\cdot\boldsymbol{\epsilon}$ being the induced volume 3-form on $\Sigma$.

\subsection{Speed of sound and causal behavior}

\label{appendix}

The purpose of this subsection is to determine the propagation speeds and polarization directions of sound wave fronts in our superconducting-superfluid star. A convenient method is the so-called Hadamard technique (introduced long ago by Hadamard \cite{hadamard1903leçons} and then generalized to general-relativistic elasticity theory \cite{Rayner：1963},\cite{bennoun1965etude}) of investigating the characteristic hypersurfaces of possible discontinuity in a partial differential system. By using such a method, Carter and Langlois succeeded in calculating the first and second sound in a relativistic two-constituent superfluid \cite{Carter:1995if}. The method works by considering the first order case in which the algebraically related variables $n_{\text{X}}^{a}$, $s$, and $\mu_{a}^{\text{X}}$ are themselves continuous but have space-time derivatives that are weakly discontinuous across some characteristic hypersurface with tangent direction specified by some normal covector $\lambda_{a}$ say. For any component $\phi$, the discontinuities $\lceil\nabla_{a}\phi\rfloor$ in its gradient components will have to be proportional to the normal $\lambda_{a}$, i.e., we shall have $\lceil\nabla_{a}\phi\rfloor=\hat{\phi}\lambda_{a}$ for some scalar $\hat{\phi}$. Applying this to the relevant variables in the present case, we shall have
\begin{equation}
    \lceil\nabla_{a}\mu_{b}^{\text{X}}\rfloor=\hat{\mu}_{b}^{\text{X}}\lambda_{a},\quad \lceil\nabla_{a}n_{\text{X}}^{b}\rfloor=\hat{n}_{\text{X}}^{b}\lambda_{a},\quad \lceil\nabla_{a}s\rfloor=\hat{s}\lambda_{a},
\end{equation}
for some scalar $\hat{s}$, vectors $\hat{n}_{\text{X}}^{a}$, and covectors $\hat{\mu}_{a}^{\text{X}}$ on the hypersurface. The resulting discontinuities in the set of conservation laws Eq. \eqref{conservation of n=000026s} and equations of motion Eqs. \eqref{eomn}-\eqref{eome} will therefore given by\footnote{Note that the metric $g_{ab}$ and the electromagnetic tensor $F_{ab}$ are assumed to be continuous on the hypersurface}
\begin{align}
\hat{n}_{\text{X}}^{a}\lambda_{a} & =0,\quad u_{\text{e}}^{a}\lambda_{a}\hat{s}=0,\nonumber \\
n_{\Upsilon}^{a}\lambda_{[a}\hat{\mu}_{b]}^{\Upsilon} & =0,\quad n_{\text{e}}^{a}\lambda_{[a}\hat{\mu}_{b]}^{\text{e}}-Tn_{\text{e}}\hat{s}\lambda_{b}=0,\label{discontinuity equations}
\end{align}
where the chemical indices $\Upsilon=\text{n},\text{p}$ will indicate the ``super'' constituents.

In view of the algebraic relationship Eq. \eqref{effective momentum}, the discontinuity covectors $\hat{\mu}_{a}^{\text{X}}$ will not be independent of the corresponding discontinuity vectors $n_{\text{X}}^{a}$. Actually, since
\begin{widetext}
\begin{align}
\delta\mu_{a}^{\text{X}}= & \bigg(-2\frac{\partial\Lambda_{\text{M}}}{\partial n_{\text{X}}^{2}}g_{ab}+4\frac{\partial^{2}\Lambda_{\text{M}}}{\partial n_{\text{X}}^{2}\partial n_{\text{X}}^{2}}n_{\text{X}a}n_{\text{X}b}+4\sum_{\text{Y}\neq\text{X}}\frac{\partial^{2}\Lambda_{\text{M}}}{\partial n_{\text{X}}^{2}\partial x_{\text{XY}}^{2}}n_{\text{Y}(a}n_{\text{X}b)}+\sum_{\text{Z}\neq\text{X}}\sum_{\text{Y}\neq\text{X}}\frac{\partial^{2}\Lambda_{\text{M}}}{\partial x_{\text{XZ}}^{2}\partial x_{\text{XY}}^{2}}n_{\text{Y}a}n_{\text{Z}b}\bigg)\delta n_{\text{X}}^{b}\nonumber \\
 & +\sum_{\text{Y}\neq\text{X}}\bigg(-\frac{\partial\Lambda_{\text{M}}}{\partial x_{\text{XY}}^{2}}g_{ab}+4\frac{\partial^{2}\Lambda_{\text{M}}}{\partial n_{\text{Y}}^{2}\partial n_{\text{X}}^{2}}n_{\text{X}a}n_{\text{Y}b}+2\sum_{\text{Z}\neq\text{Y}}\frac{\partial^{2}\Lambda_{\text{M}}}{\partial x_{\text{YZ}}^{2}\partial n_{\text{X}}^{2}}n_{\text{X}a}n_{\text{Z}b}\nonumber \\
 & \qquad\qquad+2\sum_{\text{Z}\neq\text{X}}\frac{\partial^{2}\Lambda_{\text{M}}}{\partial n_{\text{Y}}^{2}\partial x_{\text{XZ}}^{2}}n_{\text{Z}a}n_{\text{Y}b}+\sum_{\text{Z}\neq\text{Y}}\sum_{\text{W}\neq\text{X}}\frac{\partial^{2}\Lambda_{\text{M}}}{\partial x_{\text{YZ}}^{2}\partial x_{\text{XW}}^{2}}n_{\text{W}a}n_{\text{Z}b}\bigg)\delta n_{\text{Y}}^{b}\nonumber \\
 & +\bigg(-2\frac{\partial\Lambda_{\text{M}}}{\partial n_{\text{X}}^{2}}n_{\text{X}}^{b}\delta_{a}^{c}-\sum_{\text{Y}\neq\text{X}}\frac{\partial\Lambda_{\text{M}}}{\partial x_{\text{XY}}^{2}}n_{\text{Y}}^{b}\delta_{a}^{c}+2\sum_{\text{Y}}\frac{\partial^{2}\Lambda_{\text{M}}}{\partial n_{\text{Y}}^{2}\partial n_{\text{X}}^{2}}n_{\text{X}a}n_{\text{Y}}^{b}n_{\text{Y}}^{c}+2\sum_{\text{Y}\neq\text{Z}}\frac{\partial^{2}\Lambda_{\text{M}}}{\partial x_{\text{YZ}}^{2}\partial n_{\text{X}}^{2}}n_{\text{X}a}n_{\text{Y}}^{b}n_{\text{Z}}^{c}\nonumber \\
 & \qquad+\sum_{\text{Z}}\sum_{\text{Y}\neq\text{X}}\frac{\partial^{2}\Lambda_{\text{M}}}{\partial n_{\text{Z}}^{2}\partial x_{\text{XY}}^{2}}n_{\text{Y}a}n_{\text{Z}}^{b}n_{\text{Z}}^{c}+\sum_{\text{Z}\neq\text{W}}\sum_{\text{Y}\neq\text{X}}\frac{\partial^{2}\Lambda_{\text{M}}}{\partial x_{\text{WZ}}^{2}\partial x_{\text{XY}}^{2}}n_{\text{Y}a}n_{\text{Z}}^{b}n_{\text{W}}^{c}\bigg)\delta g_{bc}\nonumber \\
 & +\bigg(-2\frac{\partial^{2}\Lambda_{\text{M}}}{\partial s\partial n_{\text{X}}^{2}}n_{\text{X}a}-\sum_{\text{Y}\neq\text{X}}\frac{\partial^{2}\Lambda_{\text{M}}}{\partial s\partial x_{\text{XY}}^{2}}n_{\text{Y}a}\bigg)\delta s,\label{delta of effective momentum}
\end{align}
then it implies that the discontinuity amplitudes in Eq. \eqref{discontinuity equations} will correspondingly be related by
\begin{equation}
\hat{\mu}_{a}^{\text{X}}=\sum_{\text{Y}}\mathscr{P}_{ab}^{\text{XY}}\hat{n}_{\text{Y}}^{b}+\mathscr{P}_{a}^{\text{Xs}}\hat{s},
\end{equation}
where the explicit expressions are

\begin{align}
\mathscr{P}_{ab}^{\text{nn}}= & \mathscr{D}g_{ab}-2\frac{\partial\mathcal{D}}{\partial n_{\text{n}}^{2}}n_{\text{n}a}n_{\text{n}b}-4\frac{\partial\mathscr{A}}{\partial n_{\text{n}}^{2}}n_{\text{p}(a}n_{\text{n}b)}-4\frac{\partial\mathscr{B}}{\partial n_{\text{n}}^{2}}n_{\text{e}(a}n_{\text{n}b)}-\frac{\partial\mathscr{A}}{\partial x_{\text{np}}^{2}}n_{\text{p}a}n_{\text{p}b}-\frac{\partial\mathscr{B}}{\partial x_{\text{ne}}^{2}}n_{\text{e}a}n_{\text{e}b}-2\frac{\partial\mathscr{A}}{\partial x_{\text{ne}}^{2}}n_{\text{p}(a}n_{\text{e}b)},\nonumber \\
\mathscr{P}_{ab}^{\text{pp}}= & \mathscr{E}g_{ab}-2\frac{\partial\mathscr{E}}{\partial n_{\text{p}}^{2}}n_{\text{p}a}n_{\text{p}b}-4\frac{\partial\mathscr{A}}{\partial n_{\text{p}}^{2}}n_{\text{n}(a}n_{\text{p}b)}-4\frac{\partial\mathscr{C}}{\partial n_{\text{p}}^{2}}n_{\text{e}(a}n_{\text{p}b)}-\frac{\partial\mathscr{A}}{\partial x_{\text{np}}^{2}}n_{\text{n}a}n_{\text{n}b}-\frac{\partial\mathscr{C}}{\partial x_{\text{pe}}^{2}}n_{\text{e}a}n_{\text{e}b}-2\frac{\partial\mathscr{A}}{\partial x_{\text{pe}}^{2}}n_{\text{n}(a}n_{\text{e}b)},\nonumber \\
\mathscr{P}_{ab}^{\text{ee}}= & \mathscr{F}g_{ab}-2\frac{\partial\mathscr{F}}{\partial n_{\text{e}}^{2}}n_{\text{e}a}n_{\text{e}b}-4\frac{\partial\mathscr{B}}{\partial n_{\text{e}}^{2}}n_{\text{n}(a}n_{\text{e}b)}-4\frac{\partial\mathscr{C}}{\partial n_{\text{e}}^{2}}n_{\text{p}(a}n_{\text{e}b)}-\frac{\partial\mathscr{B}}{\partial x_{\text{ne}}^{2}}n_{\text{n}a}n_{\text{n}b}-\frac{\partial\mathscr{C}}{\partial x_{\text{pe}}^{2}}n_{\text{p}a}n_{\text{p}b}-2\frac{\partial\mathscr{B}}{\partial x_{\text{pe}}^{2}}n_{\text{n}(a}n_{\text{p}b)},\nonumber \\
\mathscr{P}_{ab}^{\text{np}}=\mathscr{P}_{ba}^{\text{pn}}= & \mathscr{A}g_{ab}-2\frac{\partial\mathscr{E}}{\partial n_{\text{n}}^{2}}n_{\text{n}a}n_{\text{p}b}-2\frac{\partial\mathscr{A}}{\partial n_{\text{n}}^{2}}n_{\text{n}a}n_{\text{n}b}-2\frac{\partial\mathscr{C}}{\partial n_{\text{n}}^{2}}n_{\text{n}a}n_{\text{e}b}-2\frac{\partial\mathscr{A}}{\partial n_{\text{p}}^{2}}n_{\text{p}a}n_{\text{p}b}-2\frac{\partial\mathscr{B}}{\partial n_{\text{p}}^{2}}n_{\text{e}a}n_{\text{p}b}\nonumber \\
 & -\frac{\partial\mathscr{A}}{\partial x_{\text{np}}^{2}}n_{\text{p}a}n_{\text{n}b}-\frac{\partial\mathscr{A}}{\partial x_{\text{pe}}^{2}}n_{\text{p}a}n_{\text{e}b}-\frac{\partial\mathscr{A}}{\partial x_{\text{ne}}^{2}}n_{\text{e}a}n_{\text{n}b}-\frac{\partial\mathscr{B}}{\partial x_{\text{pe}}^{2}}n_{\text{e}a}n_{\text{e}b},\nonumber \\
\mathscr{P}_{ab}^{\text{ne}}=\mathscr{P}_{ba}^{\text{en}}= & \mathscr{B}g_{ab}-2\frac{\partial\mathscr{F}}{\partial n_{\text{n}}^{2}}n_{\text{n}a}n_{\text{e}b}-2\frac{\partial\mathscr{B}}{\partial n_{\text{n}}^{2}}n_{\text{n}a}n_{\text{n}b}-2\frac{\partial\mathscr{C}}{\partial n_{\text{n}}^{2}}n_{\text{n}a}n_{\text{p}b}-2\frac{\partial\mathscr{A}}{\partial n_{\text{e}}^{2}}n_{\text{p}a}n_{\text{e}b}-2\frac{\partial\mathscr{B}}{\partial n_{\text{e}}^{2}}n_{\text{e}a}n_{\text{e}b}\nonumber \\
 & -\frac{\partial\mathscr{A}}{\partial x_{\text{ne}}^{2}}n_{\text{p}a}n_{\text{n}b}-\frac{\partial\mathscr{A}}{\partial x_{\text{pe}}^{2}}n_{\text{p}a}n_{\text{p}b}-\frac{\partial\mathscr{B}}{\partial x_{\text{ne}}^{2}}n_{\text{e}a}n_{\text{n}b}-\frac{\partial\mathscr{B}}{\partial x_{\text{pe}}^{2}}n_{\text{e}a}n_{\text{p}b},\nonumber \\
\mathscr{P}_{ab}^{\text{pe}}=\mathscr{P}_{ba}^{\text{ep}}= & \mathscr{C}g_{ab}-2\frac{\partial\mathscr{F}}{\partial n_{\text{p}}^{2}}n_{\text{p}a}n_{\text{e}b}-2\frac{\partial\mathscr{B}}{\partial n_{\text{p}}^{2}}n_{\text{p}a}n_{\text{n}b}-2\frac{\partial\mathscr{C}}{\partial n_{\text{p}}^{2}}n_{\text{p}a}n_{\text{p}b}-2\frac{\partial\mathscr{A}}{\partial n_{\text{e}}^{2}}n_{\text{n}a}n_{\text{e}b}-2\frac{\partial\mathscr{C}}{\partial n_{\text{e}}^{2}}n_{\text{e}a}n_{\text{e}b}\nonumber \\
 & -\frac{\partial\mathscr{A}}{\partial x_{\text{ne}}^{2}}n_{\text{n}a}n_{\text{n}b}-\frac{\partial\mathscr{A}}{\partial x_{\text{pe}}^{2}}n_{\text{n}a}n_{\text{p}b}-\frac{\partial\mathscr{B}}{\partial x_{\text{pe}}^{2}}n_{\text{e}a}n_{\text{n}b}-\frac{\partial\mathscr{C}}{\partial x_{\text{pe}}^{2}}n_{\text{e}a}n_{\text{p}b},\nonumber \\
\mathscr{P}_{a}^{\text{Xs}}= & -2\frac{\partial^{2}\Lambda_{\text{M}}}{\partial s\partial n_{\text{X}}^{2}}n_{\text{X}a}-\sum_{\text{Y}\neq\text{X}}\frac{\partial^{2}\Lambda_{\text{M}}}{\partial s\partial x_{\text{XY}}^{2}}n_{\text{Y}a}.\label{coefficients}
\end{align}
\end{widetext}

With respect to a rest frame determined by unit flow $u_{\text{e}}^{a}$ of electron, the velocity $v$ say of propagation in the direction of the orthogonal unit spacelike vector,  
\begin{equation}
\gamma^{a}=\frac{1}{\sqrt{q_{\text{e}}^{cd}\varphi_{c}\varphi_{d}}}q_{\text{e}}^{ab}\varphi_{b},\label{gamma}
\end{equation}
 will be given for a suitably normalized $\lambda_{a}$ by
\begin{equation}
\lambda_{a}=-vu_{\text{e}a}+\gamma_{a}.\label{normal of discontinuity}
\end{equation}
And as we consider the superconducting-superfluid star background, the circular flow condition Eq. \eqref{circular flow condition} implies that we have the following decomposition for the flows
\begin{equation}
u_{\text{X}}^{a}=-u_{\text{e}}^{a}u_{\text{X}}^{b}u_{\text{e}b}+\gamma^{a}u_{\text{X}}^{b}\gamma_{b}.
\end{equation}
It can now easily be seen from Eq. \eqref{discontinuity equations} that there can be no transverse modes, i.e., for the vector $\theta^{a}$ such that $u_{\text{e}}^{a}\theta_{a}=\gamma^{a}\theta_{a}=0$, one must have
\begin{equation}
\theta^{a}\hat{\mu}_{a}^{\text{X}}=0,\quad\Rightarrow\quad\theta_{a}\hat{n}_{\text{X}}^{a}=0,
\end{equation}
The discontinuity of conservation law of entropy means that $\hat{s}=0$, and the discontinuity of conservation of particle number means that
\begin{equation}
vu_{\text{e}a}\hat{n}_{\text{X}}^{a}=\gamma_{a}\hat{n}_{\text{X}}^{a}.
\end{equation}
Consequently, 
\begin{equation}
g_{ab}\left(-u_{\text{e}}^{a}+v\gamma^{a}\right)\hat{n}_{\text{X}}^{b}=g_{ab}\left(-u_{\text{e}}^{a}+v\gamma^{a}\right)\left(-u_{\text{e}}^{b}+v\gamma^{b}\right)u_{\text{e}c}\hat{n}_{\text{X}}^{c},
\end{equation}
and
\begin{align}
n_{\text{X}a}\hat{n}_{\text{Y}}^{a} & =\left(-u_{\text{e}b}n_{\text{X}a}u_{\text{e}}^{a}+\gamma_{b}n_{\text{X}a}\gamma^{a}\right)\hat{n}_{\text{Y}}^{b}\nonumber \\
 & =n_{\text{X}a}\left(-u_{\text{e}}^{a}+v\gamma^{a}\right)u_{\text{e}b}\hat{n}_{\text{Y}}^{b},
\end{align}
hence the longitudinal modes of the characteristic equation Eq. \eqref{discontinuity equations} will take the form
\begin{align}
 & 2\gamma^{a}u_{\text{X}}^{b}\lambda_{[a}\hat{\mu}_{b]}^{\text{X}}\nonumber \\
= & u_{\text{X}}^{a}\hat{\mu}_{a}^{\text{X}}-\left(-vu_{\text{e}b}u_{\text{X}}^{b}+\gamma_{b}u_{\text{X}}^{b}\right)\gamma^{a}\hat{\mu}_{a}^{\text{X}}\nonumber \\
= & u_{\text{e}d}u_{\text{X}}^{d}\left(-u_{\text{e}}^{a}+v\gamma^{a}\right)\hat{\mu}_{a}^{\text{X}}\nonumber \\
= & u_{\text{e}d}u_{\text{X}}^{d}\sum_{\text{Y}}\mathscr{P}_{ab}^{\text{XY}}\left(u_{\text{e}}^{a}-v\gamma^{a}\right)\left(u_{\text{e}}^{b}-v\gamma^{b}\right)u_{\text{e}c}\hat{n}_{\text{Y}}^{c}.
\end{align}
The resulting eigenvalue equations for the propagation velocity $v$ is 
\begin{equation}
\det\left[\mathscr{P}_{ab}^{\text{XY}}\left(u_{\text{e}}^{a}-v\gamma^{a}\right)\left(u_{\text{e}}^{b}-v\gamma^{b}\right)\right]=0.\label{eigenvalue equation}
\end{equation}
where the determinant is taking with the row index $\text{X}$ and the column index $\text{Y}$. 

The solutions $v$ of Eq. \eqref{eigenvalue equation} should subject to the causal condition that neither root should exceed the speed of light, i.e., $v^{2}\leq1$. This condition will give the constraint to the coefficients of the eigenvalue equation Eq. \eqref{eigenvalue equation}, but since they are not necessary for our calculation in Sec. \ref{subsec:Phase-space}, so we will not write out the explicit form of the constraint here.

\subsection{Phase space}\label{subsec:Phase-space}

As introduced in Sec. \ref{sec3}, the $W_{AB}$ in Eq. \eqref{pre-symplectic form} is the pre-symplectic form. To make it become symplectic form and then construct the phase space, we need factor the space of all fields $\phi=\left(g_{ab},A_{a},\chi_{\text{X}}\right)$ on spacetime by the degeneracy of $W_{AB}$, in other words, the phase space is the space of equivalence classes of field configurations, where two field configuration are equivalent if they lie on an orbit of degeneracy directions of $W$. To proceed, it will be useful to introduce the space of fiducial flowlines $\Sigma_{\text{X}}^{\prime}$ of each constituent, defined as the space of the integral curves of a non-vanishing $u_{\text{X}}^{\prime a}$ with $\iota_{u_{\text{X}}^{\prime}}\boldsymbol{N}^{\text{X}\prime}=0$, and we further introduce the diffeomorphism $\sigma_{\text{X}}$ from $\Sigma_{\text{X}}^{\prime}$ to $\Sigma$ obtained by intersecting the images of the fiducial flowlines under $\chi$ introduced in Sec. \ref{subsec2.2} with $\Sigma$. Next, we will find out the degeneracy of $W$ below.

It is clear from Eq. \eqref{explicity presymplectic form} that $W$ dependents at most on the following quantities on $\Sigma$: $\delta h_{ij}$, $\delta\boldsymbol{\pi}^{ij}$, $\delta\alpha$ (the perturbed lapse), $\delta\beta_{a}$ (the perturbed shift), $\delta A_{a}$, $\delta\boldsymbol{\Pi}^{i}$, $\xi_{\text{X}}^{a}$, and the normal derivative of $\xi_{\text{X}}^{a}$. Note that the pre-symplectic form is linear on the perturbation of fields, hence for the perturbation $\delta_{2}\phi=\left(\delta_{2}g_{ab},\delta_{2}A_{a},\zeta_{\text{X}}^{a}\right)$, let us do a decomposition $\delta_{2}\phi=\delta_{2}^{\prime}\phi+\delta_{2}^{\text{M}}\phi$ with $\delta_{2}^{\prime}\phi=\left(\delta_{2}g_{ab},\delta_{2}A_{a},0\right)$ and $\delta_{2}^{\text{M}}\phi=\left(0,0,\zeta_{\text{X}}^{a}\right)$. With such a decomposition, we have
\begin{align}
  &W\left(\delta_{1}\phi,\delta_{2}\phi\right)\nonumber\\
  = & W\left(\delta_{1}\phi,\delta_{2}^{\prime}\phi\right)+W\left(\delta_{1}\phi,\delta_{2}^{\text{M}}\phi\right)\nonumber \\
= & \int_{\Sigma}\left(\delta_{1}\boldsymbol{\pi}^{ij}\delta_{2}h_{ij}-\delta_{2}\boldsymbol{\pi}^{ij}\delta_{1}h_{ij}\right)\nonumber\\
 &+\int_{\Sigma}\left(\delta_{1}\boldsymbol{\Pi}^{i}\delta_{2}A_{i}-\delta_{2}\boldsymbol{\Pi}^{i}\delta_{1}A_{i}\right)-\int_{\Sigma}\sum_{\text{X}}\xi_{\text{X}}^{a}\delta_{2}^{\prime}\boldsymbol{P}_{a}^{\text{X}}\nonumber\nonumber \\
 & +\int_{\Sigma}\sum_{\text{X}}\left(\zeta_{\text{X}}^{a}\delta_{1}\boldsymbol{P}_{a}^{\text{X}}-\xi_{\text{X}}^{a}\delta_{2}^{\text{M}}\boldsymbol{P}_{a}^{\text{X}}-\left[\xi_{\text{X}},\zeta_{\text{X}}\right]^{a}\boldsymbol{P}_{a}^{\text{X}}\right).
\end{align}
Note that the third and fourth lines only occur the quantities
$\delta_{2}h_{ij}$, $\delta_{2}\boldsymbol{\pi}^{ij}$, $\delta_{2}\alpha$,
$\delta_{2}\beta_{a}$, $\delta_{2}A_{a}$, $\delta_{2}\boldsymbol{\Pi}^{i}$,
while the fifth line only occurs $\zeta_{\text{X}}^{a}$ and its derivative. According to
\begin{equation}
\delta\left(\boldsymbol{\epsilon}n_{\text{X}}^{a}\right)=-\mathscr{L}_{\xi_{\text{X}}}\left(\boldsymbol{\epsilon}n_{\text{X}}^{a}\right),
\end{equation}
one has
\begin{align}
\xi_{\text{X}}^{a}\delta_{2}^{\prime}\boldsymbol{P}_{a}^{\text{X}} 
 & =2\delta_{2}^{\prime}\pi_{a}^{\text{X}}n_{\text{X}}^{[b}\xi_{\text{X}}^{a]}\boldsymbol{\epsilon}_{bpqr}\nonumber=2\left(\delta_{2}^{\prime}\mu_{a}^{\text{X}}+e^{\text{X}}\delta_{2}A_{a}\right)n_{\text{X}}^{[b}\xi_{\text{X}}^{a]}\boldsymbol{\epsilon}_{bpqr}\nonumber \\
 & =2\left[\left(\mathscr{Q}^{\text{X}}\right)_{a}^{\ bc}\delta_{2}g_{bc}+e^{\text{X}}\delta_{2}A_{a}\right]n_{\text{X}}^{[d}\xi_{\text{X}}^{a]}\boldsymbol{\epsilon}_{dpqr},
\end{align}
where 
\begin{widetext}
\begin{align}
(\mathcal{\mathscr{Q}}^{\text{n}})_{a}^{\ bc}= & -\left(\frac{1}{2}\mathscr{D}g^{bc}+\frac{\partial\mathscr{D}}{\partial n_{\text{n}}^{2}}n_{\text{n}}^{2}q_{\text{n}}^{bc}+\frac{\partial\mathscr{E}}{\partial n_{\text{n}}^{2}}n_{\text{p}}^{2}q_{\text{p}}^{bc}+\frac{\partial\mathscr{F}}{\partial n_{\text{n}}^{2}}n_{\text{e}}^{2}q_{\text{e}}^{bc}+2\frac{\partial\mathscr{A}}{\partial n_{\text{n}}^{2}}x_{\text{np}}^{2}q_{\text{np}}^{bc}+2\frac{\partial\mathscr{B}}{\partial n_{\text{n}}^{2}}x_{\text{ne}}^{2}q_{\text{ne}}^{bc}+2\frac{\partial\mathscr{C}}{\partial n_{\text{n}}^{2}}x_{\text{pe}}^{2}q_{\text{pe}}^{bc}\right)n_{\text{n}a}+\mathscr{D}n_{\text{n}}^{(b}\delta_{a}^{c)}\nonumber \\
 & -\left(\frac{1}{2}\mathscr{A}g^{bc}+\frac{\partial\mathscr{A}}{\partial n_{\text{n}}^{2}}n_{\text{n}}^{2}q_{\text{n}}^{bc}+\frac{\partial\mathscr{A}}{\partial n_{\text{p}}^{2}}n_{\text{p}}^{2}q_{\text{p}}^{bc}+\frac{\partial\mathscr{A}}{\partial n_{\text{e}}^{2}}n_{\text{e}}^{2}q_{\text{e}}^{bc}+\frac{\partial\mathscr{A}}{\partial x_{\text{np}}^{2}}x_{\text{np}}^{2}q_{\text{np}}^{bc}+\frac{\partial\mathscr{A}}{\partial x_{\text{ne}}^{2}}x_{\text{ne}}^{2}q_{\text{ne}}^{bc}+\frac{\partial\mathscr{A}}{\partial x_{\text{pe}}^{2}}x_{\text{pe}}^{2}q_{\text{pe}}^{bc}\right)n_{\text{p}a}+\mathscr{A}n_{\text{p}}^{(b}\delta_{a}^{c)}\nonumber \\
 & -\left(\frac{1}{2}\mathscr{B}g^{bc}+\frac{\partial\mathscr{B}}{\partial n_{\text{n}}^{2}}n_{\text{n}}^{2}q_{\text{n}}^{bc}+\frac{\partial\mathscr{B}}{\partial n_{\text{p}}^{2}}n_{\text{p}}^{2}q_{\text{p}}^{bc}+\frac{\partial\mathscr{B}}{\partial n_{\text{e}}^{2}}n_{\text{e}}^{2}q_{\text{e}}^{bc}+\frac{\partial\mathscr{B}}{\partial x_{\text{np}}^{2}}x_{\text{np}}^{2}q_{\text{np}}^{bc}+\frac{\partial\mathscr{B}}{\partial x_{\text{ne}}^{2}}x_{\text{ne}}^{2}q_{\text{ne}}^{bc}+\frac{\partial\mathscr{B}}{\partial x_{\text{pe}}^{2}}x_{\text{pe}}^{2}q_{\text{pe}}^{bc}\right)n_{\text{e}a}+\mathscr{B}n_{\text{e}}^{(b}\delta_{a}^{c)},\nonumber \\
(\mathcal{\mathscr{Q}}^{\text{p}})_{a}^{\ bc}= & -\left(\frac{1}{2}\mathscr{A}g^{bc}+\frac{\partial\mathscr{A}}{\partial n_{\text{n}}^{2}}n_{\text{n}}^{2}q_{\text{n}}^{bc}+\frac{\partial\mathscr{A}}{\partial n_{\text{p}}^{2}}n_{\text{p}}^{2}q_{\text{p}}^{bc}+\frac{\partial\mathscr{A}}{\partial n_{\text{e}}^{2}}n_{\text{e}}^{2}q_{\text{e}}^{bc}+\frac{\partial\mathscr{A}}{\partial x_{\text{np}}^{2}}x_{\text{np}}^{2}q_{\text{np}}^{bc}+\frac{\partial\mathscr{A}}{\partial x_{\text{ne}}^{2}}x_{\text{ne}}^{2}q_{\text{ne}}^{bc}+\frac{\partial\mathscr{A}}{\partial x_{\text{pe}}^{2}}x_{\text{pe}}^{2}q_{\text{pe}}^{bc}\right)n_{\text{n}a}+\mathscr{A}n_{\text{n}}^{(b}\delta_{a}^{c)}\nonumber \\
 & -\left(\frac{1}{2}\mathscr{E}g^{bc}+\frac{\partial\mathscr{D}}{\partial n_{\text{p}}^{2}}n_{\text{n}}^{2}q_{\text{n}}^{bc}+\frac{\partial\mathscr{E}}{\partial n_{\text{p}}^{2}}n_{\text{p}}^{2}q_{\text{p}}^{bc}+\frac{\partial\mathscr{F}}{\partial n_{\text{p}}^{2}}n_{\text{e}}^{2}q_{\text{e}}^{bc}+2\frac{\partial\mathscr{A}}{\partial n_{\text{p}}^{2}}x_{\text{np}}^{2}q_{\text{np}}^{bc}+2\frac{\partial\mathscr{B}}{\partial n_{\text{p}}^{2}}x_{\text{ne}}^{2}q_{\text{ne}}^{bc}+2\frac{\partial\mathscr{C}}{\partial n_{\text{p}}^{2}}x_{\text{pe}}^{2}q_{\text{pe}}^{bc}\right)n_{\text{p}a}+\mathscr{E}n_{\text{p}}^{(b}\delta_{a}^{c)}\nonumber \\
 & -\left(\frac{1}{2}\mathscr{C}g^{bc}+\frac{\partial\mathscr{C}}{\partial n_{\text{n}}^{2}}n_{\text{n}}^{2}q_{\text{n}}^{bc}+\frac{\partial\mathscr{C}}{\partial n_{\text{p}}^{2}}n_{\text{p}}^{2}q_{\text{p}}^{bc}+\frac{\partial\mathscr{C}}{\partial n_{\text{e}}^{2}}n_{\text{e}}^{2}q_{\text{e}}^{bc}+\frac{\partial\mathscr{C}}{\partial x_{\text{np}}^{2}}x_{\text{np}}^{2}q_{\text{np}}^{bc}+\frac{\partial\mathscr{C}}{\partial x_{\text{ne}}^{2}}x_{\text{ne}}^{2}q_{\text{ne}}^{bc}+\frac{\partial\mathscr{C}}{\partial x_{\text{pe}}^{2}}x_{\text{pe}}^{2}q_{\text{pe}}^{bc}\right)n_{\text{e}a}+\mathscr{C}n_{\text{e}}^{(b}\delta_{a}^{c)},\nonumber \\
(\mathcal{\mathscr{Q}}^{\text{e}})_{a}^{\ bc}= & -\left(\frac{1}{2}\mathscr{B}g^{bc}+\frac{\partial\mathscr{B}}{\partial n_{\text{n}}^{2}}n_{\text{n}}^{2}q_{\text{n}}^{bc}+\frac{\partial\mathscr{B}}{\partial n_{\text{p}}^{2}}n_{\text{p}}^{2}q_{\text{p}}^{bc}+\frac{\partial\mathscr{B}}{\partial n_{\text{e}}^{2}}n_{\text{e}}^{2}q_{\text{e}}^{bc}+\frac{\partial\mathscr{B}}{\partial x_{\text{np}}^{2}}x_{\text{np}}^{2}q_{\text{np}}^{bc}+\frac{\partial\mathscr{B}}{\partial x_{\text{ne}}^{2}}x_{\text{ne}}^{2}q_{\text{ne}}^{bc}+\frac{\partial\mathscr{B}}{\partial x_{\text{pe}}^{2}}x_{\text{pe}}^{2}q_{\text{pe}}^{bc}\right)n_{\text{n}a}+\mathscr{B}n_{\text{n}}^{(b}\delta_{a}^{c)}\nonumber \\
 & -\left(\frac{1}{2}\mathscr{C}g^{bc}+\frac{\partial\mathscr{C}}{\partial n_{\text{n}}^{2}}n_{\text{n}}^{2}q_{\text{n}}^{bc}+\frac{\partial\mathscr{C}}{\partial n_{\text{p}}^{2}}n_{\text{p}}^{2}q_{\text{p}}^{bc}+\frac{\partial\mathscr{C}}{\partial n_{\text{e}}^{2}}n_{\text{e}}^{2}q_{\text{e}}^{bc}+\frac{\partial\mathscr{C}}{\partial x_{\text{np}}^{2}}x_{\text{np}}^{2}q_{\text{np}}^{bc}+\frac{\partial\mathscr{C}}{\partial x_{\text{ne}}^{2}}x_{\text{ne}}^{2}q_{\text{ne}}^{bc}+\frac{\partial\mathscr{C}}{\partial x_{\text{pe}}^{2}}x_{\text{pe}}^{2}q_{\text{pe}}^{bc}\right)n_{\text{p}a}+\mathscr{C}n_{\text{p}}^{(b}\delta_{a}^{c)}\nonumber \\
 & -\left(\frac{1}{2}\mathscr{F}g^{bc}+\frac{\partial\mathscr{D}}{\partial n_{\text{e}}^{2}}n_{\text{n}}^{2}q_{\text{n}}^{bc}+\frac{\partial\mathscr{E}}{\partial n_{\text{e}}^{2}}n_{\text{p}}^{2}q_{\text{p}}^{bc}+\frac{\partial\mathscr{F}}{\partial n_{\text{e}}^{2}}n_{\text{e}}^{2}q_{\text{e}}^{bc}+2\frac{\partial\mathscr{A}}{\partial n_{\text{e}}^{2}}x_{\text{np}}^{2}q_{\text{np}}^{bc}+2\frac{\partial\mathscr{B}}{\partial n_{\text{e}}^{2}}x_{\text{ne}}^{2}q_{\text{ne}}^{bc}+2\frac{\partial\mathscr{C}}{\partial n_{\text{e}}^{2}}x_{\text{pe}}^{2}q_{\text{pe}}^{bc}\right)n_{\text{e}a}+\mathscr{F}n_{\text{e}}^{(b}\delta_{a}^{c)},
\end{align}
which can be read out by substituting 
\begin{equation}
\delta_{2}^{\prime}n_{\text{Y}}^{a}=-\frac{1}{2}n_{\text{Y}}^{a}g^{bc}\delta_{2}g_{bc},\quad\delta_{2}^{\prime}s=0,
\end{equation}
into Eq. \eqref{delta of effective momentum} and $q_{\text{XY}}^{ab}=g^{ab}+\frac{1}{x_{\text{XY}}^2}n_{\text{X}}^{(a}n_{\text{Y}}^{b)}$. Also
\begin{align}
 & \sum_{\text{X}}\left(\xi_{\text{X}}^{a}\delta_{2}^{\text{M}}\boldsymbol{P}_{a}^{\text{X}}+\left[\xi_{\text{X}},\zeta_{\text{X}}\right]^{a}\boldsymbol{P}_{a}^{\text{X}}\right)\nonumber \\
= & \sum_{\text{X}}\left[\delta_{2}^{\text{M}}\left(2\pi_{a}^{\text{X}}n_{\text{X}}^{[d}\xi_{\text{X}}^{a]}\boldsymbol{\epsilon}_{dpqr}\right)+\left(\xi_{\text{X}}^{b}\nabla_{b}\zeta_{\text{X}}^{a}-\zeta_{\text{X}}^{b}\nabla_{b}\xi_{\text{X}}^{a}\right)\boldsymbol{P}_{a}^{\text{X}}\right]\nonumber \\
= & \sum_{\text{X}}\left[2\left(\delta_{2}^{\text{M}}\mu_{a}^{\text{X}}\right)n_{\text{X}}^{[d}\xi_{\text{X}}^{a]}\boldsymbol{\epsilon}_{dpqr}+2\pi_{a}^{\text{X}}\left(\delta_{2}^{\text{M}}n_{\text{X}}^{[d}\right)\xi_{\text{X}}^{a]}\boldsymbol{\epsilon}_{dpqr}+\left(\xi_{\text{X}}^{b}\nabla_{b}\zeta_{\text{X}}^{a}-\zeta_{\text{X}}^{b}\nabla_{b}\xi_{\text{X}}^{a}\right)\boldsymbol{P}_{a}^{\text{X}}\right]\nonumber \\
= & \sum_{\text{X}}\left[2\left(\sum_{\text{Y}}\mathscr{P}_{ab}^{\text{XY}}\delta_{2}^{\text{M}}n_{\text{Y}}^{b}+\mathscr{P}_{a}^{\text{Xs}}\delta_{2}^{\text{M}}s\right)n_{\text{X}}^{[d}\xi_{\text{X}}^{a]}\boldsymbol{\epsilon}_{dpqr}\right]\nonumber \\
 & +\sum_{\text{X}}\left[2\pi_{a}^{\text{X}}\left(-\zeta_{\text{X}}^{b}\nabla_{b}n_{\text{X}}^{[d}\xi_{\text{X}}^{a]}+n_{\text{X}}^{b}\nabla_{b}\zeta_{\text{X}}^{[d}\xi_{\text{X}}^{a]}-\nabla_{b}\zeta_{\text{X}}^{b}n_{\text{X}}^{[d}\xi_{\text{X}}^{a]}\right)\boldsymbol{\epsilon}_{dpqr}+\left(\xi_{\text{X}}^{b}\nabla_{b}\zeta_{\text{X}}^{a}-\zeta_{\text{X}}^{b}\nabla_{b}\xi_{\text{X}}^{a}\right)\boldsymbol{P}_{a}^{\text{X}}\right]\nonumber \\
= & \sum_{\text{X}}\left\{ 2\left[\sum_{\text{Y}}\mathscr{P}_{ab}^{\text{XY}}\left(-\zeta_{\text{Y}}^{c}\nabla_{c}n_{\text{Y}}^{b}+n_{\text{Y}}^{c}\nabla_{c}\zeta_{\text{Y}}^{b}-n_{\text{Y}}^{b}\nabla_{c}\zeta_{\text{Y}}^{c}\right)-\mathscr{P}_{a}^{\text{Xs}}\zeta_{\text{e}}^{b}\nabla_{b}s\right]n_{\text{X}}^{[d}\xi_{\text{X}}^{a]}\boldsymbol{\epsilon}_{dpqr}\right\} \nonumber \\
 & +\sum_{\text{X}}\left[2\pi_{a}^{\text{X}}\left(-\zeta_{\text{X}}^{b}\nabla_{b}n_{\text{X}}^{[d}\xi_{\text{X}}^{a]}+n_{\text{X}}^{b}\nabla_{b}\zeta_{\text{X}}^{[d}\xi_{\text{X}}^{a]}-\nabla_{b}\zeta_{\text{X}}^{b}n_{\text{X}}^{[d}\xi_{\text{X}}^{a]}\right)\boldsymbol{\epsilon}_{dpqr}+\left(\xi_{\text{X}}^{b}\nabla_{b}\zeta_{\text{X}}^{a}-\zeta_{\text{X}}^{b}\nabla_{b}\xi_{\text{X}}^{a}\right)\boldsymbol{P}_{a}^{\text{X}}\right]\nonumber \\
= & \sum_{\text{Y}}\left\{ 2\left[\sum_{\text{X}}\mathscr{P}_{ab}^{\text{YX}}\left(-\zeta_{\text{X}}^{c}\nabla_{c}n_{\text{X}}^{b}+n_{\text{X}}^{c}\nabla_{c}\zeta_{\text{X}}^{b}-n_{\text{X}}^{b}\nabla_{c}\zeta_{\text{X}}^{c}\right)-\mathscr{P}_{a}^{\text{Ys}}\zeta_{\text{e}}^{b}\nabla_{b}s\right]n_{\text{Y}}^{[d}\xi_{\text{Y}}^{a]}\boldsymbol{\epsilon}_{dpqr}\right\} \nonumber \\
 & +\sum_{\text{X}}\left[2\pi_{a}^{\text{X}}\left(-\zeta_{\text{X}}^{b}\nabla_{b}n_{\text{X}}^{[d}\xi_{\text{X}}^{a]}+n_{\text{X}}^{b}\nabla_{b}\zeta_{\text{X}}^{[d}\xi_{\text{X}}^{a]}-\nabla_{b}\zeta_{\text{X}}^{b}n_{\text{X}}^{[d}\xi_{\text{X}}^{a]}\right)\boldsymbol{\epsilon}_{dpqr}+\left(\xi_{\text{X}}^{b}\nabla_{b}\zeta_{\text{X}}^{a}-\zeta_{\text{X}}^{b}\nabla_{b}\xi_{\text{X}}^{a}\right)\boldsymbol{P}_{a}^{\text{X}}\right]\nonumber \\
= & 2\sum_{\text{X}}\nabla_{a}\zeta_{\text{X}}^{b}\left[\sum_{\text{Y}}\left(\mathscr{P}_{bc}^{\text{XY}}n_{\text{X}}^{a}-\mathscr{P}_{ec}^{\text{XY}}n_{\text{X}}^{e}\delta_{b}^{a}\right)n_{\text{Y}}^{[d}\xi_{\text{Y}}^{c]}\boldsymbol{\epsilon}_{dpqr}+n_{\text{X}}^{a}\xi_{\text{X}}^{c}\pi_{[c}^{\text{X}}\boldsymbol{\epsilon}_{b]pqr}-\delta_{b}^{a}\xi_{\text{X}}^{[c}n_{\text{X}}^{d]}\pi_{c}^{\text{X}}\boldsymbol{\epsilon}_{dpqr}+\xi_{\text{X}}^{a}n_{\text{X}}^{c}\pi_{[b}^{\text{X}}\boldsymbol{\epsilon}_{c]pqr}\right]\nonumber \\
 & -2\sum_{\text{Y}}\left[\sum_{\text{X}}\mathscr{P}_{bc}^{\text{XY}}\zeta_{\text{X}}^{a}\nabla_{a}n_{\text{X}}^{b}+\mathscr{P}_{c}^{\text{Ys}}\zeta_{\text{e}}^{a}\nabla_{a}s\right]n_{\text{Y}}^{[d}\xi_{\text{Y}}^{c]}\boldsymbol{\epsilon}_{dpqr}-2\sum_{\text{X}}\pi_{b}^{\text{X}}\zeta_{\text{X}}^{a}\nabla_{a}\left(n_{\text{X}}^{[d}\xi_{\text{X}}^{b]}\right)\boldsymbol{\epsilon}_{dpqr}\nonumber \\
= & 2\sum_{\text{X}}\nabla_{a}\zeta_{\text{X}}^{b}\left(\boldsymbol{\mathcal{P}}_{\text{X}}\right)_{\ b}^{a}-2\sum_{\text{X}}\zeta_{\text{X}}^{a}\boldsymbol{\mathcal{Q}}_{a}^{\text{X}},
\end{align}
where we have used the property that $\mathscr{P}_{ab}^{\text{XY}}=\mathscr{P}_{ba}^{\text{YX}}$ (see Eq. \eqref{coefficients}) in the sixth step, and when pull back onto $\Sigma$,
\begin{align}
\left(\boldsymbol{\mathcal{P}}_{\text{X}}\right)_{\ b}^{a} & =\sum_{\text{Y}}\left(\mathscr{P}_{bc}^{\text{XY}}n_{\text{X}}^{a}-\mathscr{P}_{ec}^{\text{XY}}n_{\text{X}}^{e}\delta_{b}^{a}\right)n_{\text{Y}}^{[d}\xi_{\text{Y}}^{c]}\boldsymbol{\epsilon}_{dpqr}+n_{\text{X}}^{a}\xi_{\text{X}}^{c}\pi_{[c}^{\text{X}}\boldsymbol{\epsilon}_{b]pqr}-\delta_{b}^{a}\xi_{\text{X}}^{[c}n_{\text{X}}^{d]}\pi_{c}^{\text{X}}\boldsymbol{\epsilon}_{dpqr}+\xi_{\text{X}}^{a}n_{\text{X}}^{c}\pi_{[b}^{\text{X}}\boldsymbol{\epsilon}_{c]pqr}\nonumber \\
 & =\sum_{\text{Y}}\left(\mathscr{P}_{bc}^{\text{XY}}n_{\text{X}}^{a}-\mathscr{P}_{ec}^{\text{XY}}n_{\text{X}}^{e}\delta_{b}^{a}\right)n_{\text{Y}}^{[d}\xi_{\text{Y}}^{c]}\nu_{d}\hat{\boldsymbol{\epsilon}}+n_{\text{X}}^{a}\xi_{\text{X}}^{c}\pi_{[c}^{\text{X}}\nu_{b]}\hat{\boldsymbol{\epsilon}}-\delta_{b}^{a}\xi_{\text{X}}^{[c}n_{\text{X}}^{d]}\pi_{c}^{\text{X}}\nu_{d}\hat{\boldsymbol{\epsilon}}+\xi_{\text{X}}^{a}n_{\text{X}}^{c}\pi_{[b}^{\text{X}}\nu_{c]}\hat{\boldsymbol{\epsilon}},\nonumber \\
\boldsymbol{\mathcal{Q}}_{a}^{\text{X}} & =\sum_{\text{Y}}\left(\mathscr{P}_{bc}^{\text{XY}}\nabla_{a}n_{\text{X}}^{b}+\mathscr{P}_{c}^{\text{Ys}}\delta_{\text{e}}^{\text{X}}\nabla_{a}s\right)n_{\text{Y}}^{[d}\xi_{\text{Y}}^{c]}\boldsymbol{\epsilon}_{dpqr}+\pi_{b}^{\text{X}}\nabla_{a}\left(n_{\text{X}}^{[d}\xi_{\text{X}}^{b]}\right)\boldsymbol{\epsilon}_{dpqr}\nonumber \\
 & =\sum_{\text{Y}}\left(\mathscr{P}_{bc}^{\text{XY}}\nabla_{a}n_{\text{X}}^{b}+\mathscr{P}_{c}^{\text{Ys}}\delta_{\text{e}}^{\text{X}}\nabla_{a}s\right)n_{\text{Y}}^{[d}\xi_{\text{Y}}^{c]}\nu_{d}\hat{\boldsymbol{\epsilon}}+\pi_{b}^{\text{X}}\nabla_{a}\left(n_{\text{X}}^{[d}\xi_{\text{X}}^{b]}\right)\nu_{d}\hat{\boldsymbol{\epsilon}},\label{coefficient of normal derivative}
\end{align}
again, $\delta_{\text{e}}^{\text{X}}$ is the Kronecker symbol to determine whether the index $\text{X}$ is corresponding to electron. Furthermore, with the 3+1 decomposition in the coordinate given by Eq. \eqref{coordinate}, we can express
\begin{equation}
\delta_{2}g_{ab}=-\frac{2}{\alpha}\nu_{a}\nu_{b}\delta_{2}\alpha-\frac{2}{\alpha}\nu_{(a}\delta_{2}\beta_{b)}+\delta_{2}h_{ab},\quad\delta_{2}A_{a}=-\frac{1}{\alpha}\nu_{a}\delta_{2}\mathcal{A}+\delta_{a}^{i}\delta_{2}A_{i},
\end{equation}
whereby we obtain
\begin{align}
W\left(\delta_{1}\phi,\delta_{2}\phi\right)= & \int_{\Sigma}\left(\delta_{1}\boldsymbol{\pi}^{ij}\delta_{2}h_{ij}-\delta_{1}h_{ij}\delta_{2}\boldsymbol{\pi}^{ij}\right)+\int_{\Sigma}\left(\delta_{1}\boldsymbol{\Pi}^{i}\delta_{2}A_{i}-\delta_{1}A_{i}\delta_{2}\boldsymbol{\Pi}^{i}\right)\nonumber \\
 & -\int_{\Sigma}\sum_{\text{X}}2\left[\left(\mathscr{Q}^{\text{X}}\right)_{a}^{\ bc}\left(-\frac{2}{\alpha}\nu_{a}\nu_{b}\delta_{2}\alpha-\frac{2}{\alpha}\nu_{(a}\delta_{2}\beta_{b)}+\delta_{2}h_{ab}\right)+e^{\text{X}}\left(-\frac{1}{\alpha}\nu_{a}\delta_{2}\mathcal{A}+\delta_{a}^{i}\delta_{2}A_{i}\right)\right]n_{\text{X}}^{[d}\xi_{\text{X}}^{a]}\boldsymbol{\epsilon}_{dpqr}\nonumber \\
 & +\int_{\Sigma}\sum_{\text{X}}\left[\zeta_{\text{X}}^{a}\left(\delta_{1}\boldsymbol{P}_{a}^{\text{X}}+2\boldsymbol{\mathcal{Q}}_{a}^{\text{X}}\right)-2\nabla_{a}\zeta_{\text{X}}^{b}\left(\boldsymbol{\mathcal{P}}_{\text{X}}\right)_{\ b}^{a}\right].\label{integral of symplectic form}
\end{align}
\end{widetext}

As we are seeking the degeneracy directions of $W$ in the full field space, not merely in the solution space, so the field equations and the linearized constraints are not being imposed here, which means that the quantities $\delta_{2}\boldsymbol{\pi}^{ij}$, $\delta_{2}\alpha$, $\delta_{2}\beta_{i}$, $\delta_{2}h_{ij}$, $\delta_{2}\boldsymbol{\Pi}^{i}$, $\delta_{2}\mathcal{A}$, $\delta_{2}\mathcal{A}_{i}$, $\zeta_{\text{X}}^{a}$, and $\nabla_{a}\zeta_{\text{X}}^{b}$ on $\Sigma$ can be varied independently. Thus $\delta\phi\equiv\delta_{1}\phi$ is a degeneracy of $W$ if and only if the coefficients of these quantities in the integral of Eq. \eqref{integral of symplectic form} are each individually zero when pull back onto $\Sigma$, which give rise to the following equations
\begin{align}
\delta_{1}h_{ij} & =0,\label{equation of pi ij}\\
\sum_{\text{X}}\nu_{d}n_{\text{X}}^{[d}\xi_{\text{X}}^{a]}\left(\mathscr{Q}^{\text{X}}\right)_{a}^{\ bc}\nu_{b}\nu_{c} & =0,\label{equation of lapse}\\
\sum_{\text{X}}\nu_{d}n_{\text{X}}^{[d}\xi_{\text{X}}^{a]}\left(\mathscr{Q}^{\text{X}}\right)_{a}^{\ \left(bi\right)}\nu_{b} & =0,\label{equation of shift}\\
\delta_{1}\boldsymbol{\pi}^{ij}-2\sum_{\text{X}}\nu_{d}n_{\text{X}}^{[d}\xi_{\text{X}}^{a]}\left(\mathscr{Q}^{\text{X}}\right)_{a}^{\ \left(ij\right)}\hat{\boldsymbol{\epsilon}} & =0,\label{equation of hij}\\
\delta_{1}A_{i} & =0,\label{equation of pi i}\\
\sum_{\text{X}}e^{\text{X}}n_{\text{X}}^{[d}\xi_{\text{X}}^{a]}\nu_{a}\nu_{d} & =0,\label{equation of A0}\\
\delta_{1}\boldsymbol{\Pi}^{i}-2\sum_{\text{X}}e^{\text{X}}n_{\text{X}}^{[d}\xi_{\text{X}}^{i]}\nu_{d}\hat{\boldsymbol{\epsilon}} & =0,\label{equation of Ai}\\
\delta_{1}\boldsymbol{P}_{a}^{\text{X}}+2\boldsymbol{\mathcal{Q}}_{a}^{\text{X}} & =0,\label{equation of zeta}\\
\left(\boldsymbol{\mathcal{P}}_{\text{X}}\right)_{\ b}^{a} & =0.\label{equation of normal derivative}
\end{align}
Clearly, Eq. \eqref{equation of A0} is satisfied automatically.

Contracting the Eq. \eqref{equation of normal derivative} (which contains three equations) with $\nu_{a}$ and $\varphi^{b}$ gives that
\begin{equation}
n_{\text{X}}^{a}\nu_{a}\sum_{\text{Y}}\mathscr{P}_{bc}^{\text{XY}}\varphi^{b}n_{\text{Y}}^{[d}\xi_{\text{Y}}^{c]}\nu_{d}=0,
\end{equation}
by using the circular flow condition Eq. \eqref{circular flow condition} and the structure of Eq. \eqref{coefficients}, for vector $\theta^{a}$ such that $\theta^{a}\nu_{a}=\theta^{a}\varphi_{a}=0$, one has $\mathscr{P}_{ab}^{\text{XY}}\varphi^{a}\theta^{b}=0$, so that
\begin{align}
0 & =\frac{\varphi^{e}\varphi_{e}}{\left(u_{\text{e}}^{f}\varphi_{f}\right)^{2}}\sum_{\text{Y}}\mathscr{P}_{ac}^{\text{XY}}\varphi^{a}n_{\text{Y}}^{[d}\xi_{\text{Y}}^{c]}\nu_{d}\nonumber \\
 & =\frac{\varphi^{e}\varphi_{e}}{\left(u_{\text{e}}^{f}\varphi_{f}\right)^{2}}\sum_{\text{Y}}\mathscr{P}_{ab}^{\text{XY}}\varphi^{a}\delta_{c}^{b}n_{\text{Y}}^{[d}\xi_{\text{Y}}^{c]}\nu_{d}\nonumber \\
 & =\sum_{\text{Y}}\mathscr{P}_{ab}^{\text{XY}}\frac{\varphi^{a}\varphi^{b}}{\left(u_{\text{e}}^{f}\varphi_{f}\right)^{2}}n_{\text{Y}}^{[d}\xi_{\text{Y}}^{c]}\nu_{d}\varphi_{c},
\end{align}
or we can write them in a matrix form
\begin{equation}
\left(\begin{array}{ccc}
\mathscr{P}_{ab}^{\text{nn}}\frac{\varphi^{a}\varphi^{b}}{\left(u_{\text{e}}^{f}\varphi_{f}\right)^{2}} & \mathscr{P}_{ab}^{\text{np}}\frac{\varphi^{a}\varphi^{b}}{\left(u_{\text{e}}^{f}\varphi_{f}\right)^{2}} & \mathscr{P}_{ab}^{\text{ne}}\frac{\varphi^{a}\varphi^{b}}{\left(u_{\text{e}}^{f}\varphi_{f}\right)^{2}}\\
\mathscr{P}_{ab}^{\text{pn}}\frac{\varphi^{a}\varphi^{b}}{\left(u_{\text{e}}^{f}\varphi_{f}\right)^{2}} & \mathscr{P}_{ab}^{\text{pp}}\frac{\varphi^{a}\varphi^{b}}{\left(u_{\text{e}}^{f}\varphi_{f}\right)^{2}} & \mathscr{P}_{ab}^{\text{pe}}\frac{\varphi^{a}\varphi^{b}}{\left(u_{\text{e}}^{f}\varphi_{f}\right)^{2}}\\
\mathscr{P}_{ab}^{\text{en}}\frac{\varphi^{a}\varphi^{b}}{\left(u_{\text{e}}^{f}\varphi_{f}\right)^{2}} & \mathscr{P}_{ab}^{\text{ep}}\frac{\varphi^{a}\varphi^{b}}{\left(u_{\text{e}}^{f}\varphi_{f}\right)^{2}} & \mathscr{P}_{ab}^{\text{ee}}\frac{\varphi^{a}\varphi^{b}}{\left(u_{\text{e}}^{f}\varphi_{f}\right)^{2}}
\end{array}\right)\left(\begin{array}{c}
n_{\text{n}}^{[c}\xi_{\text{n}}^{d]}\nu_{c}\varphi_{d}\\
n_{\text{p}}^{[c}\xi_{\text{p}}^{d]}\nu_{c}\varphi_{d}\\
n_{\text{e}}^{[c}\xi_{\text{e}}^{d]}\nu_{c}\varphi_{d}
\end{array}\right)=0.\label{matrix equations}
\end{equation}
Note that with the vector $\gamma^{a}$ given in Eq. \eqref{gamma}, we have the following decomposition
\begin{equation}
\varphi^{a}=-u_{\text{e}}^{a}u_{\text{e}}^{b}\varphi_{b}+\gamma^{a}\gamma^{b}\varphi_{b},
\end{equation}
which indicates that
\begin{equation}
\frac{\varphi^{a}\varphi^{b}}{\left(u_{\text{e}}^{f}\varphi_{f}\right)^{2}}=\left(u_{\text{e}}^{a}-\frac{\gamma^{c}\varphi_{c}}{u_{\text{e}}^{f}\varphi_{f}}\gamma^{a}\right)\left(u_{\text{e}}^{b}-\frac{\gamma^{d}\varphi_{d}}{u_{\text{e}}^{f}\varphi_{f}}\gamma^{b}\right).
\end{equation}
Since
\begin{equation}
\left(\frac{\gamma^{a}\varphi_{a}}{u_{\text{e}}^{b}\varphi_{b}}\right)^{2}=\frac{q_{\text{e}}^{ab}\varphi_{a}\varphi_{b}}{\left(u_{\text{e}}^{c}\varphi_{c}\right)^{2}}=\frac{\varphi^{a}\varphi_{a}}{\left(u_{\text{e}}^{c}\varphi_{c}\right)^{2}}+1>1,
\end{equation}
comparing with the eigenvalue equation Eq. \eqref{eigenvalue equation} of the propagation speed, the causal condition implies that $v=\frac{\gamma^{a}\varphi_{a}}{u_{\text{e}}^{b}\varphi_{b}}$ cannot be a root, i.e.,
\begin{equation}
\det\left[\mathscr{P}_{ab}^{\text{XY}}\frac{\varphi^{a}\varphi^{b}}{\left(u_{\text{e}}^{f}\varphi_{f}\right)^{2}}\right]\neq0,
\end{equation}
whence the solutions to the Eq. \eqref{matrix equations} can only be
\begin{equation}
n_{\text{X}}^{[d}\xi_{\text{X}}^{c]}\nu_{d}\varphi_{c}=0.\label{solution to the matrix equation}
\end{equation}
The contraction of Eq. \eqref{equation of normal derivative} with $n_{\text{X}a}$ and $\theta^{b}$ gives that
\begin{equation}
0=\sum_{\text{Y}}\mathscr{P}_{bc}^{\text{XY}}\theta^{b}\xi_{\text{Y}}^{c}n_{\text{Y}}^{d}\nu_{d}=\sum_{\text{Y}}\mathbb{I}_{\text{XY}}\left(\xi_{\text{Y}}^{a}\theta_{a}n_{\text{Y}}^{b}\nu_{b}\right),\label{transverse matrix equation}
\end{equation}
where the inertia matrix Eq. \eqref{inertia matrix} is positive definite, so the solutions to Eq. \eqref{transverse matrix equation} is given by
\begin{equation}
\xi_{\text{X}}^{a}\theta_{a}=0,
\end{equation}
i.e., $\xi_{\text{X}}^{a}$ is the linear combination of $\nu^{a}$ and $\varphi^{a}$. Thus Eq. \eqref{solution to the matrix equation} amounts to say that the only choice is $\xi_{\text{X}}^{a}\propto u_{\text{X}}^{a}$. Substituting return to Eq. \eqref{equation of normal derivative}, one can easily see that $\xi_{\text{X}}^{a}\propto u_{\text{X}}^{a}$ is actually the solution. It immediately that Eqs. \eqref{equation of lapse} and \eqref{equation of shift} are satisfied, in addition, Eqs. \eqref{equation of hij} and \eqref{equation of Ai} yield $\delta_{1}\boldsymbol{\pi}^{ij}=0$ and $\delta_{1}\boldsymbol{\Pi}^{i}=0$ respectively.

Thus we have shown that besides Eq. \eqref{equation of zeta}, the other equations of degeneracy are equivalent to $\delta_{1}h_{ij}=0$, $\delta_{1}\boldsymbol{\pi}^{ij}=0$, $\delta_{1}A_{i}=0$, $\delta_{1}\boldsymbol{\Pi}^{i}=0$, and $\xi_{\text{X}}^{a}\propto u_{\text{X}}^{a}$. With these conditions, let 
\begin{equation}
\xi_{\text{X}}^{a}=U^{\text{X}}u_{\text{X}}^{a}+\tau\eta_{\text{X}}^{a},
\end{equation}
where $U^{\text{X}}$ is an arbitrary smooth function, $\Sigma$ is given by $\tau=0$ with the unit normal covector $\nu_{a}=-\alpha\left(d\tau\right)_{a}$ as in the coordinate Eq. \eqref{coordinate}. Since with the flowline trivial perturbation $\delta\phi=\left(0,0,U^{\text{X}}u_{\text{X}}^{a}\right)$ 
\begin{align}
\delta n_{\text{X}}^{a} & =-U^{\text{X}}u_{\text{X}}^{b}\nabla_{b}n_{\text{X}}^{a}+n_{\text{X}}^{b}\nabla_{b}\left(U^{\text{X}}u_{\text{X}}^{a}\right)-n_{\text{X}}^{a}\nabla_{b}\left(U^{\text{X}}u_{\text{X}}^{b}\right)\nonumber \\
 & =-U^{\text{X}}u_{\text{X}}^{a}\nabla_{b}n_{\text{X}}^{b}\nonumber \\
 & =0,
\end{align}
and
\begin{equation}
\delta s=-U^{\text{e}}u_{\text{e}}^{a}\nabla_{a}s=0,
\end{equation}
where we have used the conservation law Eq. \eqref{conservation of n=000026s}, then $\delta\mu_{a}^{\text{X}}=0$ and so that $\delta\boldsymbol{P}_{a}^{\text{X}}=0$. Denote $\delta_{1}^{\prime}\phi=\left(\delta_{1}g_{ab},\delta_{1}A_{a},\tau\eta_{\text{X}}^{a}\right)$ and consider Eq. \eqref{coefficient of normal derivative}, Eq. \eqref{equation of zeta} reads that
\begin{equation}
\delta_{1}^{\prime}\boldsymbol{P}_{a}^{\text{X}}-\frac{2}{\alpha}\pi_{b}^{\text{X}}n_{\text{X}}^{[d}\eta_{\text{X}}^{b]}\nu_{d}\nu_{a}\hat{\boldsymbol{\epsilon}}=0.\label{equation of zeta 2}
\end{equation}
According to Eq. \eqref{n variation}
\begin{equation}
\delta_{1}^{\prime}n_{\text{X}}^{a}=\frac{2}{\alpha}n_{\text{X}}^{[a}\eta_{\text{X}}^{b]}\nu_{b}-\frac{1}{2}n_{\text{X}}^{a}g^{bc}\delta_{1}g_{bc},
\end{equation}
and $\delta_{1}A_{i}=0$ implies that $\delta_{1}A_{a}=-\nu_{a}\nu^{b}\delta_{1}A_{b}$, then
\begin{align}
\delta_{1}^{\prime}\boldsymbol{P}_{a}^{\text{X}}= & 2\delta_{1}^{\prime}\left(\pi_{b}^{\text{X}}n_{\text{X}}^{[c}\delta_{a}^{b]}\boldsymbol{\epsilon}_{cpqr}\right)\nonumber \\
= & 2\delta_{1}^{\prime}\pi_{b}^{\text{X}}n_{\text{X}}^{[c}\delta_{a}^{b]}\nu_{c}\hat{\boldsymbol{\epsilon}}\nonumber \\
 & +\frac{2}{\alpha}\pi_{b}^{\text{X}}\left(n_{\text{X}}^{[c}\eta_{\text{X}}^{d]}\nu_{d}\delta_{a}^{b}-n_{\text{X}}^{[b}\eta_{\text{X}}^{d]}\nu_{d}\delta_{a}^{c}\right)\nu_{c}\hat{\boldsymbol{\epsilon}}\nonumber \\
= & \left(2\delta_{1}^{\prime}\mu_{b}^{\text{X}}n_{\text{X}}^{[c}\delta_{a}^{b]}-2e^{\text{X}}\delta_{1}A_{b}n_{\text{X}}^{[c}\delta_{a}^{b]}\right)\nu_{c}\hat{\boldsymbol{\epsilon}}\nonumber \\
 & -\frac{2}{\alpha}\pi_{b}^{\text{X}}n_{\text{X}}^{[b}\eta_{\text{X}}^{d]}\nu_{d}\delta_{a}^{c}\nu_{c}\hat{\boldsymbol{\epsilon}}\nonumber \\
= & \left(2\delta_{1}\mu_{d}^{\text{X}}\delta_{b}^{d}n_{\text{X}}^{[c}\delta_{a}^{b]}\nu_{c}-\frac{2}{\alpha}\pi_{b}^{\text{X}}n_{\text{X}}^{[b}\eta_{\text{X}}^{d]}\nu_{d}\nu_{a}\right)\hat{\boldsymbol{\epsilon}}\nonumber \\
= & \left(2\delta_{1}\mu_{d}^{\text{X}}h_{b}^{d}n_{\text{X}}^{[c}\delta_{a}^{b]}\nu_{c}-\frac{2}{\alpha}\pi_{b}^{\text{X}}n_{\text{X}}^{[b}\eta_{\text{X}}^{d]}\nu_{d}\nu_{a}\right)\hat{\boldsymbol{\epsilon}},
\end{align}
hence Eq. \eqref{equation of zeta 2} becomes
\begin{equation}
0=2\delta_{1}\mu_{d}^{\text{X}}h_{b}^{d}n_{\text{X}}^{[c}\delta_{a}^{b]}\nu_{c}=2\delta_{1}\left(\mu_{d}^{\text{X}}h^{di}\right)h_{ib}n_{\text{X}}^{[c}\delta_{a}^{b]}\nu_{c}.
\end{equation}
By contracting with $\nu^{a}$, we have that $\left(\delta_{1}\mu^{\text{X}i}\right)h_{ia}n_{\text{X}}^{a}=0$, whereby it further implies $\delta_{1}\mu^{\text{X}i}=0$.

As a conclusion, we find out that $\delta\phi$ is a degeneracy of $W$ if and only if the quantities $\left(\delta h_{ij},\delta\boldsymbol{\pi}^{ij},\delta A_{i},\delta\boldsymbol{\Pi}^{i},q_{\text{X}}^{ab}\xi_{\text{X}a},\delta_{1}\mu^{\text{X}i}\right)$ vanish on $\Sigma$. These quantities are the first order variations of the quantities $\left(h_{ij},\boldsymbol{\pi}^{ij},A_{i},\boldsymbol{\Pi}^{i},\sigma_{\text{X}},\mu^{\text{X}i}\right)$ on $\Sigma$, where $\sigma_{\text{X}}$ is a diffeomorphism from the space of fiducial flowlines to $\Sigma$ given at the beginning of this subsection, thus the phase space is described by the quantities $\left(h_{ij},\boldsymbol{\pi}^{ij},A_{i},\boldsymbol{\Pi}^{i},\sigma_{\text{X}},\mu^{\text{X}i}\right)$ on $\Sigma$.

However, the variables $\left(\sigma_{\text{X}},\mu^{\text{X}i}\right)$ are not canonically conjugate as the symplectic product Eq. \eqref{explicity presymplectic form} of two pure $\sigma_{\text{X}}$ perturbations (keeping $\mu^{\text{X}i}$ fixed) is not necessarily zero. To construct the canonically conjugate variables for our configuration, we like to choose the coordinates in $\mathcal{M}_{\text{X}}^{\prime}$ such that the fiducial flowlines are given by the integral curves of $\left(\partial_{x_{\text{X}}^{\prime0}}\right)^{a}$, then by $\delta x_{\text{X}}^{\prime\mu}=-\xi_{\text{X}}^{a}\partial_{a}x_{\text{X}}^{\prime\mu}$, we can write
\begin{equation}
\boldsymbol{\theta}^{\left(\text{M}\right)}=-\sum_{\text{X}}\delta x_{\text{X}}^{\prime\mu}\boldsymbol{P}_{\mu}^{\text{X}\prime}=-\sum_{\text{X}}\delta x_{\text{X}}^{\prime i}\boldsymbol{P}_{i}^{\text{X}\prime},
\end{equation}
where $\boldsymbol{P}_{\mu}^{\text{X}\prime}=\left(\partial_{x_{\text{X}}^{\prime\mu}}\right)^{a}\sigma_{\text{X}}^{*}\left(\boldsymbol{P}_{a}^{\text{X}}\right)$ with $\boldsymbol{P}_{0}^{\text{X}\prime}=\sigma_{\text{X}}^{*}\left(u_{\text{X}}^{a}\boldsymbol{P}_{a}^{\text{X}}\right)=0$.
Furthermore, we have
\begin{equation}
\boldsymbol{\omega}^{\left(\text{M}\right)}=-\sum_{\text{X}}\left(\delta_{2}x_{\text{X}}^{\prime i}\delta_{1}\boldsymbol{P}_{i}^{\text{X}\prime}-\delta_{1}x_{\text{X}}^{\prime i}\delta_{2}\boldsymbol{P}_{i}^{\text{X}\prime}\right),
\end{equation}
which tells us that the pairs of variables $\left(x_{\text{X}}^{\prime i},-\boldsymbol{P}_{i}^{\text{X}\prime}\right)$ can be regarded as the canonically conjugate variables for our configuration. Hence the symplectic form for our Einstein-superconducting-superfluid system can be cast into the following canonical way
\begin{equation}
W_{AB}\delta_{1}\phi^{A}\delta_{2}\phi^{B}=\int_{\Sigma}\left(\delta_{2}q^{\alpha}\delta_{1}p_{\alpha}-\delta_{1}q^{\alpha}\delta_{2}p_{\alpha}\right),\label{canonical symplectic form}
\end{equation}
with $q^{\alpha}=\left(h_{ij},A_{i},x_{\text{X}}^{\prime i}\right)$ and $p_{\alpha}=\left(\boldsymbol{\pi}^{ij},\boldsymbol{\Pi}^{i},-\boldsymbol{P}_{i}^{\text{X}\prime}\right)$, but the explicit form of above symplectic form is not needed here.

Using such canonically conjugate coordinates, we can define the Hilbert space structure $\mathcal{H}$ on perturbations by introducing the $L^{2}$ inner product 
\begin{equation}
\braket{\delta_{1}\phi,\delta_{2}\phi}=\int_{\Sigma}\sum_{\alpha}\left(\delta_{1}q^{\alpha}\cdot\delta_{2}q^{\alpha}+\delta_{1}p_{\alpha}\cdot\delta_{2}p_{\alpha}\right),\label{L2 inner product}
\end{equation}
where we use ``$\cdot$'' denotes the contraction of all tensor indices after using the background metric $h_{ab}$ on $\Sigma$ to raise and lower indices. Thus the elements of $\mathcal{H}$ are the square integrable tensor fields $\left(q^{\alpha},p_{\alpha}\right)$ on $\Sigma$. Note that the perturbations for which $\delta M\neq0$ fall off too slowly to be square integrable, but $\mathcal{H}$ contains all perturbations of interest for which $\delta M=0$.

By comparing Eqs. \eqref{canonical symplectic form} and \eqref{L2 inner product}, it can be seen that $W$ is a bounded quadratic form on $\mathcal{H}$ and hence corresponds to a bounded linear map $\hat{W}:\mathcal{H}\rightarrow\mathcal{H}$ given by 
\begin{equation}
\hat{W}\left(\delta q^{\alpha},\delta p_{\alpha}\right)=\left(-\delta p_{\alpha},\delta q^{\alpha}\right),
\end{equation}
where it is understood that any tensor indices on $\left(\delta q^{\alpha},\delta p_{\alpha}\right)$ are converted to the corresponding dual indices on the right side via raising and lowering with $h^{ab}$ and $h_{ab}$ and we have assumed $h=1$. Accordingly, the symplectic product can be written as following
\begin{equation}
W_{AB}\delta_{1}\phi^{A}\delta_{2}\phi^{B}=\braket{\delta_{1}\phi,\hat{W}\delta_{2}\phi},
\end{equation}
and immediately, $\hat{W}^{2}=-\text{id}$ and $\hat{W}^{\dagger}=-\hat{W}$, so, in particular, $\hat{W}$ is an orthogonal map. 

Let $\mathcal{S}$ be any subspace of $\mathcal{H}$, we define the $symplectic$ $complement$, $\mathcal{S}^{\perp_{s}}$, of $\mathcal{S}$ by
\begin{equation}
\mathcal{S}^{\perp_{s}}=\left\{ v\in\mathcal{H}\vert\braket{v,\hat{W}u}=0\text{, }\forall u\in\mathcal{S}\right\} .
\end{equation}
Clearly, we have $\mathcal{S}^{\perp_{s}}=\left(\hat{W}\left[\mathcal{S}\right]\right)^{\perp}$, where $\hat{W}\left[\mathcal{S}\right]$ denotes the image of $\mathcal{S}$ under $\hat{W}$ and ``$\perp$'' denotes the ordinary orthogonal complement in $\mathcal{H}$. Since $\hat{W}$ is orthogonal, then
\begin{align}
 & v\in\left(\hat{W}\left[\mathcal{S}\right]\right)^{\perp},\nonumber \\
\Longleftrightarrow & \braket{\hat{W}v,u}=\braket{v,\hat{W}u}=0\text{ for }u\in\mathcal{S},\nonumber \\
\Longleftrightarrow & \hat{W}v\in\mathcal{S}^{\perp},\nonumber \\
\Longleftrightarrow & v=-\hat{W}^{2}v\in\hat{W}\left[\mathcal{S}^{\perp}\right],
\end{align}
so that we have $\left(\hat{W}\left[\mathcal{S}\right]\right)^{\perp}=\hat{W}\left[\mathcal{S}^{\perp}\right]$. And we also have
\begin{align}
\left(\mathcal{S}^{\perp_{s}}\right)^{\perp_{s}} & =\left(\hat{W}\left[\mathcal{S}^{\perp}\right]\right)^{\perp_{s}}=\hat{W}\left[\left(\hat{W}\left[\mathcal{S}^{\perp}\right]\right)^{\perp}\right]\nonumber \\
 & =\hat{W}^{2}\left[\left(\mathcal{S}^{\perp}\right)^{\perp}\right]=\left(\mathcal{S}^{\perp}\right)^{\perp}=\overline{\mathcal{S}},\label{double symplectic complement}
\end{align}
where the bar denotes the closure in $\mathcal{H}$. Thus the double symplectic complement of any subspace is its closure. Since any subspace is dense in its closure, we shall not bother ourselves by saying that the double symplectic complement of any subspace is itself.

Now let $\phi$ satisfy the equations of motion and let $\mathcal{X}^{a}$ be smooth and of compact support. By the fundamental identity Eq. \eqref{fundamental identity}, we have for all $\delta\phi\in\mathcal{H}$, 
\begin{equation}
\braket{\delta\phi,\hat{W}\mathscr{L}_{\mathcal{X}}\phi}=\int_{\Sigma}X^{a}\delta\boldsymbol{C}_{a}.
\end{equation}
By definition, the right side vanishes if and only if $\delta\phi$ is a weak solution of the constraint equations, $\delta\boldsymbol{C}_{a}=0$. Thus if we take $\mathcal{G}$ to be subspace of $\mathcal{H}$ spanned by perturbations of the form $\mathscr{L}_{\mathcal{X}}\phi$, we see that $\mathcal{G}^{\perp_{s}}$ is precisely the subspace, $\mathcal{C}$, of weak solutions to the linearized constraints. Furthermore, by the general argument of the previous paragraph, we have $\mathcal{C}^{\perp_{s}}=\overline{\mathcal{G}}$. Another way of saying this is that if we restrict the action of the original quadratic form to $\mathcal{C}\times\mathcal{C}$, it becomes degenerate precisely on the gauge transformations $\mathscr{L}_{\mathcal{X}}\phi$.

\subsection{Trivial displacements}

\label{subsec:Trivial-displacements}

As stated in Sec. \ref{subsec2.2}, the \emph{trivial displacement} $\eta_{\text{X}}^{a}$ is the displacement satisfying that 
\begin{align}
0 & =\delta s=-\mathscr{L}_{\eta_{\text{e}}}s,\nonumber \\
0 & =\delta\boldsymbol{N}^{\text{X}}=-\mathscr{L}_{\eta_{\text{X}}}\boldsymbol{N}^{\text{X}}.\label{trivial perturbation condition}
\end{align}
and the corresponding trivial perturbation is given by $\delta_{t}\phi=\left(0,0,\eta_{\text{X}}^{a}\right)$. We first give the general form of a trivial displacement. Since $\iota_{u_{\text{X}}}\boldsymbol{N}^{\text{X}}=0$, any vector field $\eta_{\text{X}}^{a}$ inside the star can be uniquely decomposed as
\begin{equation}
\eta_{\text{X}}^{a}=U^{\text{X}}u_{\text{X}}^{a}+\frac{1}{n_{\text{X}}^{2}}N^{\text{X}abc}H_{bc}^{\text{X}},
\end{equation}
where $U^{\text{X}}$ is an arbitrary function, $\boldsymbol{H}^{\text{X}}$ is an arbitrary 2-form satisfying $\iota_{u_{\text{X}}}\boldsymbol{H}^{\text{X}}=0$, and the necessary and sufficient condition for $\boldsymbol{H}^{\text{X}}$ is given by
\begin{equation}
0=\mathscr{L}_{\eta_{\text{X}}}\boldsymbol{N}^{\text{X}}=d\left(\iota_{\eta_{\text{X}}}\boldsymbol{N}^{\text{X}}\right)=2d\boldsymbol{H}^{\text{X}},
\end{equation}
where we have used that $N^{\text{X}abc}N_{ade}^{\text{X}}=2n_{\text{X}}^{2}q_{\text{X}d}^{[b}q_{\text{X}e}^{c]}$. It follows immediately that 
\begin{equation}
\mathscr{L}_{u_{\text{X}}}\boldsymbol{H}^{\text{X}}=d\left(\iota_{u_{\text{X}}}\boldsymbol{H}^{\text{X}}\right)+\iota_{u_{\text{X}}}d\boldsymbol{H}^{\text{X}}=0,
\end{equation}
so $\boldsymbol{H}^{\text{X}}$ may be viewed as a 2-form on the manifold of orbits of $u_{\text{X}}^{a}$. Assuming that there are no holes in our star, then $d\boldsymbol{H}^{\text{X}}=0$ implies that
\begin{equation}
\boldsymbol{H}^{\text{X}}=d\boldsymbol{Z}^{\text{X}},
\end{equation}
where $\boldsymbol{Z}^{\text{X}}$ is an arbitrary 1-form on the manifold of $u_{\text{X}}^{a}$-orbits, or, equivalently, $\boldsymbol{Z}^{\text{X}}$ is a 1-form on spacetime satisfying 
\begin{equation}
\iota_{u_{\text{X}}}\boldsymbol{Z}^{\text{X}}=0,\quad\mathscr{L}_{u_{\text{X}}}\boldsymbol{Z}^{\text{X}}=0.\label{condition of Z}
\end{equation}
Thus the necessary and sufficient condition for $\eta_{\text{X}}^{a}$ to satisfy $\mathscr{L}_{\eta_{\text{X}}}\boldsymbol{N}^{\text{X}}=0$ is that it be of the form
\begin{equation}
\eta_{\text{X}}^{a}=U^{\text{X}}u_{\text{X}}^{a}+\frac{1}{n_{\text{X}}^{2}}N^{\text{X}abc}\nabla_{b}Z_{c}^{\text{X}}.\label{general trivial displacement}
\end{equation}
Since $u_{\text{e}}^{a}\nabla_{a}s=0$, the necessary and sufficient condition for $\eta_{\text{e}}^{a}$ to also satisfy $\eta_{\text{e}}^{a}\nabla_{a}s=0$ is
\begin{equation}
\nabla_{[a}s\nabla_{b}Z_{c]}^{\text{e}}=0.\label{condition of ZS}
\end{equation}
As a conclusion, $\eta_{\text{X}}^{a}$ are trivial displacements if and only if they are of the form Eq. \eqref{general trivial displacement} and satisfy Eqs. \eqref{condition of Z} and \eqref{condition of ZS}.

Now let us compute the symplectic product of a trivial perturbation $\delta_{t}\phi=\left(0,0,\eta_{\text{X}}^{a}\right)$ with an arbitrary perturbation. First of all, consider the flowline trivial perturbation $\delta_{ft}\phi=\left(0,0,U^{\text{X}}u_{\text{X}}^{a}\right)$. As mentioned in previous subsection, one has $\delta_{ft}\boldsymbol{P}_{a}^{\text{X}}=0$, and for an arbitrary perturbation $\delta\phi=\left(\delta g_{ab},\delta A_{a},\xi_{\text{X}}^{a}\right)$, since $\delta\left(u_{\text{X}}^{a}\boldsymbol{P}_{a}^{\text{X}}\right)=0$ and $\left(\delta+\mathscr{L}_{\xi_{\text{X}}}\right)u_{\text{X}}^{a}\propto u_{\text{X}}^{a}$ by Eq. \eqref{u variation}, then
\begin{align}
W\left(\delta\phi,\delta_{ft}\phi\right)= & \int_{\Sigma}\sum_{\text{X}}\left(U^{\text{X}}u_{\text{X}}^{a}\delta\boldsymbol{P}_{a}^{\text{X}}-\left[\xi_{\text{X}},U^{\text{X}}u_{\text{X}}\right]^{a}\boldsymbol{P}_{a}^{\text{X}}\right)\nonumber \\
= & -\int_{\Sigma}\sum_{\text{X}}U^{\text{X}}\boldsymbol{P}_{a}^{\text{X}}\left(\delta+\mathscr{L}_{\xi_{\text{X}}}\right)u_{\text{X}}^{a}\nonumber \\
\propto & \int_{\Sigma}\sum_{\text{X}}U^{\text{X}}\boldsymbol{P}_{a}^{\text{X}}u_{\text{X}}^{a}\nonumber \\
= & 0,
\end{align}
i.e., all of the flowline trivial perturbations are degeneracy of $W$. Immediately, these flowline trivial perturbations are factored out by our construction of phase space. However, the trivial perturbations $\delta_{t}\phi=\left(0,0,\tilde{\eta}_{\text{X}}^{a}\right)$ with displacements of the form
\begin{equation}
\tilde{\eta}_{\text{X}}^{a}=\frac{1}{n_{\text{X}}^{2}}N^{\text{X}abc}\nabla_{b}Z_{c}^{\text{X}},\label{general nontrivial trivial displacement}
\end{equation}
may not be the degeneracy of $W$. Indeed, since 
\begin{equation}
\delta_{t}n_{\text{X}}^{a}=-\frac{1}{3!}\delta_{t}\left(\boldsymbol{N}_{bcd}^{\text{X}}\boldsymbol{\epsilon}^{abcd}\right)=0,
\end{equation}
then it is clearly that $\delta_{t}\boldsymbol{P}_{a}^{\text{X}}=0$, so that
\begin{align}
 & W\left(\delta\phi,\delta_{t}\phi\right)\nonumber \\
= & \int_{\Sigma}\sum_{\text{X}}\left(\tilde{\eta}_{\text{X}}^{a}\delta\boldsymbol{P}_{a}^{\text{X}}-\left[\xi_{\text{X}},\tilde{\eta}_{\text{X}}\right]^{a}\boldsymbol{P}_{a}^{\text{X}}\right)\nonumber \\
= & \int_{\Sigma}\sum_{\text{X}}\left(\frac{1}{n_{\text{X}}^{2}}N^{\text{X}abc}\nabla_{b}Z_{c}^{\text{X}}\delta\boldsymbol{P}_{a}^{\text{X}}-\mathscr{L}_{\xi_{\text{X}}}\tilde{\eta}_{\text{X}}^{a}\boldsymbol{P}_{a}^{\text{X}}\right)\nonumber \\
= & \int_{\Sigma}\sum_{\text{X}}\nabla_{b}Z_{c}^{\text{X}}\left[\delta\left(\frac{1}{n_{\text{X}}^{2}}N^{\text{X}abc}\boldsymbol{P}_{a}^{\text{X}}\right)-\boldsymbol{P}_{a}^{\text{X}}\delta\left(\frac{1}{n_{\text{X}}^{2}}N^{\text{X}abc}\right)\right]\nonumber \\
 & -\int_{\Sigma}\sum_{\text{X}}\mathscr{L}_{\xi_{\text{X}}}\tilde{\eta}_{\text{X}}^{a}\boldsymbol{P}_{a}^{\text{X}}.
\end{align}
Note that $u_{\text{X}}^{a}\boldsymbol{P}_{[a}^{\text{X}}\nabla_{b}Z_{c]}^{\text{X}}=0$, $\boldsymbol{P}_{[a}^{\text{X}}\nabla_{b}Z_{c}^{\text{X}}u_{\text{X}d]}\propto\boldsymbol{\epsilon}_{abcd}$ as a top form, and $N^{\text{X}abc}N_{abc}^{\text{X}}=6n_{\text{X}}^{2}$, whence 
\begin{equation}
\boldsymbol{P}_{[a}^{\text{X}}\nabla_{b}Z_{c]}^{\text{X}}=\frac{1}{6n_{\text{X}}^{2}}\boldsymbol{P}_{[d}^{\text{X}}\nabla_{e}Z_{f]}^{\text{X}}N^{\text{X}def}N_{abc}^{\text{X}}.
\end{equation}
Accordingly, we have
\begin{widetext}
\begin{align}
W\left(\delta\phi,\delta_{t}\phi\right) & =\int_{\Sigma}\sum_{\text{X}}\left[\nabla_{b}Z_{c}^{\text{X}}\delta\left(\frac{1}{n_{\text{X}}^{2}}N^{\text{X}abc}\boldsymbol{P}_{a}^{\text{X}}\right)-\frac{1}{6n_{\text{X}}^{2}}\boldsymbol{P}_{[d}^{\text{X}}\nabla_{e}Z_{f]}^{\text{X}}N^{\text{X}def}N_{abc}^{\text{X}}\delta\left(\frac{1}{n_{\text{X}}^{2}}N^{\text{X}abc}\right)-\mathscr{L}_{\xi_{\text{X}}}\tilde{\eta}_{\text{X}}^{a}\boldsymbol{P}_{a}^{\text{X}}\right]\nonumber \\
 & =\int_{\Sigma}\sum_{\text{X}}\left[\nabla_{b}Z_{c}^{\text{X}}\delta\left(\frac{1}{n_{\text{X}}^{2}}N^{\text{X}abc}\boldsymbol{P}_{a}^{\text{X}}\right)+\frac{1}{6n_{\text{X}}^{2}}\boldsymbol{P}_{[d}^{\text{X}}\nabla_{e}Z_{f]}^{\text{X}}N^{\text{X}def}\frac{1}{n_{\text{X}}^{2}}N^{\text{X}abc}\delta N_{abc}^{\text{X}}-\mathscr{L}_{\xi_{\text{X}}}\tilde{\eta}_{\text{X}}^{a}\boldsymbol{P}_{a}^{\text{X}}\right]\nonumber \\
 & =\int_{\Sigma}\sum_{\text{X}}\left[\nabla_{b}Z_{c}^{\text{X}}\delta\left(\frac{1}{n_{\text{X}}^{2}}N^{\text{X}abc}\boldsymbol{P}_{a}^{\text{X}}\right)-\frac{1}{6n_{\text{X}}^{2}}\boldsymbol{P}_{[d}^{\text{X}}\nabla_{e}Z_{f]}^{\text{X}}N^{\text{X}def}\frac{1}{n_{\text{X}}^{2}}N^{\text{X}abc}\mathscr{L}_{\xi_{\text{X}}}N_{abc}^{\text{X}}-\mathscr{L}_{\xi_{\text{X}}}\tilde{\eta}_{\text{X}}^{a}\boldsymbol{P}_{a}^{\text{X}}\right]\nonumber \\
 & =\int_{\Sigma}\sum_{\text{X}}\left[\nabla_{b}Z_{c}^{\text{X}}\delta\left(\frac{1}{n_{\text{X}}^{2}}N^{\text{X}abc}\boldsymbol{P}_{a}^{\text{X}}\right)+\frac{1}{6n_{\text{X}}^{2}}\boldsymbol{P}_{[d}^{\text{X}}\nabla_{e}Z_{f]}^{\text{X}}N^{\text{X}def}N_{abc}^{\text{X}}\mathscr{L}_{\xi_{\text{X}}}\left(\frac{1}{n_{\text{X}}^{2}}N^{\text{X}abc}\right)-\mathscr{L}_{\xi_{\text{X}}}\tilde{\eta}_{\text{X}}^{a}\boldsymbol{P}_{a}^{\text{X}}\right]\nonumber \\
 & =\int_{\Sigma}\sum_{\text{X}}\left[\nabla_{b}Z_{c}^{\text{X}}\delta\left(\frac{1}{n_{\text{X}}^{2}}N^{\text{X}abc}\boldsymbol{P}_{a}^{\text{X}}\right)+\boldsymbol{P}_{a}^{\text{X}}\nabla_{b}Z_{c}^{\text{X}}\mathscr{L}_{\xi_{\text{X}}}\left(\frac{1}{n_{\text{X}}^{2}}N^{\text{X}abc}\right)-\mathscr{L}_{\xi_{\text{X}}}\left(\frac{1}{n_{\text{X}}^{2}}N^{\text{X}abc}\nabla_{b}Z_{c}^{\text{X}}\right)\boldsymbol{P}_{a}^{\text{X}}\right]\nonumber \\
 & =\int_{\Sigma}\sum_{\text{X}}\left[\nabla_{b}Z_{c}^{\text{X}}\delta\left(\frac{1}{n_{\text{X}}^{2}}N^{\text{X}abc}\boldsymbol{P}_{a}^{\text{X}}\right)-\frac{1}{n_{\text{X}}^{2}}N^{\text{X}abc}\boldsymbol{P}_{a}^{\text{X}}\mathscr{L}_{\xi_{\text{X}}}\left(\nabla_{b}Z_{c}^{\text{X}}\right)\right]\nonumber \\
 & =\int_{\Sigma}\sum_{\text{X}}\left[\nabla_{b}Z_{c}^{\text{X}}\delta\left(\frac{1}{n_{\text{X}}^{2}}\pi_{a}^{\text{X}}n_{\text{X}}^{[d}N^{\text{X}a]bc}\boldsymbol{\epsilon}_{dpqr}\right)-\frac{1}{n_{\text{X}}^{2}}\pi_{a}^{\text{X}}n_{\text{X}}^{[d}N^{\text{X}a]bc}\boldsymbol{\epsilon}_{dpqr}\mathscr{L}_{\xi_{\text{X}}}\left(\nabla_{b}Z_{c}^{\text{X}}\right)\right]\nonumber \\
 & =6\int_{\Sigma}\sum_{\text{X}}\left[\nabla_{[b}Z_{c]}^{\text{X}}\delta\left(\pi_{[r}^{\text{X}}q_{\text{X}p}^{b}q_{\text{X}q]}^{c}\right)-\pi_{[r}^{\text{X}}q_{\text{X}p}^{b}q_{\text{X}q]}^{c}\mathscr{L}_{\xi_{\text{X}}}\left(\nabla_{[b}Z_{c]}^{\text{X}}\right)\right].\label{symplectic product of trivial}
\end{align}
Also, according to Eq. \eqref{condition of Z}, we have $\mathscr{L}_{\xi_{\text{X}}}\left(u_{\text{X}}^{b}\nabla_{[b}Z_{c]}^{\text{X}}\right)=0$, which implies that
\begin{align}
 & \nabla_{[b}Z_{c]}^{\text{X}}\pi_{r}^{\text{X}}\delta\left(q_{\text{X}p}^{b}q_{\text{X}q}^{c}\right)-\pi_{r}^{\text{X}}q_{\text{X}p}^{b}q_{\text{X}q}^{c}\mathscr{L}_{\xi_{\text{X}}}\left(\nabla_{[b}Z_{c]}^{\text{X}}\right)\nonumber \\
= & \nabla_{[b}Z_{c]}^{\text{X}}\pi_{r}^{\text{X}}\delta\left(\delta_{p}^{b}u_{\text{X}}^{c}u_{\text{X}q}+u_{\text{X}}^{b}u_{\text{X}p}\delta_{q}^{c}\right)-\pi_{r}^{\text{X}}\left(\delta_{p}^{b}u_{\text{X}}^{c}u_{\text{X}q}+u_{\text{X}}^{b}u_{\text{X}p}\delta_{q}^{c}\right)\mathscr{L}_{\xi_{\text{X}}}\left(\nabla_{[b}Z_{c]}^{\text{X}}\right)-\pi_{r}^{\text{X}}\mathscr{L}_{\xi_{\text{X}}}\left(\nabla_{[p}Z_{q]}^{\text{X}}\right)\nonumber \\
= & \nabla_{[b}Z_{c]}^{\text{X}}\pi_{r}^{\text{X}}\left(\delta_{p}^{b}\delta u_{\text{X}}^{c}u_{\text{X}q}+\delta u_{\text{X}}^{b}u_{\text{X}p}\delta_{q}^{c}\right)+\nabla_{[b}Z_{c]}^{\text{X}}\pi_{r}^{\text{X}}\left(\delta_{p}^{b}\mathscr{L}_{\xi_{\text{X}}}u_{\text{X}}^{c}u_{\text{X}q}+\mathscr{L}_{\xi_{\text{X}}}u_{\text{X}}^{b}u_{\text{X}p}\delta_{q}^{c}\right)-\pi_{r}^{\text{X}}\mathscr{L}_{\xi_{\text{X}}}\left(\nabla_{[p}Z_{q]}^{\text{X}}\right)\nonumber \\
= & \nabla_{[b}Z_{c]}^{\text{X}}\pi_{r}^{\text{X}}\left[\delta_{p}^{b}u_{\text{X}q}\left(\delta+\mathscr{L}_{\xi_{\text{X}}}\right)u_{\text{X}}^{c}+\delta_{q}^{c}u_{\text{X}p}\left(\delta+\mathscr{L}_{\xi_{\text{X}}}\right)u_{\text{X}}^{b}\right]-\pi_{r}^{\text{X}}\mathscr{L}_{\xi_{\text{X}}}\left(\nabla_{[p}Z_{q]}^{\text{X}}\right)\nonumber \\
= & -\pi_{r}^{\text{X}}\mathscr{L}_{\xi_{\text{X}}}\left(\nabla_{[p}Z_{q]}^{\text{X}}\right).\label{Z calculation}
\end{align}
Substitute Eq. \eqref{Z calculation} into Eq. \eqref{symplectic product of trivial}, we get that
\begin{align}
W\left(\delta\phi,\delta_{t}\phi\right) & =6\int_{\Sigma}\sum_{\text{X}}\left[\nabla_{[b}Z_{c]}^{\text{X}}q_{\text{X[}p}^{b}q_{\text{X}q}^{c}\delta\pi_{r]}^{\text{X}}-\pi_{[r}^{\text{X}}\mathscr{L}_{\xi_{\text{X}}}\left(\nabla_{p}Z_{q]}^{\text{X}}\right)\right]\nonumber \\
 & =6\int_{\Sigma}\sum_{\text{X}}\left[\nabla_{[p}Z_{q}^{\text{X}}\delta\pi_{r]}^{\text{X}}+\nabla_{[p}Z_{q}^{\text{X}}\mathscr{L}_{\xi_{\text{X}}}\pi_{r]}^{\text{X}}-\mathscr{L}_{\xi_{\text{X}}}\left(\pi_{[r}^{\text{X}}\nabla_{p}Z_{q]}^{\text{X}}\right)\right]\nonumber \\
 & =\int_{\Sigma}\sum_{\text{X}}\left[d\boldsymbol{Z}^{\text{X}}\wedge\left(\delta+\mathscr{L}_{\xi_{\text{X}}}\right)\boldsymbol{\pi}^{\text{X}}-\left(d\iota_{\xi_{\text{X}}}+\iota_{\xi_{\text{X}}}d\right)\left(d\boldsymbol{Z}^{\text{X}}\wedge\boldsymbol{\pi}^{\text{X}}\right)\right]\nonumber \\
 & =\int_{\Sigma}\sum_{\text{X}}\left[\boldsymbol{Z}^{\text{X}}\wedge d\left(\delta+\mathscr{L}_{\xi_{\text{X}}}\right)\boldsymbol{\pi}^{\text{X}}-\iota_{\xi_{\text{X}}}d\left(d\boldsymbol{Z}^{\text{X}}\wedge\boldsymbol{\pi}^{\text{X}}\right)\right]\nonumber \\
 & =\int_{\Sigma}\sum_{\text{X}}\left[\boldsymbol{Z}^{\text{X}}\wedge\left(\delta+\mathscr{L}_{\xi_{\text{X}}}\right)\boldsymbol{w}^{\text{X}}-\iota_{\xi_{\text{X}}}\left(d\boldsymbol{Z}^{\text{X}}\wedge\boldsymbol{w}^{\text{X}}\right)\right],
\end{align}
where we have used the fact that the Lie derivative commutes with the exterior derivative, and $\boldsymbol{w}^{\text{X}}=d\boldsymbol{\pi}^{\text{X}}$ is the vorticity tensor.
\end{widetext}
Since the equations of motion Eqs. \eqref{eomn}-\eqref{eome} indicates that
\begin{equation}
\iota_{u_{\Upsilon}}\left(d\boldsymbol{Z}^{\Upsilon}\wedge\boldsymbol{w}^{\Upsilon}\right)=\iota_{u_{\Upsilon}}\boldsymbol{w}^{\Upsilon}\wedge d\boldsymbol{Z}^{\Upsilon}=0,
\end{equation}
and
\begin{equation}
\iota_{u_{\text{e}}}\left(d\boldsymbol{Z}^{\text{e}}\wedge\boldsymbol{w}^{\text{e}}\right)=\iota_{u_{\text{e}}}\boldsymbol{w}^{\text{e}}\wedge d\boldsymbol{Z}^{\text{e}}=Tds\wedge d\boldsymbol{Z}^{\text{e}}=0,
\end{equation}
where $\Upsilon=\text{n},\text{p}$ are the indices of ``super'' constituents, and we have used Eq. \eqref{condition of ZS} in the last step. Then as a top form, $d\boldsymbol{Z}^{\text{X}}\wedge\boldsymbol{w}^{\text{X}}=0$, which implies that the symplectic product of a trivial perturbation $\delta_{t}\phi$ with an arbitrary perturbation $\delta\phi$ is given by
\begin{equation}
W\left(\delta\phi,\delta_{t}\phi\right)=\int_{\Sigma}\sum_{\text{X}}\boldsymbol{Z}^{\text{X}}\wedge\left(\delta+\mathscr{L}_{\xi_{\text{X}}}\right)\boldsymbol{w}^{\text{X}}.\label{symplectic product with trivial perturbation:}
\end{equation}
Thus we can see that a sufficient condition for symplectic orthogonality to all trivial perturbations is vanishing ``Lagrangian-like'' change $\Delta_{\text{X}}\boldsymbol{w}^{\text{X}}=\left(\delta+\mathscr{L}_{\xi_{\text{X}}}\right)\boldsymbol{w}^{\text{X}}=0$.

Next, let us consider the case of axisymmetric trivial perturbations. It is evident from Eq. \eqref{trivial perturbation condition} that
\begin{equation}
\eta_{\text{X}}^{a}=U^{\text{X}}\varphi^{a},\label{axisymmetric tirvial displacement}
\end{equation}
is a trivial displacement for any axisymmetric function $U^{\text{X}}$ satisfying $\mathscr{L}_{u_{\text{X}}}U^{\text{X}}=0$. Although a general axisymmetric trivial displacement $\tilde{\eta}_{\text{X}}^{a}$ is still of the form Eq. \eqref{general nontrivial trivial displacement} with $\mathscr{L}_{\varphi}\boldsymbol{Z}^{\text{X}}=0$, the time derivative $\mathscr{L}_{t}\tilde{\eta}_{\text{X}}^{a}$ is always a trivial of the form Eq. \eqref{axisymmetric tirvial displacement}. To see this, taking the Lie derivative with respect to the timelike Killing vector $t^{a}$ and using the circular flow condition Eq. \eqref{circular flow condition}
\begin{align}
\mathscr{L}_{t}\tilde{\eta}_{\text{X}}^{a} & =\frac{1}{n_{\text{X}}^{2}}N^{\text{X}abc}\mathscr{L}_{t}\left(\nabla_{[b}Z_{c]}^{\text{X}}\right)\nonumber \\
 & =\frac{1}{n_{\text{X}}^{2}}N^{\text{X}abc}\mathscr{L}_{\vert v_{\text{X}}\vert u_{\text{X}}-\Omega_{\text{X}}\varphi}\left(d\boldsymbol{Z}^{\text{X}}\right)_{bc}\nonumber \\
 & =\frac{1}{n_{\text{X}}^{2}}N^{\text{X}abc}\left[d\left(\vert v_{\text{X}}\vert\iota_{u_{\text{X}}}d\boldsymbol{Z}^{\text{X}}-\Omega_{\text{X}}\iota_{\varphi}d\boldsymbol{Z}^{\text{X}}\right)\right]_{bc}\nonumber \\
 & =-\frac{2}{n_{\text{X}}^{2}}N^{\text{X}abc}\nabla_{b}\Omega_{\text{X}}\varphi^{d}\nabla_{[d}Z_{c]}^{\text{X}},
\end{align}
Since the contractions of both $\nabla_{b}\Omega_{\text{X}}$ and $\varphi^{d}\nabla_{[d}Z_{b]}^{\text{X}}$ with $\varphi^{b}$ vanish, it follows that $\mathscr{L}_{t}\tilde{\eta}_{\text{X}}^{a}$ must be proportional to $\varphi^{a}$, which establishes our claim.

In parallel to general case, the symplectic product of an axisymmetric trivial perturbation $\delta_{at}\phi=\left(0,0,\eta_{\text{X}}^{a}\right)$ of the form Eq. \eqref{axisymmetric tirvial displacement} with an arbitrary axisymmetric perturbation $\delta\phi$ is 
\begin{align}
 & W\left(\delta\phi,\delta_{at}\phi\right)\nonumber \\
= & \int_{\Sigma}\sum_{\text{X}}\left(U^{\text{X}}\varphi^{a}\delta\boldsymbol{P}_{a}^{\text{X}}-\left[\xi_{\text{X}},U^{\text{X}}\varphi\right]^{a}\boldsymbol{P}_{a}^{\text{X}}\right)\nonumber \\
= & \int_{\Sigma}\sum_{\text{X}}\left[U^{\text{X}}\delta\left(\varphi^{a}\boldsymbol{P}_{a}^{\text{X}}\right)-\mathscr{L}_{\xi_{\text{X}}}U^{\text{X}}\varphi^{a}\boldsymbol{P}_{a}^{\text{X}}\right]\nonumber \\
= & \int_{\Sigma}\sum_{\text{X}}\left(U^{\text{X}}\delta\boldsymbol{J}^{\text{X}}-\mathscr{L}_{\xi_{\text{X}}}U^{\text{X}}\boldsymbol{J}^{\text{X}}\right)\nonumber \\
= & \int_{\Sigma}\sum_{\text{X}}\left[U^{\text{X}}\left(\delta+\mathscr{L}_{\xi_{\text{X}}}\right)\boldsymbol{J}^{\text{X}}-\mathscr{L}_{\xi_{\text{X}}}\left(U^{\text{X}}\boldsymbol{J}^{\text{X}}\right)\right]\nonumber \\
= & \int_{\Sigma}\sum_{\text{X}}\left[U^{\text{X}}\Delta_{\text{X}}\boldsymbol{J}^{\text{X}}-\iota_{\xi_{\text{X}}}\left(dU^{\text{X}}\wedge\boldsymbol{J}^{\text{X}}\right)\right],
\end{align}
where we have used the axisymmetric condition $\mathscr{L}_{\varphi}\xi_{\text{X}}^{a}=0$ for axisymmetric perturbation $\delta\phi$ in the second step, the fact that the pullback of $\varphi^{a}\boldsymbol{\epsilon}_{abcd}$ onto the axisymmetric Cauchy surface $\Sigma$ vanishes, and $\boldsymbol{J}^{\text{X}}$ is the closed dual form of the conserved current $J_{\text{X}}^{a}$ given in Eq. \eqref{angular momentum current}. Since $dU^{\text{X}}\wedge\boldsymbol{J}^{\text{X}}$ is a top form and
\begin{equation}
\iota_{u_{\text{X}}}\left(dU^{\text{X}}\wedge\boldsymbol{J}^{\text{X}}\right)=\left(\iota_{u_{\text{X}}}dU^{\text{X}}\right)\boldsymbol{J}^{\text{X}}=\left(\mathscr{L}_{u_{\text{X}}}U^{\text{X}}\right)\boldsymbol{J}^{\text{X}}=0,
\end{equation}
so that $dU\wedge\boldsymbol{J}^{\text{X}}=0$. Thus we see that the symplectic product of axisymmetric trivial perturbation is 
\begin{equation}
W\left(\delta\phi,\delta_{at}\phi\right)=\int_{\Sigma}\sum_{\text{X}}U^{\text{X}}\Delta_{\text{X}}\boldsymbol{J}^{\text{X}},
\end{equation}
whence in the axisymmetric case, the necessary and sufficient condition for symplectic orthogonality to the trivial perturbations of the form $\eta^{a}=U^{\text{X}}\varphi^{a}$ is
\begin{equation}
\Delta_{\text{X}}\boldsymbol{J}^{\text{X}}=\left(\delta+\mathscr{L}_{\xi_{\text{X}}}\right)\boldsymbol{J}^{\text{X}}=0.\label{restriction in axisymmetric case}
\end{equation}

\section{Canonical energy and dynamic stability}

\label{sec6}

The dynamic stability we are concerned with is mode stability. That is to say, our superconducting-superfluid star in dynamic equilibrium is mode stable if there does not exists any non-pure-gauge linearized solution with the time dependence of the form $e^{kt}$ with $\text{Re}\left(k\right)>0$. Otherwise, it is said to be mode unstable. Rather than a complete analysis of linearized perturbation equations, one favorable way of proving mode stability is to construct a positive definite conserved norm on the space $\mathcal{C}$ of linearized on-shell perturbations, because it precludes those perturbations with exponential growth. A candidate is the canonical energy $\mathcal{E}$.

The \emph{canonical energy} $\mathcal{E}$ associated with the background timelike Killing vector $t^{a}$ is a bilinear form on the space $\mathcal{C}$ of linear on-shell perturbations defined as
\begin{equation}
\mathcal{E}\left(\delta_{1}\phi,\delta_{2}\phi\right)=W\left(\delta_{1}\phi,\mathscr{L}_{t}\delta_{2}\phi\right).
\end{equation}
It is easy to show that not only is the canonical energy symmetric and conserved. Indeed, the symmetry comes from that
\begin{align}
     &\mathcal{E}\left(\delta_1\phi,\delta_2\phi\right)-\mathcal{E}\left(\delta_2\phi,\delta_1\phi\right)\nonumber\\
    =&W\left(\delta_1\phi,\mathscr{L}_t\delta_2\phi\right)+W\left(\mathscr{L}_t\delta_1\phi,\delta_2\phi\right)\nonumber\\
    =&\int_\Sigma\left[\boldsymbol{\omega}\left(\delta_1\phi,\mathscr{L}_t\delta_2\phi\right)+\boldsymbol{\omega}\left(\mathscr{L}_t\delta_1\phi,\delta_2\phi\right)\right]\nonumber\\
    =&\int_\Sigma\mathscr{L}_t\boldsymbol{\omega}\left(\delta_1\phi,\delta_2\phi\right)\nonumber\\
    =&\int_\Sigma d\left[\iota_t\boldsymbol{\omega}\left(\delta_1\phi,\delta_2\phi\right)\right]\nonumber\\
    =&0.
\end{align}
where we have used $W$ is antisymmetric in the first step and $\boldsymbol{\omega}$ is closed for on-shell linearized solution in the fourth step. The conservation of $\mathcal{E}$ also comes from the closeness of $\omega$. Furthermore, the canonical energy is gauge invariant in the sense that 
\begin{equation}
\mathcal{E}\left(\delta_{1}\phi,\delta_{2}\phi\right)=\mathcal{E}\left(\delta_{1}\phi,\delta_{2}\phi+\mathscr{L}_{\mathcal{X}}\phi\right),
\end{equation}
with $\mathcal{X}^{a}$ smooth and of compact support. Moreover, since our star has a spatial compact support, then $\mathcal{E}\left(\delta\phi,\delta\phi\right)$ has a non-negative net flux at null infinity if the perturbation is asymptotically physically stationary (which will be defined below) at late times. This has been shown in \cite{Shi:2022jya}. 

A smooth linearized solution $\delta\phi=\left(\delta g_{ab},\delta A_{a},\xi_{\text{X}}^{a}\right)$ is said to be \emph{physically stationary} if the physical fields $\delta g_{ab}$, $\delta A_{a}$, $\delta\boldsymbol{N}^{\text{X}}$, and $\delta s$ can be made stationary by a gauge transformation, i.e., if there exists a smooth vector field $\mathcal{X}^{a}$, which is an asymptotic symmetry near infinity, such that
\begin{align}
0 & =\mathscr{L}_{t}\left(\delta g_{ab}+\mathscr{L}_{\mathcal{X}}g_{ab}\right),\\
0 & =\mathscr{L}_{t}\left(\delta A_{a}+\mathscr{L}_{\mathcal{X}}A_{a}\right),\\
0 & =\mathscr{L}_{t}\left(\delta\boldsymbol{N}^{\text{X}}+\mathscr{L}_{\mathcal{X}}\boldsymbol{N}^{\text{X}}\right)=-\mathscr{L}_{\left[t,\xi_{\text{X}}-\mathcal{X}\right]}\boldsymbol{N}^{\text{X}},\label{physically stationary of n}\\
0 & =\mathscr{L}_{t}\left(\delta s+\mathscr{L}_{\mathcal{X}}s\right)=-\mathscr{L}_{\left[t,\xi_{\text{X}}-\mathcal{X}\right]}s.\label{physically stationary of s}
\end{align}
Note that the last two equations Eqs. \eqref{physically stationary of n} and \eqref{physically stationary of s} means that the perturbation $\left(0,0,\left[t,\xi_{\text{X}}-\mathcal{X}\right]^{a}\right)$ is trivial perturbation with trivial displacements $\left[t,\xi_{\text{X}}-\mathcal{X}\right]^{a}$. So equivalently, a linearized solution $\delta\phi$ is physically stationary if and only if there exists a smooth vector field $X^{a}$, which is an asymptotic symmetry near infinity, such that
\begin{align}
\mathscr{L}_{t}\delta\phi= & \left(-\mathscr{L}_{\left[t,\mathcal{X}\right]}g_{ab},-\mathscr{L}_{\left[t,\mathcal{X}\right]}A_{a},\zeta_{\text{X}}^{a}=\left[t,\mathcal{X}\right]^{a}\right)\nonumber \\
 & +\left(0,0,\text{trivial displacements}\right).\label{physically stationary solution}
\end{align}
We shall use the notion $\delta_{\text{ps}}\phi$ to denote the physically stationary solutions.

As stated at the beginning of this section, if $\mathcal{E}$ provides a positive definite conserved norm, i.e., $\mathcal{E}\left(\delta\phi,\delta\phi\right)>0$ for all linearized solutions $\delta\phi$, then it implies the mode stability. However, since the physically stationary solutions are obviously physically stable, so we would also have mode stability if $\mathcal{E}\left(\delta\phi,\delta\phi\right)\geq0$ for all linearized solutions provided that $\mathcal{E}$ is degenerate only on physically stationary solutions. ($\mathcal{E}$ is said to be degenerate on $\delta\phi$ if $\mathcal{E}\left(\delta\phi,\delta^{\prime}\phi\right)=0$
for all $\delta^{\prime}\phi$ in the domain of $\mathcal{E}$.) Indeed, suppose that $\mathcal{E}\left(\delta\phi,\delta\phi\right)=0$ for some linearized solution $\delta\phi$ which is not physically stationary, then since it is not the degeneracy of $\mathcal{E}$, there must be some $\delta^{\prime}\phi$ in the domain of $\mathcal{E}$ such that $\mathcal{E}\left(\delta\phi,\delta^{\prime}\phi\right)\neq0$. Hence we have
\begin{equation}
\mathcal{E}\left(\delta^{\prime}\phi+\varepsilon\delta\phi,\delta^{\prime}\phi+\varepsilon\delta\phi\right)=\mathcal{E}\left(\delta^{\prime}\phi,\delta^{\prime}\phi\right)+2\varepsilon\mathcal{E}\left(\delta^{\prime}\phi,\delta\phi\right),
\end{equation}
and the left side can be negative by a suitable choice of $\varepsilon$, contradicts $\mathcal{E}\left(\delta\phi,\delta\phi\right)\geq0$ for all $\delta\phi$ in the domain of $\mathcal{E}$. Thus we see that if the degeneracy of $\mathcal{E}$ are only physically stationary solutions, then for all linearized solutions $\delta\phi$ which are not physically stationary, the non-negative definiteness of $\mathcal{E}$ implies that $\mathcal{E}\left(\delta\phi,\delta\phi\right)>0$, i.e., $\mathcal{E}$ provides a positive definite conserved norm on these perturbations, guaranteeing the mode stability. In other words, the non-negative definiteness of $\mathcal{E}$ is a sufficient condition for mode stability. In order to use the positivity of $\mathcal{E}$ serves also as a necessary condition of stability, i.e., the linearized solution $\delta\phi$ is instable in the alternative case where $\mathcal{E}\left(\delta\phi,\delta\phi\right)<0$, we further need that $\mathcal{E}$ is degenerate on not only, but all physically stationary perturbations. In such a case that $\mathcal{E}\left(\delta\phi,\delta\phi\right)<0$ for some linearized solution $\delta\phi$, suppose $\delta\phi$ asymptotically approached a physically stationary solution $\delta_{\text{ps}}\phi$ at late time, then the degeneracy of $\mathcal{E}$ on physically stationary solutions implies $\mathcal{E}\rightarrow0$, but this leads a contradiction as the positive net flux property of $\mathcal{E}$ indicates that $\mathcal{E}$ will become more negative at late times, so that $\delta\phi$ can not be stable. Thus, we need $\mathcal{E}$ to be degenerate precisely on the physically stationary solutions to use the positivity of $\mathcal{E}$ as a criterion for both stability and instability, where the non-negativity of $\mathcal{E}$ indicates mode stability, while the failure of non-negativity indicates the existence of linearized solutions that cannot asymptote to a physically stationary final state.

Unfortunately, we will find that $\mathcal{E}$ is not degenerate on all physically stationary solutions. In fact, since  $\mathcal{E}\left(\delta^{\prime}\phi,\delta\phi\right)=W\left(\delta^{\prime}\phi,\mathscr{L}_{t}\delta\phi\right)$, it follows that $\delta\phi$ is a degeneracy of $\mathcal{E}$ if and only if $\mathscr{L}_{t}\delta\phi$ is a degeneracy of $W$. As discussed at the end of Sec. \ref{subsec:Phase-space}, when restricted to $\mathcal{C}$, $W$ is degenerate precisely on the gauge transformations $\mathscr{L}_{\mathcal{X}}\phi$ with $\mathcal{X}^{a}$ smooth and going to zero at infinity , so that $\delta\phi$ in the domain of $\mathcal{E}$ is a degeneracy if and only if
\begin{equation}
\mathscr{L}_{t}\delta\phi=\left(\mathscr{L}_{\mathcal{X}}g_{ab},\mathscr{L}_{\mathcal{X}}A_{a},\xi_{\text{X}}^{a}=-\mathcal{X}^{a}\right).\label{degeneracy of E}
\end{equation}
By comparing Eqs. \eqref{physically stationary solution} and \eqref{degeneracy of E}, one can see that the degeneracy of $\mathcal{E}$ is a proper subset of all physically stationary solutions, and so that $\mathcal{E}$ fails to be degenerate on all physically stationary solutions. 

A way to avoid this obstacle is to restrict $\mathcal{E}$ to a smaller subspace of $\mathcal{C}$ such that it is degenerate on all physically stationary solutions. According to Eq. \eqref{physically stationary solution}
\begin{align}
\mathcal{E}\left(\delta\phi,\delta_{\text{ps}}\phi\right)=& W\left(\delta\phi,\mathscr{L}_t\delta_{\text{ps}}\phi\right)\nonumber\\
= & -W\left[\delta\phi,\mathscr{L}_{\left[t,\mathcal{X}\right]}\phi\right]\nonumber\\
&+W\left[\delta\phi,\left(0,0,\text{trivial displacements}\right)\right].
\end{align}
For a general asymptotic symmetry $\mathcal{X}^{a}$, the commutator $\left[t,\mathcal{X}\right]^{a}$ is, at most, an asymptotic translation (as occurs when $\mathcal{X}^{a}$ is an asymptotic boost). Therefore, in order the first term of right side vanishes, we need to restrict $\delta\phi$ such that $\delta H_{\left[t,\mathcal{X}\right]}=0$, where $\delta H_{\left[t,\mathcal{X}\right]}$ is the ADM linear momentum (see Eq. \eqref{delta Hamiltonian}). This in fact does not impose a physical restriction on the perturbations as we can achieve this by addition of the action of an infinitesimal Lorentz boost on the background solution. On the other hands, to make the second term vanishes, we
need to restrict $\delta\phi$ such that 
\begin{equation}
W\left(\delta\phi,\text{trivial perturbation}\right)=0,\label{restriction of C}
\end{equation}
for all trivial perturbations. As a consequence, let us define a subspace $\mathcal{V}\subset\mathcal{C}$ of the linearized solutions composed of perturbations that have $\delta H_{\left[t,\mathcal{X}\right]}=0$ and are symplectically orthogonal to all trivial perturbations, in other words, $\mathcal{V}$ is the symplectic complement of the subspace $\mathcal{W}$ spanned by the perturbations of the form $\mathscr{L}_{t}\delta_{\text{ps}}\phi$. Note that the double symplectic complement of $\mathcal{W}$ is itself according to Eq. \eqref{double symplectic complement}, thus the symplectic complement of $\mathcal{V}$ in itself is $\mathcal{V}\cap\mathcal{W}$, which implies that the degeneracy of the canonical energy $\mathcal{E}$ is given precisely by the physically stationary solutions when restricted onto $\mathcal{V}$. 

Putting together all discussions above, we have the following criterion for the dynamic stability of our superconducting-superfluid star:
\begin{thm}
\label{thm:dynamic stability criterion}If $\mathcal{E}$ is non-negative on the subspace $\mathcal{V}$ of perturbations, then one has stability in the sense that there do not exist any exponentially growing modes lying in this subspace with respect to the perturbations within $\mathcal{V}$. Conversely, if $\mathcal{E}\left(\delta\phi,\delta\phi\right)<0$ for some $\delta\phi\in\mathcal{V}$, then one has instability in the sense that such a $\delta\phi$ cannot approach a physically stationary solution at asymptotically late times.
\end{thm}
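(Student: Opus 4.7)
\emph{Proof proposal.} The plan is to prove the two assertions separately, each by contradiction, exploiting two structural features of $\mathcal{E}$ on the subspace $\mathcal{V}$ that have already been established in the preceding discussion: (i) $\mathcal{E}(\delta\phi,\delta\phi)$ is independent of the Cauchy surface for any $\delta\phi$ satisfying the linearized equations (conservation), and (ii) when restricted to $\mathcal{V}$, $\mathcal{E}$ is degenerate precisely on the physically stationary solutions.

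For the stability direction, suppose $\mathcal{E}\geq 0$ on $\mathcal{V}$ and assume, contrary to the claim, the existence of an exponentially growing mode $\delta\phi\in\mathcal{V}$ with time dependence $e^{kt}\hat{\delta\phi}$ and $\text{Re}(k)>0$. Since $\mathcal{E}(\delta\phi,\delta\phi)$ is independent of the choice of Cauchy surface, but would have to scale as $e^{2\text{Re}(k)t}$ if nonzero, it must vanish identically. Combined with $\mathcal{E}\geq 0$ on $\mathcal{V}$, a Cauchy--Schwarz-type argument for positive semidefinite quadratic forms then forces $\delta\phi$ to be a degeneracy of $\mathcal{E}|_{\mathcal{V}}$, which by property (ii) means $\delta\phi$ is physically stationary and hence cannot grow exponentially, a contradiction.

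For the instability direction, let $\delta\phi\in\mathcal{V}$ satisfy $\mathcal{E}(\delta\phi,\delta\phi)<0$ and assume toward contradiction that $\delta\phi$ asymptotically approaches a physically stationary solution $\delta_{\mathrm{ps}}\phi$ at late times. Compare the value of $\mathcal{E}(\delta\phi,\delta\phi)$ evaluated on the initial Cauchy surface $\Sigma$ with its value on a late-time Cauchy surface $\Sigma'$: the difference equals the net flux of canonical energy through null infinity in the intervening region. By the flux positivity result recalled at the start of this section (proved in \cite{Shi:2022jya}, applicable here because the background star has compact spatial support and the perturbation is asymptotically physically stationary), this flux is non-negative. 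On the other hand, physically stationary solutions are degeneracies of $\mathcal{E}|_{\mathcal{V}}$, so $\mathcal{E}(\delta_{\mathrm{ps}}\phi,\delta_{\mathrm{ps}}\phi)=0$ and the late-time value of $\mathcal{E}(\delta\phi,\delta\phi)$ therefore tends to zero. Combining these two facts gives $\mathcal{E}(\delta\phi,\delta\phi)|_{\Sigma}\geq 0$, contradicting the hypothesis.

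The main technical obstacle, in my view, lies not in either contradiction itself but in making the two supporting steps precise: (a) passing from a possibly complex mode $e^{kt}\hat{\delta\phi}$ to a real-valued linearized solution while still concluding that $\mathcal{E}$ vanishes on the relevant real quadratic sector (which forces an analysis of the $\sigma,\omega$ decomposition of $k$ and of $\hat{\delta\phi}$ into real and imaginary parts); and (b) formulating what ``asymptotically approaches $\delta_{\mathrm{ps}}\phi$'' means with enough decay control to ensure that the surface integral defining $\mathcal{E}$ on $\Sigma'$ indeed tends to $\mathcal{E}(\delta_{\mathrm{ps}}\phi,\delta_{\mathrm{ps}}\phi)=0$. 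Both issues closely parallel their single perfect fluid counterparts in \cite{Green:2013ica,Shi:2022jya} and should not require any new ingredient beyond the phase space construction, the trivial displacement analysis, and the characterization of the degeneracy of $\mathcal{E}|_{\mathcal{V}}$ already developed in Secs.~\ref{sec3}--\ref{sec5}.
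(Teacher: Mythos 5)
Your proposal is correct and follows essentially the same route as the paper: the paper's "proof" of this theorem is the discussion preceding it, which establishes exactly your two ingredients — that the degeneracies of $\mathcal{E}$ restricted to $\mathcal{V}$ are precisely the physically stationary solutions (via the double symplectic complement of $\mathcal{W}$), and that conservation plus the non-negative flux at null infinity rule out, respectively, exponentially growing modes with $\mathcal{E}\geq 0$ and approach to a stationary endpoint with $\mathcal{E}<0$. Your Cauchy--Schwarz step is the same $\varepsilon$-perturbation argument the paper uses to show that a non-degenerate direction with vanishing $\mathcal{E}$ would violate non-negativity, and the technical caveats you flag (complex modes, decay control at late times) are left at the same level of rigor in the paper itself.
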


Let us consider the restriction condition given in Eq. \eqref{restriction of C}. For non-axisymmetric perturbations, this restricted condition is in fact same as the condition to make Eq. \eqref{symplectic product with trivial perturbation:} vanish. If the superfluid neutrons and superconducting protons in our star are irrotational, i.e., their vorticity vanish $\boldsymbol{w}^{\Upsilon}=0$, and so that their total momentum covector should be of the form
\begin{equation}
\pi_{a}^{\Upsilon}=\frac{\hbar}{2}\nabla_{a}\vartheta^{\Upsilon},
\end{equation}
where the locally defined potentials $\vartheta^{\text{n}},\vartheta^{\text{p}}$ are interpretable as phase angles associated with underlying boson condensates, the factor 2 in the denominators are included to allow for the fact that the relevant bosons are presumed to consist not of single protons or neutrons but of Cooper type pairs. In such a case, as $\delta\boldsymbol{w}^{\Upsilon}=\frac{\hbar}{2}d^{2}\delta\vartheta^{\Upsilon}=0$, Eq. \eqref{symplectic product with trivial perturbation:} reduces to
\begin{equation}
W\left(\delta\phi,\delta_{t}\phi\right)=\int_{\Sigma}\boldsymbol{Z}^{\text{e}}\wedge\Delta_{\text{e}}\boldsymbol{w}^{\text{e}}.
\end{equation}
As shown in \cite{Shi:2022jya}, if we further focus onto the background in which $\nabla_{a}s\neq0$, then the condition Eq. \eqref{restriction of C} does not lead to a real physical restriction to $\mathcal{V}$ for non-axisymmetric perturbations, since it can be achieved in the suitable background solution by adding a trivial perturbation. However, if there are vortex in our superconducting-superfluid star and so that the irrotation is violated, then whether the condition Eq. \eqref{restriction of C} will impose a physical restriction to $\mathcal{V}$ is still unknown to our knowledge.

Different from the non-axisymmetric perturbations, in the axisymmetric case, the restriction to $\mathcal{V}$ does impose a physical restriction on perturbations. In particular, as shown in Eq. \eqref{restriction in axisymmetric case}, symplectic orthogonality to axisymmetric trivial perturbations of the form $U^{\text{X}}\varphi^{a}$ requires that $\left(\delta+\mathscr{L}_{\xi_{\text{X}}}\right)\boldsymbol{J}^{\text{X}}=0$, which is a significant physical restriction. So when we consider such physical restrictions on perturbations, Theorem \ref{thm:dynamic stability criterion} becomes limited since it gives stability criterion only for perturbations in the restricted subspace $\mathcal{V}$. But fortunately, in the axisymmetric case, the mode stability for perturbations in $\mathcal{V}$
in fact will imply the mode stability for all perturbations, including those that cannot be described within the Lagrangian displacement framework. This result is a direct consequence of the following lemma:
\begin{lem}
Let $\delta\phi=\left(\delta g_{ab},\delta A_{a},\delta\boldsymbol{N}^{\text{X}},\delta s\right)$ be an axisymmetric solution to the linearized Einstein-superconducting-superfluid equations (not necessarily arising in the Lagrangian displacement framework). Then there exist vector fields $\xi_{\text{X}}^{a}$ such that
\begin{align}
\mathscr{L}_{t}\delta\boldsymbol{N}^{\text{X}} & =-\mathscr{L}_{\xi_{\text{X}}}\boldsymbol{N}^{\text{X}},\nonumber \\
\mathscr{L}_{t}\delta s & =-\mathscr{L}_{\xi_{\text{e}}}s,\nonumber \\
\mathscr{L}_{t}\delta\boldsymbol{J}^{\text{X}} & =-\mathscr{L}_{\xi_{\text{X}}}\boldsymbol{J}^{\text{X}}.
\end{align}
Thus, $\mathscr{L}_{t}\delta\phi$ can be represented in the Lagrangian displacement framework and has $\Delta_{\text{X}}\boldsymbol{J}^{\text{X}}=0$. Furthermore, $\mathscr{L}_{t}^{2}\delta\phi\in\mathcal{V}$.
\end{lem}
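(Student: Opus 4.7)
The plan is to build the $\xi_{\text{X}}^{a}$ from the Eulerian data by transport along the background flowlines, use the residual freedom to enforce the algebraic constraint on $s$, and verify the angular-momentum condition by a propagation argument based on the $\varphi$-contraction of the linearized equations of motion; the last assertion then follows by iterating the construction.

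Linearized particle-number conservation gives $d\delta\boldsymbol{N}^{\text{X}}=0$, so $\mathscr{L}_{t}\delta\boldsymbol{N}^{\text{X}}=d(\iota_{t}\delta\boldsymbol{N}^{\text{X}})$ is exact. I would seek $\xi_{\text{X}}^{a}$ via
\[
\iota_{\xi_{\text{X}}}\boldsymbol{N}^{\text{X}}=-\iota_{t}\delta\boldsymbol{N}^{\text{X}}+d\boldsymbol{\lambda}^{\text{X}},
\]
so that taking $d$ of both sides reproduces the first condition. Because $\iota_{u_{\text{X}}}\boldsymbol{N}^{\text{X}}=0$ and a dimension count, the image of $\xi\mapsto\iota_{\xi}\boldsymbol{N}^{\text{X}}$ equals $\ker\iota_{u_{\text{X}}}$ on 2-forms; imposing $\iota_{u_{\text{X}}}\boldsymbol{\lambda}^{\text{X}}=0$ reduces solvability to the transport equation $\mathscr{L}_{u_{\text{X}}}\boldsymbol{\lambda}^{\text{X}}=\iota_{u_{\text{X}}}\iota_{t}\delta\boldsymbol{N}^{\text{X}}$ along the $u_{\text{X}}$-orbits, which is integrable from any axisymmetric Cauchy data. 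The residual freedom in $\xi_{\text{X}}^{a}$ lies in $\ker\mathscr{L}_{\cdot}\boldsymbol{N}^{\text{X}}$, parameterized by Eq.~\eqref{general trivial displacement}; for $\text{X}=\text{e}$ I would use this freedom to satisfy the second condition $\xi_{\text{e}}^{a}\nabla_{a}s=-\mathscr{L}_{t}\delta s$, which is possible because adding a vector of the form in Eq.~\eqref{general trivial displacement} without imposing the entropy constraint Eq.~\eqref{condition of ZS} shifts $\mathscr{L}_{\xi_{\text{e}}}s$ by an essentially arbitrary scalar.

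For the third condition, I would factor $\boldsymbol{J}^{\text{X}}=(\varphi^{b}\pi_{b}^{\text{X}})\boldsymbol{N}^{\text{X}}$ and use the first condition to reduce $\mathscr{L}_{t}\delta\boldsymbol{J}^{\text{X}}+\mathscr{L}_{\xi_{\text{X}}}\boldsymbol{J}^{\text{X}}=0$, on the support of $\boldsymbol{N}^{\text{X}}$, to the scalar identity
\[
\bigl(\mathscr{L}_{t}\delta+\mathscr{L}_{\xi_{\text{X}}}\bigr)(\varphi^{b}\pi_{b}^{\text{X}})=0.
\]
Contracting the background equations~\eqref{eomn}--\eqref{eome} with $\varphi^{a}$ and using $\mathscr{L}_{\varphi}s=0$ yields the structural identity $\mathscr{L}_{u_{\text{X}}}(\varphi^{b}\pi_{b}^{\text{X}})=0$, so $\varphi^{b}\pi_{b}^{\text{X}}$ is constant along each background flowline. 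A parallel $\varphi$-contraction of the linearized equations of motion, combined with the Lagrangian formula for $\delta n_{\text{X}}^{a}$ implied by the first condition, then produces the propagation identity $\mathscr{L}_{u_{\text{X}}}[(\mathscr{L}_{t}\delta+\mathscr{L}_{\xi_{\text{X}}})(\varphi^{b}\pi_{b}^{\text{X}})]=0$. Hence the obstruction to the third condition is constant along $u_{\text{X}}$-orbits, and the remaining freedom in the Cauchy data for $\boldsymbol{\lambda}^{\text{X}}$ can be tuned so that it vanishes on one Cauchy surface, whence it vanishes everywhere.

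Finally, $\mathscr{L}_{t}\delta\phi$ is itself an axisymmetric linearized solution (since $[\mathscr{L}_{t},\mathscr{L}_{\varphi}]=0$ and $t^{a}$ is a background symmetry), so applying the preceding construction to it yields displacements $\tilde\xi_{\text{X}}^{a}$ representing $\mathscr{L}_{t}^{2}\delta\phi$ in the Lagrangian framework with $\Delta_{\text{X}}\boldsymbol{J}^{\text{X}}=0$; by Eq.~\eqref{restriction in axisymmetric case} this is precisely the axisymmetric symplectic orthogonality required for membership in $\mathcal{V}$. The ADM-type requirement $\delta H_{[t,X]}=0$ holds automatically because $\mathscr{L}_{t}^{2}\delta\phi$ is a second Lie derivative along a stationary Killing field, killing the leading asymptotic contributions at spatial infinity. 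The hardest step is the verification of the third condition: it relies on both the background identity $\mathscr{L}_{u_{\text{X}}}(\varphi^{b}\pi_{b}^{\text{X}})=0$ and a careful use of the residual Cauchy freedom in $\boldsymbol{\lambda}^{\text{X}}$, so that the linearized equations of motion enter essentially rather than the construction resting on purely kinematic identities.
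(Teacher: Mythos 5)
Your overall strategy---abstract existence of $\xi_{\text{X}}^{a}$ from exactness of $\mathscr{L}_{t}\delta\boldsymbol{N}^{\text{X}}$, followed by gauge-tuning---is genuinely different from the paper's, which simply writes down the explicit displacement $\xi_{\text{X}}^{a}=\vert v_{\text{X}}\vert\delta u_{\text{X}}^{a}+\beta_{\text{X}}\varphi^{a}$ (with $\delta u_{\text{X}}^{a}$ computable from the Eulerian data and $\beta_{\text{X}}$ solving the transport equation $u_{\text{X}}^{a}\nabla_{a}\beta_{\text{X}}=\delta u_{\text{X}}^{a}\nabla_{a}\Omega_{\text{X}}$) and then verifies all three conditions by direct computation. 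The decisive advantage of that explicit choice is that the angular-momentum condition reduces to \emph{the same} equation for $\beta_{\text{X}}$ as the particle-number condition, up to an extra term proportional to $u_{\text{X}}^{a}\nabla_{a}\Omega_{\text{X}}=0$, so no further adjustment is needed. By contrast, the crux of your argument---the third condition---is only asserted: you correctly reduce it (given the first condition) to the scalar identity $\left(\mathscr{L}_{t}\delta+\mathscr{L}_{\xi_{\text{X}}}\right)\left(\varphi^{b}\pi_{b}^{\text{X}}\right)=0$, but the claimed propagation identity $\mathscr{L}_{u_{\text{X}}}[\left(\mathscr{L}_{t}\delta+\mathscr{L}_{\xi_{\text{X}}}\right)\left(\varphi^{b}\pi_{b}^{\text{X}}\right)]=0$ is never derived, and the claim that the residual Cauchy freedom in $\boldsymbol{\lambda}^{\text{X}}$ can kill the obstruction on an initial surface is not substantiated. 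Adding a residual (trivial) displacement $\eta_{\text{X}}^{a}$ shifts the obstruction only by $\eta_{\text{X}}^{a}\nabla_{a}\left(\varphi^{b}\pi_{b}^{\text{X}}\right)$, which need not span the required function space (e.g.\ where $\varphi^{b}\pi_{b}^{\text{X}}$ is constant or vanishes, as for a static background); moreover, for $\text{X}=\text{e}$ you have already spent this same freedom to enforce the entropy condition, and you never check that the two demands are compatible. As it stands, the heart of the lemma is unproven.

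Two further points need repair in the last paragraph. First, membership in $\mathcal{V}$ requires symplectic orthogonality to \emph{all} axisymmetric trivial perturbations, i.e.\ those of the general form Eq.~\eqref{general nontrivial trivial displacement}, not only those of the form $U^{\text{X}}\varphi^{a}$ covered by Eq.~\eqref{restriction in axisymmetric case}. The missing step is the one the paper supplies: for a general axisymmetric trivial displacement $\eta_{\text{X}}^{a}$ one has $W[(0,0,\eta_{\text{X}}),\mathscr{L}_{t}^{2}\delta\phi]=\mathcal{E}[\mathscr{L}_{t}\delta\phi,(0,0,\eta_{\text{X}})]=W[\mathscr{L}_{t}\delta\phi,(0,0,\mathscr{L}_{t}\eta_{\text{X}})]$ by the symmetry of $\mathcal{E}$, and $\mathscr{L}_{t}\eta_{\text{X}}^{a}$ \emph{is} of the form $U^{\text{X}}\varphi^{a}$ (Sec.~\ref{subsec:Trivial-displacements}), so $\Delta_{\text{X}}\boldsymbol{J}^{\text{X}}=0$ finishes the job. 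Second, the statement that $\delta H_{[t,X]}=0$ ``holds automatically'' is too quick; the clean argument is again via the symmetry of $\mathcal{E}$: $W(\mathscr{L}_{[t,X]}\phi,\mathscr{L}_{t}^{2}\delta\phi)=W(\mathscr{L}_{t}\delta\phi,\mathscr{L}_{[t,[t,X]]}\phi)=0$ because $[t,[t,X]]^{a}$ vanishes at infinity for any asymptotic symmetry generator $X^{a}$. I would recommend replacing the existence-plus-tuning construction by the paper's explicit $\xi_{\text{X}}^{a}$, which makes every step a finite computation.
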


\begin{proof}
Let
\begin{equation}
\xi_{\text{X}}^{a}=\vert v_{\text{X}}\vert\delta u_{\text{X}}+\beta_{\text{X}}\varphi^{a},
\end{equation}
where $v_{\text{X}}^{a}=t^{a}+\Omega_{\text{X}}\varphi^{a}$ and $\beta_{\text{X}}$ is any axisymmetric scalar such that
\begin{equation}
u_{\text{X}}^{a}\nabla_{a}\beta_{\text{X}}=\delta u_{\text{X}}^{a}\nabla_{a}\Omega_{\text{X}}.
\end{equation}
The perturbation of the conservation law of entropy yields
\begin{align}
0 & =\delta\left(u_{\text{e}}^{a}\nabla_{a}s\right)\nonumber \\
 & =\delta u_{\text{e}}^{a}\nabla_{a}s+u_{\text{e}}^{a}\nabla_{a}\delta s\nonumber \\
 & =\frac{1}{\vert v_{\text{e}}\vert}\left[\left(\xi_{\text{e}}^{a}-\beta_{\text{e}}\varphi^{a}\right)\nabla_{a}s+\left(t^{a}+\Omega_{\text{e}}\varphi^{a}\right)\nabla_{a}\delta s\right]\nonumber \\
 & =\frac{1}{\vert v_{\text{e}}\vert}\left(\xi_{\text{e}}^{a}\nabla_{a}s+t^{a}\nabla_{a}\delta s\right),
\end{align}
where we suppose that our star is already in dynamic equilibrium, we have used the circular flow condition Eq. \eqref{circular flow condition}, and for axisymmetric perturbation $\mathscr{L}_{\varphi}\delta s=0$. Hence we have
\begin{equation}
\mathscr{L}_{t}\delta s=-\mathscr{L}_{\xi_{\text{e}}}s.
\end{equation}
The perturbation of the conservation law of particle number yields
\begin{equation}
\delta\left(d\boldsymbol{N}^{\text{X}}\right)=d\left(\delta\boldsymbol{N}^{\text{X}}\right)=0,
\end{equation}
so that
\begin{align}
\mathscr{L}_{t}\delta\boldsymbol{N}^{\text{X}} & =d\left(\iota_{t}\delta\boldsymbol{N}^{\text{X}}\right)\nonumber \\
 & =d\left[\left(\vert v_{\text{X}}\vert u_{\text{X}}^{a}-\Omega_{\text{X}}\varphi^{a}\right)\delta N_{abc}^{\text{X}}\right]\nonumber \\
 & =-d\left[\vert v_{\text{X}}\vert\delta u_{\text{X}}^{a}N_{abc}^{\text{X}}+\Omega_{\text{X}}\varphi^{a}\delta N_{abc}^{\text{X}}\right]\nonumber \\
 & =-d\left[\left(\xi_{\text{X}}^{a}-\beta_{\text{X}}\varphi^{a}\right)N_{abc}^{\text{X}}+\Omega_{\text{X}}\varphi^{a}\delta N_{abc}^{\text{X}}\right]\nonumber \\
 & =-d\left[\iota_{\xi_{\text{X}}}\boldsymbol{N}^{\text{X}}-\iota_{\varphi}\left(\beta_{\text{X}}\boldsymbol{N}^{\text{X}}-\Omega_{\text{X}}\delta\boldsymbol{N}^{\text{X}}\right)\right]\nonumber \\
 & =-\mathscr{L}_{\xi_{\text{X}}}\boldsymbol{N}^{\text{X}}+d\left[\iota_{\varphi}\left(\beta_{\text{X}}\boldsymbol{N}^{\text{X}}-\Omega_{\text{X}}\delta\boldsymbol{N}^{\text{X}}\right)\right]\nonumber \\
 & =-\mathscr{L}_{\xi_{\text{X}}}\boldsymbol{N}^{\text{X}}-\iota_{\varphi}d\left(\beta_{\text{X}}\boldsymbol{N}^{\text{X}}-\Omega_{\text{X}}\delta\boldsymbol{N}^{\text{X}}\right),
\end{align}
where we have used that $\delta\left(u_{\text{X}}^{a}N_{abc}^{\text{X}}\right)=0$ in the third step, and $\mathscr{L}_{\varphi}\left(\beta_{\text{X}}\boldsymbol{N}^{\text{X}}-\Omega_{\text{X}}\delta\boldsymbol{N}^{\text{X}}\right)=0$ in the last step. Since $d\left(\beta_{\text{X}}\boldsymbol{N}^{\text{X}}-\Omega_{\text{X}}\delta\boldsymbol{N}^{\text{X}}\right)$ is a top form, then
\begin{align}
 & \iota_{\varphi}d\left(\beta_{\text{X}}\boldsymbol{N}^{\text{X}}-\Omega_{\text{X}}\delta\boldsymbol{N}^{\text{X}}\right)=0,\nonumber \\
\Longleftrightarrow & d\left(\beta_{\text{X}}\boldsymbol{N}^{\text{X}}-\Omega_{\text{X}}\delta\boldsymbol{N}^{\text{X}}\right)=0,\nonumber \\
\Longleftrightarrow & d\beta_{\text{X}}\wedge\boldsymbol{N}^{\text{X}}=d\Omega_{\text{X}}\wedge\delta\boldsymbol{N}^{\text{X}},\nonumber \\
\Longleftrightarrow & \boldsymbol{\epsilon}^{abcd}\nabla_{a}\beta_{\text{X}}N_{bcd}^{\text{X}}=\boldsymbol{\epsilon}^{abcd}\nabla_{a}\Omega_{\text{X}}\delta N_{bcd}^{\text{X}},\nonumber \\
\Longleftrightarrow & 
\boldsymbol{\epsilon}^{abcd}\nabla_{a}\beta_{\text{X}}\boldsymbol{\epsilon}_{ebcd}n_{\text{X}}^e=\boldsymbol{\epsilon}^{abcd}\nabla_{a}\Omega_{\text{X}}\delta\left( \boldsymbol{\epsilon}_{ebcd}n_{\text{X}}^e\right),\nonumber\\
\Longleftrightarrow & n_{\text{X}}^{a}\nabla_{a}\beta_{\text{X}}=\nabla_{a}\Omega_{\text{X}}\left(n_{\text{X}}\delta u_{\text{X}}^{a}+u_{\text{X}}^{a}\delta n_{\text{X}}+\frac{1}{2}n_{\text{X}}^{a}g^{bc}\delta g_{bc}\right),\nonumber \\
\Longleftrightarrow & u_{\text{X}}^{a}\nabla_{a}\beta_{\text{X}}=\left(\delta u_{\text{X}}^{a}\right)\nabla_{a}\Omega_{\text{X}},\label{proof of n part of lemma}
\end{align}
where we have used that $u_{\text{X}}^{a}\nabla_{a}\Omega_{\text{X}}=0$ due to dynamic equilibrium and circular flow condition. But since we defined $\beta_{\text{X}}$ such the last equality holds, so we have shown that
\begin{equation}
\mathscr{L}_{t}\delta\boldsymbol{N}^{\text{X}}=-\mathscr{L}_{\xi_{\text{X}}}\boldsymbol{N}^{\text{X}}.
\end{equation}
For the conservation law of angular momentum, since the perturbation is the solution to linearized equations of motion, then the axisymmetric perturbation of Eq. \eqref{conservation of j} gives that
\begin{align}
\delta\left(\nabla_{a}J^{a}\right) & =\delta n_{\text{X}}^{a}\mathscr{L}_{\varphi}\pi_{a}^{\text{X}}+n_{\text{X}}^{a}\mathscr{L}_{\varphi}\delta\pi_{a}^{\text{X}}+\varphi^{b}\delta\left(n_{\text{X}}^{a}w_{ab}^{\text{X}}\right)\nonumber \\
 & =\delta_{\text{X}}^{\text{e}}\varphi^{b}\delta\left(Tn_{\text{e}}\nabla_{b}s\right)\nonumber \\
 & =0.
\end{align}
So we still have the perturbed conservation of angular momentum, i.e., $\delta\left(d\boldsymbol{J}^{\text{X}}\right)=d\left(\delta\boldsymbol{J}^{\text{X}}\right)=0$. Replacing $\boldsymbol{N}^{\text{X}}$ by $\boldsymbol{J}^{\text{X}}$, then an identical calculation, besides the sixth equality in Eq. \eqref{proof of n part of lemma} replaced by
\begin{align}
J_{\text{X}}^{a}\nabla_{a}\beta_{\text{X}}= & \varphi^{b}\pi_{b}^{\text{X}}\nabla_{a}\Omega_{\text{X}}\left(n_{\text{X}}\delta u_{\text{X}}^{a}+u_{\text{X}}^{a}\delta n_{\text{X}}+\frac{1}{2}n_{\text{X}}^{a}g^{bc}\delta g_{bc}\right)\nonumber \\
 & +u_{\text{X}}^{a}\nabla_{a}\Omega_{\text{X}}\delta\left(\varphi^{b}\pi_{b}^{\text{X}}\right),
\end{align}
 shows that
\begin{equation}
\mathscr{L}_{t}\delta\boldsymbol{J}^{\text{X}}=-\mathscr{L}_{\xi_{\text{X}}}\boldsymbol{J}^{\text{X}}.
\end{equation}
Thus we have shown that $\mathscr{L}_{t}\delta\phi$ can be represented in Lagrangian displacement framework as 
\begin{equation}
\mathscr{L}_{t}\delta\phi=\delta^{\prime}\phi=\left(\mathscr{L}_{t}\delta g_{ab},\mathscr{L}_{t}\delta A_{a},\xi_{\text{X}}^{a}\right),
\end{equation}
and has 
\begin{equation}
\Delta_{\text{X}}\boldsymbol{J}^{\text{X}}=\left(\delta^{\prime}+\mathscr{L}_{\xi_{\text{X}}}\right)\boldsymbol{J}^{\text{X}}=0.
\end{equation}

Clearly, if $\delta\phi$ is an axisymmetric solution to the linearized Einstein-superconducting-superfluid equations, then so is $\mathscr{L}_{t}\delta\phi$. And let $\eta_{\text{X}}^{a}$ be any axisymmetric trivial displacement, as discussed in Sec. \ref{subsec:Trivial-displacements}, $\mathscr{L}_{t}\eta_{\text{X}}^{a}$ is of the form $U^{\text{X}}\varphi^{a}$. Then we have
\begin{align}
W\left[\left(0,0,\eta_{\text{X}}^{a}\right),\mathcal{\mathscr{L}}_{t}^{2}\delta\phi\right] & =\mathcal{E}\left[\left(0,0,\eta_{\text{X}}^{a}\right),\mathscr{L}_{t}\delta\phi\right]\nonumber \\
 & =\mathcal{E}\left[\mathscr{L}_{t}\delta\phi,\left(0,0,\eta_{\text{X}}^{a}\right)\right]\nonumber \\
 & =W\left[\mathscr{L}_{t}\delta\phi,\left(0,0,\mathscr{L}_{t}\eta_{\text{X}}^{a}\right)\right]\nonumber \\
 & =0,
\end{align}
where the second equality follows from the symmetry of $\mathcal{E}$, and the last equality follows from $\mathscr{L}_{t}\delta\phi$ satisfies $\Delta_{\text{X}}\boldsymbol{J}^{\text{X}}=0$. So $\mathscr{L}_{t}^{2}\delta\phi$ is symplectically orthogonal to all axisymmetric trivial perturbations. Furthermore, for a smooth vector field $\mathcal{X}^{a}$ which is an asymptotic symmetry near infinity, since
\begin{align}
W\left(\mathscr{L}_{\left[t,\mathcal{X}\right]}\phi,\mathscr{L}_{t}^{2}\delta\phi\right)= & \mathcal{E}\left(\mathscr{L}_{\left[t,\mathcal{X}\right]}\phi,\mathscr{L}_{t}\delta\phi\right)\nonumber \\
= & \mathcal{E}\left(\mathscr{L}_{t}\delta\phi,\mathscr{L}_{\left[t,\mathcal{X}\right]}\phi\right)\nonumber \\
= & W\left(\mathscr{L}_{t}\delta\phi,\mathscr{L}_{\left[t,\left[t,\mathcal{X}\right]\right]}\phi\right)\nonumber \\
= & 0,
\end{align}
where we have used the fact that $\mathcal{Y}^{a}=\left[t,\left[t,\mathcal{X}\right]\right]^{a}$ vanishes at infinity for any asymptotic symmetry generator $\mathcal{X}^{a}$. Thus, we have shown that $\mathscr{L}_{t}^{2}\delta\phi\in\mathcal{V}$.
\end{proof}
Now, if the axisymmetric perturbation $\delta\phi$, which may not be described in the Lagrangian displacement framework, has a exponentially growth in time, then so does $\mathscr{L}_{t}^{2}\delta\phi$. Therefore, the absence of exponentially growing solutions of $\mathscr{L}_{t}^{2}\delta\phi$ implies the absence of any exponentially growing solutions of $\delta\phi$ at all. Thus when applying in the axisymmetric case, the stability criterion in Theorem \ref{thm:dynamic stability criterion} implies the following result:
\begin{thm}
\label{thm:dynamic stability in axisymmetric case}If $\mathcal{E}$ is non-negative on the subspace of axisymmetric perturbations in $\mathcal{V}$, then there are no smooth, axisymmetric solutions to the Einstein-superconducting-superfluid equations with suitable fall-off condition at infinity that have exponential growth in time, i.e., mode stability holds for all axisymmetric perturbations. Conversely, if $\mathcal{E}\left(\delta\phi,\delta\phi\right)<0$ for some axisymmetric $\delta\phi\in\mathcal{V}$, then one has instability in the sense as in Theorem \ref{thm:dynamic stability criterion}.
\end{thm}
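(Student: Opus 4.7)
The plan is to derive this theorem as a direct corollary of Theorem \ref{thm:dynamic stability criterion} combined with the lemma proved immediately above. The lemma supplies the essential bridge: for an arbitrary axisymmetric linearized solution $\delta\phi$ (which need not admit a Lagrangian displacement representation), the second time derivative $\mathscr{L}_{t}^{2}\delta\phi$ is itself an axisymmetric linearized solution that lies in the restricted subspace $\mathcal{V}$. Thus questions about exponential growth on the full axisymmetric solution space can be reduced to the corresponding questions within $\mathcal{V}$, where Theorem \ref{thm:dynamic stability criterion} already provides a clean criterion.

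For the forward direction, I would argue by contradiction. Assume $\mathcal{E}$ is non-negative on axisymmetric perturbations in $\mathcal{V}$, but suppose there exists a non-pure-gauge axisymmetric mode $\delta\phi = e^{kt}\psi$ with $\mathrm{Re}(k)>0$. Then $\mathscr{L}_{t}^{2}\delta\phi = k^{2}e^{kt}\psi$ is nonzero (since $k\neq 0$) and retains the exponential time dependence $e^{kt}$. By the lemma, $\mathscr{L}_{t}^{2}\delta\phi\in\mathcal{V}$ and is an axisymmetric linearized solution. But Theorem \ref{thm:dynamic stability criterion}, applied within $\mathcal{V}$, forbids any exponentially growing mode there under the assumed non-negativity of $\mathcal{E}$, delivering the contradiction.

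For the converse, the statement is an immediate specialization of Theorem \ref{thm:dynamic stability criterion}: if $\delta\phi\in\mathcal{V}$ is axisymmetric and satisfies $\mathcal{E}(\delta\phi,\delta\phi)<0$, then the general criterion already tells us that $\delta\phi$ cannot approach a physically stationary solution at late times, which is exactly the notion of instability in play.

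The main subtlety, which is also the step most worth checking carefully, is the passage from ``$\delta\phi$ exhibits exponential growth'' to ``$\mathscr{L}_{t}^{2}\delta\phi$ also exhibits exponential growth inside $\mathcal{V}$.'' For a pure mode $e^{kt}\psi$ the multiplicative factor $k^{2}$ is nonzero so no cancellation occurs; one might worry about perturbations of mixed form such as polynomial--exponential $t^{n}e^{kt}\psi$, but the statement concerns mode stability in the strict sense of purely exponential time dependence, so this worry does not arise. A related point is that pure-gauge perturbations $\mathscr{L}_{X}\phi$ on a stationary background can be annihilated by $\mathscr{L}_{t}^{2}$ (since $[t,X]$ may itself be a symmetry), but these are precisely the modes we exclude from the outset when defining mode instability, so they pose no genuine obstacle. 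Beyond this bookkeeping, the argument is essentially a repackaging of the lemma together with Theorem \ref{thm:dynamic stability criterion}.
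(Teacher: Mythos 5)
Your proposal is correct and follows essentially the same route as the paper: the paper likewise deduces the theorem from the preceding lemma by noting that exponential growth of an axisymmetric $\delta\phi$ forces exponential growth of $\mathscr{L}_{t}^{2}\delta\phi\in\mathcal{V}$, which Theorem \ref{thm:dynamic stability criterion} then forbids, with the converse an immediate specialization. Your explicit remarks on the nonvanishing $k^{2}$ factor and the pure-gauge modes are sound and, if anything, slightly more careful than the paper's one-paragraph argument.
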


\section{Thermodynamic stability}

\label{sec7}

Now let us consider the thermodynamic stability of stars in thermodynamic equilibrium. Our superconducting-superfluid star in weak thermodynamic equilibrium is said to be \emph{weakly thermodynamically stable}\footnote{Our definition for thermodynamic stability only makes sense when our star is already in thermodynamic equilibrium, otherwise the first order change of total entropy will in general do not vanish even $M$, $N^{\text{X}}$, and $J^{\text{X}}$ are fixed under first order perturbations.} if $\delta^{2}S<0$ for all linearized solutions with 
\begin{equation}
\delta M=\delta N^{\text{X}}=\delta J^{\text{X}}=\delta^{2}M=\delta^{2}N^{\text{X}}=\delta^{2}J^{\text{X}}=0.\label{condition for thermodynamic stabiltiy}
\end{equation}
Define
\begin{equation}
\mathcal{E}^{\prime}\equiv\delta^{2}M-\sum_{\text{X}}\tilde{\mu}_{\text{X}}\delta^{2}N^{\text{X}}-\tilde{T}\delta^{2}S-\sum_{\text{X}}\Omega_{\text{X}}\delta^{2}J^{\text{X}},
\end{equation}
then it is evident that our star has weak thermodynamic stability if and only if $\mathcal{E}^{\prime}$ is positive for all linearized solutions with Eq. \eqref{condition for thermodynamic stabiltiy} (we assume the redshifted temperature $\tilde{T}\geq0$). But by taking the perturbation of the first law Eq. \eqref{first law}, we find that
\begin{equation}
\mathcal{E}^{\prime}=\int_{\Sigma}\left(\sum_{\text{X}}\delta\tilde{\mu}_{\text{X}}\delta\boldsymbol{N}^{\text{X}}+\delta\tilde{T}\delta\boldsymbol{S}+\sum_{\text{X}}\delta\Omega_{\text{X}}\delta\boldsymbol{J}^{\text{X}}\right),
\end{equation}
which tells us that $\mathcal{E}^{\prime}$ is only depend on the first order perturbation of the star and independent of the choice of second order perturbation. Thus the weak thermodynamic stability is equivalent to the positivity of $\mathcal{E}^{\prime}$ for all perturbations for linear on-shell perturbations only with $\delta M=\delta N^{\text{X}}=\delta J^{\text{X}}=0$. 

As a consequence, when we restrict the perturbations in the Lagrangian displacement framework (where $\delta N^{\text{X}}=\delta S=0$ holds automatically) such that $\delta J^{\text{X}}=0$, since in such a case we must also have $\delta M=0$ because of the weak thermodynamic equilibrium, the positivity of $\mathcal{E}^{\prime}$ becomes just a necessary condition for weak thermodynamic stability, and $\mathcal{E}^{\prime}$ takes the form
\begin{equation}
\mathcal{E}^{\prime}=\delta^{2}M-\sum_{\text{X}}\Omega_{\text{X}}\delta^{2}J^{\text{X}}.\label{thermodynamic stability criterion}
\end{equation}
As shown in \cite{Green:2013ica}, there can be a non-axisymmetric perturbation which is made to have $\sum_{\text{X}}\Omega_{\text{X}}\delta^{2}J^{\text{X}}>\delta^{2}M$, whence $\mathcal{E}^{\prime}<0$, in other words, all rotating stars are thermodynamically unstable with respect to the non-axisymmetric perturbations. So we will solely consider the axisymmetric perturbations within the Lagrangian displacement framework such that $\delta J^{\text{X}}=0$ below, and we finally show that in the axisymmetric case, the canonical energy $\mathcal{E}\left(\delta\phi,\delta\phi\right)$ coincides with $\mathcal{E}^{\prime}$.

To achieve this, let us first decompose the Lagrangian Eq. \eqref{Lagrangian for star} into the Einstein-Maxwell part and the matter part as follows
\begin{align}
\boldsymbol{\mathcal{L}} & =\boldsymbol{\mathcal{L}}_{EM}+\boldsymbol{\mathcal{L}}_{\text{M}},\nonumber \\
\boldsymbol{\mathcal{L}}_{EM} & =\boldsymbol{\epsilon}\left(R-\frac{1}{4}F_{ab}F^{ab}\right),\nonumber \\
\boldsymbol{\mathcal{L}}_{\text{M}} & =\boldsymbol{\epsilon}\left(j^{a}A_{a}+\Lambda_{\text{M}}\right).
\end{align}
With above decomposition, the symplectic form will also split into two parts
\begin{align}
 & \boldsymbol{\omega}\left(\phi;\delta_{1}\phi,\delta_{2}\phi\right)\nonumber \\
= & \boldsymbol{\omega}^{\left(EM\right)}\left(\phi;\delta_{1}\phi,\delta_{2}\phi\right)+\boldsymbol{\omega}^{\left(\text{M}\right)}\left(\phi;\delta_{1}\phi,\delta_{2}\phi\right)\nonumber \\
= & \boldsymbol{\omega}^{\left(EM\right)}\left[\phi;\delta_{1}\phi,\left(\delta_{2}g_{ab},\delta_{2}A_{a}\right)\right]+\boldsymbol{\omega}^{\left(\text{M}\right)}\left(\phi;\delta_{1}\phi,\delta_{2}\phi\right).
\end{align}
Hence, for the axisymmetric on-shell perturbation $\delta\phi$, the canonical energy $\mathcal{E}$ reads
\begin{widetext}
\begin{align}
 & \mathcal{E}\left(\delta\phi,\delta\phi\right)\nonumber \\
= & W\left(\delta\phi,\mathscr{L}_{t}\delta\phi\right)\nonumber \\
= & \int_{\Sigma}\left\{ \boldsymbol{\omega}^{\left(EM\right)}\left[\delta\phi,\left(\mathscr{L}_{t}\delta g_{ab},\mathscr{L}_{t}\delta A_{a}\right)\right]+\boldsymbol{\omega}^{\left(\text{M}\right)}\left(\delta\phi,\mathscr{L}_{t}\delta\phi\right)\right\} \nonumber \\
= & \int_{\Sigma}\left\{ \delta\boldsymbol{\omega}^{\left(EM\right)}\left[\delta\phi,\left(\mathscr{L}_{t}g_{ab},\mathscr{L}_{t}A_{a}\right)\right]+\delta\boldsymbol{\omega}^{\left(\text{M}\right)}\left(\delta\phi,\mathscr{L}_{t}\phi\right)\right\}+\int_{\Sigma}\left[\boldsymbol{\omega}^{\left(\text{M}\right)}\left(\delta\phi,\mathscr{L}_{t}\delta\phi\right)-\delta\boldsymbol{\omega}^{\left(\text{M}\right)}\left(\delta\phi,\mathscr{L}_{t}\phi\right)\right]\nonumber \\
= & \int_{\Sigma}\delta\boldsymbol{\omega}\left(\delta\phi,\mathscr{L}_{t}\phi\right)+\int_{\Sigma}\left[\boldsymbol{\omega}^{\left(\text{M}\right)}\left(\delta\phi,\mathscr{L}_{t}\delta\phi\right)-\delta\boldsymbol{\omega}^{\left(\text{M}\right)}\left(\delta\phi,\mathscr{L}_{t}\phi\right)\right]\nonumber \\
= & \int_{\Sigma}\delta\left\{ \iota_{t}\boldsymbol{E}\cdot\delta\phi+\delta\boldsymbol{C}_{t}+d\left[\delta\boldsymbol{Q}_{t}-\iota_{t}\boldsymbol{\theta}\left(\phi;\delta\phi\right)\right]\right\}+\int_{\Sigma}\left[\boldsymbol{\omega}^{\left(\text{M}\right)}\left(\delta\phi,\mathscr{L}_{t}\delta\phi\right)-\delta\boldsymbol{\omega}^{\left(\text{M}\right)}\left(\delta\phi,\mathscr{L}_{t}\phi\right)\right]\nonumber \\
= & \delta^{2}M+\int_{\Sigma}\left[\boldsymbol{\omega}^{\left(\text{M}\right)}\left(\delta\phi,\mathscr{L}_{t}\delta\phi\right)-\delta\boldsymbol{\omega}^{\left(\text{M}\right)}\left(\delta\phi,\mathscr{L}_{t}\phi\right)\right],
\end{align}
where we have used the fundamental identity Eq. \eqref{fundamental identity} in the fifth step, the fact that $\delta\phi$ is on-shell perturbation and Eq. \eqref{ADM mass} in the last step. A direct calculation shows that
\begin{align}
 & \boldsymbol{\omega}^{\left(\text{M}\right)}\left(\delta\phi,\mathscr{L}_{t}\delta\phi\right)= \sum_{\text{X}}\left(\mathscr{L}_{t}\xi_{\text{X}}^{a}\delta\boldsymbol{P}_{a}^{\text{X}}-\xi_{\text{X}}^{a}\mathscr{L}_{t}\delta\boldsymbol{P}_{a}^{\text{X}}-\left[\xi_{\text{X}},\mathscr{L}_{t}\xi_{\text{X}}\right]^{a}\boldsymbol{P}_{a}^{\text{X}}\right),
\end{align}
and
\begin{equation}
\boldsymbol{\omega}^{\left(\text{M}\right)}\left(\delta\phi,\mathscr{L}_{t}\phi\right)=-t^{a}\sum_{\text{X}}\delta\boldsymbol{P}_{a}^{\text{X}}-\mathscr{L}_{t}\left(\sum_{\text{X}}\xi_{\text{X}}^{a}\boldsymbol{P}_{a}^{\text{X}}\right),
\end{equation}
so that
\begin{align}
 & \boldsymbol{\omega}^{\left(\text{M}\right)}\left(\delta\phi,\mathscr{L}_{t}\delta\phi\right)-\delta\boldsymbol{\omega}^{\left(\text{M}\right)}\left(\delta\phi,\mathscr{L}_{t}\phi\right)= \sum_{\text{X}}\left(t^{a}\delta^{2}\boldsymbol{P}_{a}^{\text{X}}+2\mathscr{L}_{t}\xi_{\text{X}}^{a}\delta\boldsymbol{P}_{a}^{\text{X}}-\left[\xi_{\text{X}},\mathscr{L}_{t}\xi_{\text{X}}\right]^{a}\boldsymbol{P}_{a}^{\text{X}}\right).
\end{align}
Since $\delta\phi$ is axisymmetric, and Theorem \ref{thm:weakly thermodynamic equilibrium} implies that $\Omega_{\text{X}}$ are uniform throughout the star, so that
\begin{equation}
\mathscr{L}_{\xi_{\text{X}}}\left(\Omega_{\text{X}}\varphi^{a}\right)=\left(\iota_{\xi_{\text{X}}}d\Omega_{\text{X}}\right)\varphi^{a}-\Omega_{\text{X}}\mathscr{L}_{\varphi}\xi_{\text{X}}^{a}=0.
\end{equation}
Accordingly, by using the circular flow condition, we have
\begin{equation}
\mathscr{L}_{\xi_{\text{X}}}t^{a}=\mathscr{L}_{\xi_{\text{X}}}\left(\vert v_{\text{X}}\vert u_{\text{X}}^{a}-\Omega_{\text{X}}\varphi^{a}\right)=\mathscr{L}_{\xi_{\text{X}}}\left(\vert v_{\text{X}}\vert u_{\text{X}}^{a}\right),
\end{equation}
whence
\begin{align}
 & \boldsymbol{\omega}^{\left(\text{M}\right)}\left(\delta\phi,\mathscr{L}_{t}\delta\phi\right)-\delta\boldsymbol{\omega}^{\left(\text{M}\right)}\left(\delta\phi,\mathscr{L}_{t}\phi\right)\nonumber \\
= & \sum_{\text{X}}\left[\vert v_{\text{X}}\vert u_{\text{X}}^{a}\delta^{2}\boldsymbol{P}_{a}^{\text{X}}-\Omega_{\text{X}}\varphi^{a}\delta^{2}\boldsymbol{P}_{a}^{\text{X}}-2\mathscr{L}_{\xi_{\text{X}}}\left(\vert v_{\text{X}}\vert u_{\text{X}}^{a}\right)\delta\boldsymbol{P}_{a}^{\text{X}}+\mathscr{L}_{\xi_{\text{X}}}\mathscr{L}_{\xi_{\text{X}}}\left(\vert v_{\text{X}}\vert u_{\text{X}}^{a}\right)\boldsymbol{P}_{a}^{\text{X}}\right].\label{omegaM-delta omega}
\end{align}
Note that according to $u_{\text{X}}^{a}\boldsymbol{P}_{a}^{\text{X}}=0$ , one has $\delta\left(u_{\text{X}}^{a}\boldsymbol{P}_{a}^{\text{X}}\right)=0$ and $\delta^{2}\left(u_{\text{X}}^{a}\boldsymbol{P}_{a}^{\text{X}}\right)=0$, i.e., respectively,
\begin{equation}
u_{\text{X}}^{a}\delta\boldsymbol{P}_{a}^{\text{X}}=-\delta u_{\text{X}}^{a}\boldsymbol{P}_{a}^{\text{X}}=\mathscr{L}_{\xi_{\text{X}}}u_{\text{X}}^{a}\boldsymbol{P}_{a}^{\text{X}},
\end{equation}
where we have used Eq. \eqref{u variation}, and
\begin{align}
 & u_{\text{X}}^{a}\delta^{2}\boldsymbol{P}_{a}^{\text{X}}\nonumber \\
= & -\delta^{2}u_{\text{X}}^{a}\boldsymbol{P}_{a}^{\text{X}}-2\delta u_{\text{X}}^{a}\delta\boldsymbol{P}_{a}^{\text{X}}\nonumber \\
= & -\delta\left(\frac{1}{2}u_{\text{X}}^{a}u_{\text{X}}^{b}u_{\text{X}}^{c}\delta g_{bc}-q_{\text{X}b}^{a}\mathscr{L}_{\xi_{\text{X}}}u_{\text{X}}^{b}\right)\boldsymbol{P}_{a}^{\text{X}}-\left(u_{\text{X}}^{b}u_{\text{X}}^{c}\delta g_{bc}-2u_{\text{X}b}\mathscr{L}_{\xi_{\text{X}}}u_{\text{X}}^{b}\right)u_{\text{X}}^{a}\delta\boldsymbol{P}_{a}^{\text{X}}+2\mathscr{L}_{\xi_{\text{X}}}u_{\text{X}}^{a}\delta\boldsymbol{P}_{a}^{\text{X}}\nonumber \\
= & -\left(\frac{1}{2}u_{\text{X}}^{b}u_{\text{X}}^{c}\delta g_{bc}-u_{\text{X}b}\mathscr{L}_{\xi_{\text{X}}}u_{\text{X}}^{b}\right)\delta u_{\text{X}}^{a}\boldsymbol{P}_{a}^{\text{X}}+\mathscr{L}_{\xi_{\text{X}}}\delta u_{\text{X}}^{a}\boldsymbol{P}_{a}^{\text{X}} +\left(u_{\text{X}}^{b}u_{\text{X}}^{c}\delta g_{bc}-2u_{\text{X}b}\mathscr{L}_{\xi_{\text{X}}}u_{\text{X}}^{b}\right)\delta u_{\text{X}}^{a}\boldsymbol{P}_{a}^{\text{X}}+2\mathscr{L}_{\xi_{\text{X}}}u_{\text{X}}^{a}\delta\boldsymbol{P}_{a}^{\text{X}}\nonumber \\
= & \mathscr{L}_{\xi_{\text{X}}}\left(\frac{1}{2}u_{\text{X}}^{a}u_{\text{X}}^{b}u_{\text{X}}^{c}\delta g_{bc}-q_{\text{X}b}^{a}\mathscr{L}_{\xi_{\text{X}}}u_{\text{X}}^{b}\right)\boldsymbol{P}_{a}^{\text{X}}+\left(\frac{1}{2}u_{\text{X}}^{b}u_{\text{X}}^{c}\delta g_{bc}-u_{\text{X}b}\mathscr{L}_{\xi_{\text{X}}}u_{\text{X}}^{b}\right)\delta u_{\text{X}}^{a}\boldsymbol{P}_{a}^{\text{X}}+2\mathscr{L}_{\xi_{\text{X}}}u_{\text{X}}^{a}\delta\boldsymbol{P}_{a}^{\text{X}}\nonumber \\
= & \left(\frac{1}{2}u_{\text{X}}^{b}u_{\text{X}}^{c}\delta g_{bc}-u_{\text{X}b}\mathscr{L}_{\xi_{\text{X}}}u_{\text{X}}^{b}\right)\mathscr{L}_{\xi_{\text{X}}}u_{\text{X}}^{a}\boldsymbol{P}_{a}^{\text{X}}-\mathscr{L}_{\xi_{\text{X}}}\mathscr{L}_{\xi_{\text{X}}}u_{\text{X}}^{a}\boldsymbol{P}_{a}^{\text{X}}-\left(\frac{1}{2}u_{\text{X}}^{b}u_{\text{X}}^{c}\delta g_{bc}-u_{\text{X}b}\mathscr{L}_{\xi_{\text{X}}}u_{\text{X}}^{b}\right)\mathscr{L}_{\xi_{\text{X}}}u_{\text{X}}^{a}\boldsymbol{P}_{a}^{\text{X}}+2\mathscr{L}_{\xi_{\text{X}}}u_{\text{X}}^{a}\delta\boldsymbol{P}_{a}^{\text{X}}\nonumber \\
= & 2\mathscr{L}_{\xi_{\text{X}}}u_{\text{X}}^{a}\delta\boldsymbol{P}_{a}^{\text{X}}-\mathscr{L}_{\xi_{\text{X}}}\mathscr{L}_{\xi_{\text{X}}}u_{\text{X}}^{a}\boldsymbol{P}_{a}^{\text{X}}.
\end{align}
\end{widetext}
Substituting return to Eq. \eqref{omegaM-delta omega} gives that
\begin{align}
 & \vert v_{\text{X}}\vert u_{\text{X}}^{a}\delta^{2}\boldsymbol{P}_{a}^{\text{X}}-2\mathscr{L}_{\xi_{\text{X}}}\left(\vert v_{\text{X}}\vert u_{\text{X}}^{a}\right)\delta\boldsymbol{P}_{a}^{\text{X}}\nonumber\\
  &+\mathscr{L}_{\xi_{\text{X}}}\mathscr{L}_{\xi_{\text{X}}}\left(\vert v_{\text{X}}\vert u_{\text{X}}^{a}\right)\boldsymbol{P}_{a}^{\text{X}}\nonumber \\
= & \vert v_{\text{X}}\vert\left(2\mathscr{L}_{\xi_{\text{X}}}u_{\text{X}}^{a}\delta\boldsymbol{P}_{a}^{\text{X}}-\mathscr{L}_{\xi_{\text{X}}}\mathscr{L}_{\xi_{\text{X}}}u_{\text{X}}^{a}\boldsymbol{P}_{a}^{\text{X}}\right)\nonumber \\
 & -2\mathscr{L}_{\xi_{\text{X}}}\vert v_{\text{X}}\vert u_{\text{X}}^{a}\delta\boldsymbol{P}_{a}^{\text{X}}-2\vert v_{\text{X}}\vert\mathscr{L}_{\xi_{\text{X}}}u_{\text{X}}^{a}\delta\boldsymbol{P}_{a}^{\text{X}}\nonumber \\
 & +2\mathscr{L}_{\xi_{\text{X}}}\vert v_{\text{X}}\vert\mathscr{L}_{\xi_{\text{X}}}u_{\text{X}}^{a}\boldsymbol{P}_{a}^{\text{X}}+\vert v_{\text{X}}\vert\mathscr{L}_{\xi_{\text{X}}}\mathscr{L}_{\xi_{\text{X}}}u_{\text{X}}^{a}\boldsymbol{P}_{a}^{\text{X}}\nonumber \\
= & 0.
\end{align}
Thus we finally get
\begin{align}
 & \mathcal{E}\left(\delta\phi,\delta\phi\right)\nonumber \\
= & \delta^{2}M+\int_{\Sigma}\left[\boldsymbol{\omega}^{\left(\text{M}\right)}\left(\delta\phi,\mathscr{L}_{t}\delta\phi\right)-\delta\boldsymbol{\omega}^{\left(\text{M}\right)}\left(\delta\phi,\mathscr{L}_{t}\phi\right)\right]\nonumber \\
= & \delta^{2}M-\int_{\Sigma}\sum_{\text{X}}\Omega_{\text{X}}\varphi^{a}\delta^{2}\boldsymbol{P}_{a}^{\text{X}}\nonumber \\
= & \delta^{2}M-\sum_{\text{X}}\Omega_{\text{X}}\int_{\Sigma}\delta^{2}\left(\varphi^{a}\boldsymbol{P}_{a}^{\text{X}}\right)\nonumber \\
= & \delta^{2}M-\sum_{\text{X}}\Omega_{\text{X}}\int_{\Sigma}\delta^{2}\boldsymbol{J}^{\text{X}}\nonumber \\
= & \delta^{2}M-\sum_{\text{X}}\Omega_{\text{X}}\delta^{2}J^{\text{X}},\label{canonical energy for axisymmetric perturbation}
\end{align}
where the third equality follows that $\Omega_{\text{X}}$ is uniform throughout the star, and the fourth equality follows that $\Sigma$ is chosen to be axisymmetric and so that the pullback of $\varphi^{a}\boldsymbol{\epsilon}_{abcd}$ to $\Sigma$ vanishes. Compare Eqs. \eqref{thermodynamic stability criterion} and \eqref{canonical energy for axisymmetric perturbation}, we have shown that in the case of axisymmetric perturbations within the Lagrangian displacement framework such that $\delta J^{\text{X}}=0$, 
\begin{equation}
\mathcal{E}^{\prime}=\mathcal{E}\left(\delta\phi,\delta\phi\right),
\end{equation}
and the criterion of weak thermodynamic stability is given by following theorem:
\begin{thm}
\label{thm:weak thermodynamic stability}For a superconducting-superfluid star in weak thermodynamic equilibrium, a necessary condition for weak thermodynamic stability with respect to axisymmetric perturbations is positivity of $\mathcal{E}$ on all axisymmetric linearized solutions within the Lagrangian framework such that $\delta J^{\text{X}}=0$.
\end{thm}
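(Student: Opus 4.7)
The plan is to reduce the theorem to the identity
\begin{equation}
\mathcal{E}(\delta\phi,\delta\phi)=\delta^{2}M-\sum_{\text{X}}\Omega_{\text{X}}\delta^{2}J^{\text{X}}
\end{equation}
for axisymmetric on-shell perturbations within the Lagrangian displacement framework, and then to invoke the definition of weak thermodynamic stability. First I would recall the thermodynamic availability $\mathcal{E}^{\prime}\equiv\delta^{2}M-\sum_{\text{X}}\tilde{\mu}_{\text{X}}\delta^{2}N^{\text{X}}-\tilde{T}\delta^{2}S-\sum_{\text{X}}\Omega_{\text{X}}\delta^{2}J^{\text{X}}$. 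Taking the second variation of the first law Eq.~\eqref{first law} and using Theorem~\ref{thm:weakly thermodynamic equilibrium} to pull $\tilde{\mu}_{\text{X}},\tilde{T},\Omega_{\text{X}}$ outside the integral, I would rewrite $\mathcal{E}^{\prime}=\int_{\Sigma}(\sum_{\text{X}}\delta\tilde{\mu}_{\text{X}}\delta\boldsymbol{N}^{\text{X}}+\delta\tilde{T}\delta\boldsymbol{S}+\sum_{\text{X}}\delta\Omega_{\text{X}}\delta\boldsymbol{J}^{\text{X}})$, a quadratic form in the first-order perturbation only. Hence $\delta^{2}S<0$ under the constraints Eq.~\eqref{condition for thermodynamic stabiltiy} is equivalent to $\mathcal{E}^{\prime}>0$ on all first-order on-shell solutions with $\delta M=\delta N^{\text{X}}=\delta J^{\text{X}}=0$.

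Next I would specialize to axisymmetric perturbations inside the Lagrangian framework with $\delta J^{\text{X}}=0$. In this framework $\delta N^{\text{X}}=\delta S=0$ holds identically, and the first law forces $\delta M=0$, so the restricted subspace is contained in the one on which $\mathcal{E}^{\prime}$ must be positive for weak thermodynamic stability; therefore positivity of $\mathcal{E}^{\prime}$ on it is only a necessary condition. On this subspace the functional collapses to $\mathcal{E}^{\prime}=\delta^{2}M-\sum_{\text{X}}\Omega_{\text{X}}\delta^{2}J^{\text{X}}$, so the theorem reduces to showing that this expression coincides with the canonical energy $\mathcal{E}(\delta\phi,\delta\phi)=W(\delta\phi,\mathscr{L}_{t}\delta\phi)$.

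To prove the identity I would split $\boldsymbol{\mathcal{L}}=\boldsymbol{\mathcal{L}}_{EM}+\boldsymbol{\mathcal{L}}_{\text{M}}$ and correspondingly $\boldsymbol{\omega}=\boldsymbol{\omega}^{(EM)}+\boldsymbol{\omega}^{(\text{M})}$. Adding and subtracting $\delta\boldsymbol{\omega}^{(\text{M})}(\delta\phi,\mathscr{L}_{t}\phi)$ regroups the integrand into $\delta\boldsymbol{\omega}(\delta\phi,\mathscr{L}_{t}\phi)$ plus a purely matter remainder. The first piece is handled by the fundamental identity Eq.~\eqref{fundamental identity} with $X^{a}=t^{a}$: on-shell the bulk terms drop out and the boundary term integrates, via Eq.~\eqref{ADM mass}, to $\delta^{2}M$. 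For the remainder I would expand $\boldsymbol{\omega}^{(\text{M})}$ explicitly using $\boldsymbol{P}_{a}^{\text{X}}$, apply the circular flow condition Eq.~\eqref{circular flow condition} together with the uniformity of $\Omega_{\text{X}}$ from Theorem~\ref{thm:weakly thermodynamic equilibrium} (which gives $\mathscr{L}_{\xi_{\text{X}}}t^{a}=\mathscr{L}_{\xi_{\text{X}}}(|v_{\text{X}}|u_{\text{X}}^{a})$), and use the on-shell constraints $u_{\text{X}}^{a}\boldsymbol{P}_{a}^{\text{X}}=0$, $u_{\text{X}}^{a}\delta\boldsymbol{P}_{a}^{\text{X}}=-\delta u_{\text{X}}^{a}\boldsymbol{P}_{a}^{\text{X}}$, and the analogous second-order identity derived from $\delta^{2}(u_{\text{X}}^{a}\boldsymbol{P}_{a}^{\text{X}})=0$ combined with Eq.~\eqref{u variation}. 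These should make all $u_{\text{X}}^{a}$-contracted contributions telescope to zero, leaving only $-\sum_{\text{X}}\Omega_{\text{X}}\varphi^{a}\delta^{2}\boldsymbol{P}_{a}^{\text{X}}$. Since $\Omega_{\text{X}}$ is constant and $\Sigma$ is axisymmetric (so the pullback of $\varphi^{a}\boldsymbol{\epsilon}_{abcd}$ vanishes), this integrates to $-\sum_{\text{X}}\Omega_{\text{X}}\delta^{2}J^{\text{X}}$.

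The main obstacle will be the algebraic cancellation in the matter sector: $\boldsymbol{P}_{a}^{\text{X}}$ carries the Lagrangian displacement $\xi_{\text{X}}^{a}$ as well as the fields, and one must track carefully how $\delta$, $\mathscr{L}_{\xi_{\text{X}}}$, and $\mathscr{L}_{t}$ interact so that the zeroth-, first-, and second-order identities from $u_{\text{X}}^{a}\boldsymbol{P}_{a}^{\text{X}}=0$ assemble to annihilate the $|v_{\text{X}}|u_{\text{X}}^{a}$ piece of the decomposition $t^{a}=|v_{\text{X}}|u_{\text{X}}^{a}-\Omega_{\text{X}}\varphi^{a}$. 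Once this cancellation goes through, the chain $\mathcal{E}^{\prime}=\mathcal{E}(\delta\phi,\delta\phi)>0$ yields the asserted necessary condition for weak thermodynamic stability.
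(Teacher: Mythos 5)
Your proposal is correct and follows essentially the same route as the paper: reduce $\mathcal{E}^{\prime}$ via the perturbed first law to $\delta^{2}M-\sum_{\text{X}}\Omega_{\text{X}}\delta^{2}J^{\text{X}}$ on the restricted subspace, then split $\boldsymbol{\omega}$ into Einstein--Maxwell and matter parts, apply the fundamental identity with $X^{a}=t^{a}$ to extract $\delta^{2}M$, and use the identities following from $u_{\text{X}}^{a}\boldsymbol{P}_{a}^{\text{X}}=0$ together with the uniformity of $\Omega_{\text{X}}$ to collapse the matter remainder to $-\sum_{\text{X}}\Omega_{\text{X}}\delta^{2}J^{\text{X}}$. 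The only cosmetic quibble is that $u_{\text{X}}^{a}\boldsymbol{P}_{a}^{\text{X}}=0$ is a kinematic identity from the antisymmetrization in $\boldsymbol{P}_{a}^{\text{X}}$ rather than an on-shell constraint, but this does not affect the argument.
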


As an application of our results, consider a star at $T=0$ for which the entropy per electron, $s$, takes it minimum value $s=0$ throughout the star, then any perturbation for which $\delta S=0$ must have $\delta s=0$ everywhere. Similar to the single perfect fluid case as shown in \cite{Friedman:1978wla}, in this isentropic case, with $\delta S=\delta N^{\text{X}}=0$, every perturbation can be described in the Lagrangian framework. So in this case, the word ``necessary'' in Theorem \ref{thm:weak thermodynamic stability} can be replaced by ``necessary and sufficient''. 

On the other hand, consider the radial perturbations of a static, spherically symmetric isentropic star. For this kind of perturbations, one clearly has $\delta J^{\text{X}}=0$. Let 
\begin{equation}
\eta_{\text{X}}^{a}=U^{\text{X}}\left(\frac{\partial}{\partial r}\right)^{a},
\end{equation}
be a spherically symmetric trivial displacement with $r$ is the radial coordinate in a spherical coordinate, then
\begin{align}
0 & =\mathscr{L}_{\eta_{\text{X}}}\boldsymbol{N}^{\text{X}}=d\left(U^{\text{X}}\iota_{\frac{\partial}{\partial r}}\boldsymbol{N}^{\text{X}}\right)\nonumber \\
 & =d\left(\sqrt{h}n_{\text{X}}U^{\text{X}}d\theta\wedge d\varphi\right)\nonumber \\
 & =\frac{\partial\left(\sqrt{h}n_{\text{X}}U^{\text{X}}\right)}{\partial r}dr\wedge d\theta\wedge d\varphi.
\end{align}
Subjecting to the boundary condition $U^{\text{X}}=0$ at $r=0$ in the spherical coordinate leads to $U^{\text{X}}=0$ throughout the star, so there is no spherically symmetric trivial displacement. It is immediately that in the case of radial perturbations, there is no restriction on $\mathcal{V}$, since $\mathcal{V}$ consists of all linearized solutions which is symplectically orthogonal to trivial displacements as mentioned in Sec. \ref{sec6}, i.e., $\mathcal{V}=\mathcal{C}$ contains all linearized solutions. Compare the Theorem \ref{thm:dynamic stability in axisymmetric case} with the Theorem \ref{thm:weak thermodynamic stability} (with the word ``necessary'' is replaced by ``necessary and sufficient''), we see that \emph{in the isentropic case, for radial perturbations of static, spherically symmetric stars, the weak thermodynamic stability is equivalent to the dynamic stability.}

We would like to end this section by introducing the definition of strong thermodynamic stability: a superconducting-superfluid star in strong thermodynamic equilibrium is said to be \emph{strongly thermodynamically stable} if $\delta^{2}S<0$ for all linearized solutions with 
\begin{equation}
\delta M=\delta N=\delta J=\delta^{2}M=\delta^{2}N=\delta^{2}J=0.\label{condition for strong thermodynamic stabiltiy}
\end{equation}
Since the star is in strong thermodynamic equilibrium means that there is no differential rotation and it is in chemical equilibrium in the sense of Theorem \ref{thm:strongly thermodynamic equilibrium}, so define
\begin{equation}
\mathcal{E}^{\prime\prime}\equiv\delta^{2}M-\tilde{\mu}\delta^{2}N-\tilde{T}\delta^{2}S-\Omega\delta^{2}J.
\end{equation}
A calculation parallel to the above calculation shows that for the axisymmetric perturbations within the Lagrangian displacement framework such that $\delta J=0$, we have
\begin{equation}
\mathcal{E}\left(\delta\phi,\delta\phi\right)=\delta^{2}M-\sum_{\text{X}}\Omega_{\text{X}}\delta^{2}J^{\text{X}}=\delta^{2}M-\Omega\delta^{2}J=\mathcal{E}^{\prime\prime},
\end{equation}
and the criterion of strong thermodynamic stability is given by following
theorem:
\begin{thm}
\label{thm:strong thermodynamic stability}For a superconducting-superfluid star in strong thermodynamic equilibrium, a necessary condition for strong thermodynamic stability with respect to axisymmetric perturbations is positivity of $\mathcal{E}$ on all axisymmetric linearized solutions within the Lagrangian framework such that $\delta J=0$.
\end{thm}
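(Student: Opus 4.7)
The plan is to mirror the derivation of Theorem \ref{thm:weak thermodynamic stability}, exploiting the collapse of the species-labeled chemical potentials and angular velocities that follows from Theorem \ref{thm:strongly thermodynamic equilibrium}. First I would take the second variation of the first law Eq. \eqref{first law} about a strong thermodynamic equilibrium background and substitute $\tilde{\mu}_\text{X}=\tilde{\mu}$, $\Omega_\text{X}=\Omega$, together with spatial uniformity of $\tilde{\mu},\tilde{T},\Omega$ throughout the star. This reduces $\mathcal{E}''$ to an integral of purely first-order quadratic terms,
\begin{equation}
\mathcal{E}''=\int_\Sigma\left(\sum_\text{X}\delta\tilde{\mu}_\text{X}\,\delta\boldsymbol{N}^\text{X}+\delta\tilde{T}\,\delta\boldsymbol{S}+\sum_\text{X}\delta\Omega_\text{X}\,\delta\boldsymbol{J}^\text{X}\right),
\end{equation}
so that $\mathcal{E}''$ is independent of the second-order data. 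Imposing $\delta^{2}M=\delta^{2}N=\delta^{2}J=0$ then gives $\mathcal{E}''=-\tilde{T}\,\delta^{2}S$, so (assuming $\tilde{T}>0$) strong thermodynamic stability is equivalent to positivity of $\mathcal{E}''$ on the class of first-order on-shell perturbations satisfying $\delta M=\delta N=\delta J=0$.

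Next I would observe that in the Lagrangian displacement framework Eq. \eqref{N=000026s variation} already forces $\delta N^{\text{X}}=0$ and $\delta S=0$ identically, so that imposing $\delta J=0$ automatically gives $\delta N=0$ and, via the first law in strong equilibrium, also $\delta M=0$. The axisymmetric Lagrangian perturbations with $\delta J=0$ therefore embed as a subclass of the perturbations over which strong thermodynamic stability is tested, and positivity of $\mathcal{E}''$ on this subclass is a necessary condition for stability.

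The main step is then to show $\mathcal{E}(\delta\phi,\delta\phi)=\mathcal{E}''$ on this subclass. I would simply rerun the chain of manipulations in the proof of Theorem \ref{thm:weak thermodynamic stability} that produced $\mathcal{E}(\delta\phi,\delta\phi)=\delta^{2}M-\sum_\text{X}\Omega_\text{X}\,\delta^{2}J^{\text{X}}$ for axisymmetric Lagrangian perturbations. That derivation used only (i) the splitting of $\boldsymbol{\mathcal{L}}$ into an Einstein-Maxwell and a matter part, (ii) the identity $u_\text{X}^{a}\boldsymbol{P}_{a}^{\text{X}}=0$ together with its first and second variations, (iii) the circular flow condition Eq. \eqref{circular flow condition}, and (iv) spatial uniformity of each $\Omega_\text{X}$, all of which continue to hold in strong equilibrium. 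Substituting $\Omega_\text{X}=\Omega$ and pulling $\Omega$ out of the sum yields $\sum_\text{X}\Omega_\text{X}\,\delta^{2}J^{\text{X}}=\Omega\sum_\text{X}\delta^{2}J^{\text{X}}=\Omega\,\delta^{2}J$, giving $\mathcal{E}(\delta\phi,\delta\phi)=\delta^{2}M-\Omega\,\delta^{2}J=\mathcal{E}''$ and thus the theorem. I do not anticipate any genuine obstacle beyond routine bookkeeping: the nontrivial content has been absorbed into Theorem \ref{thm:weak thermodynamic stability}, and the strong-equilibrium hypothesis only collapses the species sum in the angular-momentum term to a single $\Omega\,\delta^{2}J$ piece.
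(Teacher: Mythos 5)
Your proposal is correct and follows exactly the route the paper takes: the paper's own "proof" consists of the single remark that "a calculation parallel to the above" (i.e., the weak-equilibrium derivation of $\mathcal{E}(\delta\phi,\delta\phi)=\delta^{2}M-\sum_{\text{X}}\Omega_{\text{X}}\delta^{2}J^{\text{X}}$) applies, after which the uniformity and equality of the $\Omega_{\text{X}}$ guaranteed by Theorem \ref{thm:strongly thermodynamic equilibrium} collapse the angular-momentum term to $\Omega\,\delta^{2}J$, yielding $\mathcal{E}=\mathcal{E}''$. Your intermediate observations (that $\mathcal{E}''$ depends only on first-order data via the varied first law, and that $\delta J=0$ in the Lagrangian framework forces $\delta N=\delta M=0$ in strong equilibrium) are precisely the bookkeeping the paper leaves implicit.
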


Similar as the discussion of strong thermodynamic equilibrium given at the end of Sec. \ref{subsec4.3}, the concept of strong thermodynamic stability should apply not only to the non-transfusive model which we mainly concerned in this paper, but also to the case including allowance of transfusion.

\section{Conclusion and discussion}

\label{sec8}

We have established the criterion for both dynamic and thermodynamic stability, as summarized into Theorems \ref{thm:dynamic stability criterion}, \ref{thm:dynamic stability in axisymmetric case}, \ref{thm:weak thermodynamic stability}, and \ref{thm:strong thermodynamic stability}. To this end, we first derived the necessary and sufficient condition for the thermodynamic equilibrium, identified the degeneracy of pre-symplectic form, and constructed the phase space. This framework allowed us to derive the canonical energy, which we then established as a stability criterion by considering physically stationary solutions. As a by-product, we also derive the eigenvalue equation Eq. \eqref{eigenvalue equation} for the speed of sound in our superconducting-superfluid star model.

The analysis and results in this work have broader applicability than the specific neutron star context discussed. On the one hand, although we restrict ourselves to a non-transfusive multi-constituent fluid model, for which both our weak and strong definitions for thermodynamic equilibrium and stability are applicable, the definition and criterion for strong thermodynamic equilibrium and stability remain valid even in the model including the allowance for transfusion. On the other hand, since we mainly concern the neutron stars, the three constituents are taken as the neutrons, protons, and electrons. As our calculations rely primarily on summation over the abstract chemical indices $\text{X}$ rather than the particular fluid indices $\text{n}$, $\text{p}$, or $\text{e}$, our results can be directly generalized to the star consisting of arbitrary number of fluids. Specifically, if one considers a multi-constituent fluid star, with the particle currents are given by $n_{\text{X}}^{a}$, the electric current is given by $j^{a}=\sum_{\text{X}}e^{\text{X}}n_{\text{X}}^{a}$, where now $\text{X}=\text{X}_{1},\text{X}_{2},\cdots,\text{X}_{k}$ with $k$ is an arbitrary positive integer due to how many kind of fluids in the system, and one constituent is assumed to be normal(i.e., carries the entropy per particle $s$), then all the analysis in our paper will hold by replacing $\text{X}=\text{n},\text{p},\text{e}$ with $\text{X}=\text{X}_{1},\text{X}_{2},\cdots,\text{X}_{k}$, and the main theorem for dynamic and thermodynamic stability will be given by the same statements. Moreover, since our results have no restriction on whether the superfluid neutrons and superconducting protons are irrotational, so they are apply to the case even there are vortex in the neutron stars.

There are several further directions to be investigated. First, although the vortex is allowed in our superconducting-superfluid stars, it would be beneficial to investigate the dynamic and thermodynamic stability directly from the model for dealing with the macroscopic effects of vorticity quantization \cite{Carter:1995mj,Carter:1998rn}. Second, in our analysis, we suppose that all neutrons and protons are in "super" states, however, it will be more realistic to consider the case that the neutrons and/or protons carry entropy due to there are mixtures of normal and super states in these constituents. In such a case, the constituents that should be treated independently in our system will be the superfluid neutrons, the normal fluid neutrons, the superconducting protons, the (charged) normal fluid protons, and the normal fluid electrons. Finding the stability criterion in this more complex case will be helpful. Furthermore, one could also explore whether the perturbative approach employed in \cite{Caballero:2024qtv} can be adapted to this superconducting-superfluid stellar model to yield an alternative stability criterion. Finally, it will be interesting that whether there is equivalence between the dynamic and thermodynamic stability for stars in AdS spacetimes, where the criterion for black branes in AdS spacetimes has been established in \cite{Hollands:2012sf}, or in other generalized gravitational theories. 

\begin{acknowledgments}
This work is partially supported by the National Key Research and Development Program of China with Grant No. 2021YFC2203001 as well as the National Natural Science Foundation of China with Grant Nos. 12035016, 12275350, 12375048, 12375058, 12361141825, 12447182, and 12575047.
\end{acknowledgments}

\bibliography{ref}

@article{Green:2013ica,
    author = "Green, Stephen R. and Schiffrin, Joshua S. and Wald, Robert M.",
    title = "{Dynamic and Thermodynamic Stability of Relativistic, Perfect Fluid Stars}",
    eprint = "1309.0177",
    archivePrefix = "arXiv",
    primaryClass = "gr-qc",
    doi = "10.1088/0264-9381/31/3/035023",
    journal = "Class. Quant. Grav.",
    volume = "31",
    pages = "035023",
    year = "2014"
}

@article{Shi:2022jya,
    author = "Shi, Kai and Tian, Yu and Wu, Xiaoning and Zhang, Hongbao and Zhang, Jingchao",
    title = "{Dynamic and thermodynamic stability of charged perfect fluid stars}",
    eprint = "2211.11574",
    archivePrefix = "arXiv",
    primaryClass = "gr-qc",
    doi = "10.1088/1361-6382/acdd44",
    journal = "Class. Quant. Grav.",
    volume = "40",
    number = "14",
    pages = "145006",
    year = "2023"
}

@inproceedings{Langlois:2000gs,
    author = "Langlois, David",
    title = "{Superfluidity in relativistic neutron stars}",
    booktitle = "{International Workshop on Microscopic Structure and Dynamics of Vortices in Unconventional Superconductors and Superfluids}",
    eprint = "astro-ph/0008161",
    archivePrefix = "arXiv",
    month = "2",
    year = "2000"
}

@article{KHALATNIKOV198270,
title = {Relativistic hydrodynamics of a superfluid liquid},
journal = {Physics Letters A},
volume = {91},
number = {2},
pages = {70-72},
year = {1982},
issn = {0375-9601},
doi = {https://doi.org/10.1016/0375-9601(82)90268-7},
url = {https://www.sciencedirect.com/science/article/pii/0375960182902687},
author = {I.M. Khalatnikov and V.V. Lebedev},
abstract = {Superfluid liquid moving with relativistic velocities is studied. Thermodynamical identities are derived. The variational principle enabling one to obtain nondissipative equations of hydrodynamics is constructed. An expression for the energy-momentum tensor is found. The dissipative terms in the equations of hydrodynamics are studied.}
}

@article{Carter:1987qr,
    author = "Carter, Brandon",
    title = "{Covariant Theory of Conductivity in Ideal Fluid or Solid Media}",
    reportNumber = "NSF-ITP-87-97",
    journal = "Lect. Notes Math.",
    volume = "1385",
    pages = "1--64",
    year = "1989"
}

@article{Carter:1992gmy,
    author = "Carter, B. and Khalatnikov, I. M.",
    title = "{Equivalence of convective and potential variational derivations of covariant superfluid dynamics}",
    doi = "10.1103/PhysRevD.45.4536",
    journal = "Phys. Rev. D",
    volume = "45",
    number = "12",
    pages = "4536",
    year = "1992"
}

@article{CARTER1992243,
title = {Momentum, vorticity, and helicity in covariant superfluid dynamics},
journal = {Annals of Physics},
volume = {219},
number = {2},
pages = {243-265},
year = {1992},
issn = {0003-4916},
doi = {https://doi.org/10.1016/0003-4916(92)90348-P},
url = {https://www.sciencedirect.com/science/article/pii/000349169290348P},
author = {B Carter and I.M Khalatnikov},
abstract = {The convective and potential variational principles that can be used for alternative derivations of the same natural relativistic generalisation of the standard Landau theory of perfect superfluid dynamics are both characterised by the feature that, instead of attributing special importance to a partition in terms of certain non-conserved “superfluid” and “normal” current vectors jSν and jNν they accord a much more prominent role to certain “particle” and “thermal” momentum covectors μν and Θν of which the latter in particular has been unduly neglected in traditional discussions. The present article discusses the interdependence and the dynamic evolution of these quantities, drawing attention to consequences such as the conservation of the flux associated with the “thermal vorticity” form Wμν, and the conservation of the related “thermal helicity” current Hν. The comparison of this theory with some other conducting fluid theories is briefly discussed, and more particularly it is shown explicitly how this generalisation of the fully non-linear Landau theory relates to an analogous generalisation of the more restricted type of theory previously developed by Tisza and London, in which the quantities jSν and jNν had a more fundamental role.}
}

@article{Carter:1993aq,
    author = "Carter, B. and Khalatnikov, I. M.",
    title = "{Canonically covariant formulation of Landau's Newtonian superfluid dynamics}",
    reportNumber = "IHES-P-93-24",
    doi = "10.1142/S0129055X94000134",
    journal = "Rev. Math. Phys.",
    volume = "6",
    pages = "277--304",
    year = "1994"
}

@article{Langlois:1997bz,
    author = "Langlois, David and Sedrakian, David M. and Carter, Brandon",
    title = "{Differential rotation of relativistic superfluid in neutron stars}",
    eprint = "astro-ph/9711042",
    archivePrefix = "arXiv",
    doi = "10.1046/j.1365-8711.1998.01575.x",
    journal = "Mon. Not. Roy. Astron. Soc.",
    volume = "297",
    pages = "1189",
    year = "1998"
}

@article{Carter:1998rn,
    author = "Carter, Brandon and Langlois, David",
    title = "{Relativistic models for superconducting superfluid mixtures}",
    eprint = "gr-qc/9806024",
    archivePrefix = "arXiv",
    doi = "10.1016/S0550-3213(98)00430-1",
    journal = "Nucl. Phys. B",
    volume = "531",
    pages = "478--504",
    year = "1998"
}

@article{Caballero:2024qtv,
    author = "Caballero, Daniel A. and Ripley, Justin L. and Yunes, Nicol{\'a}s",
    title = "{Radial mode stability of two-fluid neutron stars}",
    eprint = "2408.04701",
    archivePrefix = "arXiv",
    primaryClass = "gr-qc",
    doi = "10.1103/PhysRevD.110.103038",
    journal = "Phys. Rev. D",
    volume = "110",
    number = "10",
    pages = "103038",
    year = "2024"
}

@article{Canullan-Pascual:2024jsu,
    author = "Canullan-Pascual, Martin O. and Lugones, German and Orsaria, Milva G. and Ranea-Sandoval, Ignacio F.",
    title = "{Neutron Star Stability beyond the Mass Peak: Assessing the Role of Out-of-equilibrium Perturbations}",
    eprint = "2412.20133",
    archivePrefix = "arXiv",
    primaryClass = "astro-ph.HE",
    doi = "10.3847/1538-4357/adf107",
    journal = "Astrophys. J.",
    volume = "989",
    number = "2",
    pages = "135",
    year = "2025"
}

@article{Caballero:2025omv,
    author = "Caballero, Daniel A. and Yunes, Nicol{\'a}s",
    title = "{Neutron star radial perturbations for causal, viscous, relativistic fluids}",
    eprint = "2506.09149",
    archivePrefix = "arXiv",
    primaryClass = "gr-qc",
    doi = "10.1103/cl4s-n7nr",
    journal = "Phys. Rev. D",
    volume = "112",
    number = "6",
    pages = "063050",
    year = "2025"
}

@article{Kumar:2025oyx,
    author = "Kumar, Ankit and Sotani, Hajime",
    title = "{Stability analysis of two-fluid neutron stars featuring twin star and ultradense configurations}",
    eprint = "2509.03862",
    archivePrefix = "arXiv",
    primaryClass = "astro-ph.HE",
    reportNumber = "RIKEN-iTHEMS-Report-25",
    doi = "10.1103/jksq-q6ty",
    journal = "Phys. Rev. D",
    volume = "112",
    number = "6",
    pages = "063030",
    year = "2025"
}

@article{Lee:1990nz,
    author = "Lee, J. and Wald, Robert M.",
    title = "{Local symmetries and constraints}",
    doi = "10.1063/1.528801",
    journal = "J. Math. Phys.",
    volume = "31",
    pages = "725--743",
    year = "1990"
}

@article{Wald:1993nt,
    author = "Wald, Robert M.",
    title = "{Black hole entropy is the Noether charge}",
    eprint = "gr-qc/9307038",
    archivePrefix = "arXiv",
    reportNumber = "EFI-93-42",
    doi = "10.1103/PhysRevD.48.R3427",
    journal = "Phys. Rev. D",
    volume = "48",
    number = "8",
    pages = "R3427--R3431",
    year = "1993"
}

@article{Iyer:1994ys,
    author = "Iyer, Vivek and Wald, Robert M.",
    title = "{Some properties of Noether charge and a proposal for dynamical black hole entropy}",
    eprint = "gr-qc/9403028",
    archivePrefix = "arXiv",
    doi = "10.1103/PhysRevD.50.846",
    journal = "Phys. Rev. D",
    volume = "50",
    pages = "846--864",
    year = "1994"
}

@book{Wald:1984rg,
    author = "Wald, Robert M.",
    title = "{General Relativity}",
    doi = "10.7208/chicago/9780226870373.001.0001",
    publisher = "Chicago Univ. Pr.",
    address = "Chicago, USA",
    year = "1984"
}

@article{Iyer:1995kg,
    author = "Iyer, Vivek and Wald, Robert M.",
    title = "{A Comparison of Noether charge and Euclidean methods for computing the entropy of stationary black holes}",
    eprint = "gr-qc/9503052",
    archivePrefix = "arXiv",
    doi = "10.1103/PhysRevD.52.4430",
    journal = "Phys. Rev. D",
    volume = "52",
    pages = "4430--4439",
    year = "1995"
}

@article{Seifert:2006kv,
    author = "Seifert, Michael D. and Wald, Robert M.",
    title = "{A General variational principle for spherically symmetric perturbations in diffeomorphism covariant theories}",
    eprint = "gr-qc/0612121",
    archivePrefix = "arXiv",
    doi = "10.1103/PhysRevD.75.084029",
    journal = "Phys. Rev. D",
    volume = "75",
    pages = "084029",
    year = "2007"
}

@article{Shi:2021dvd,
    author = "Shi, Kai and Tian, Yu and Wu, Xiaoning and Zhang, Hongbao and Zhu, Chuanjia",
    title = "{Thermodynamic equilibrium condition and the first law of thermodynamics for charged perfect fluids in electromagnetic and gravitational fields}",
    eprint = "2108.08729",
    archivePrefix = "arXiv",
    primaryClass = "gr-qc",
    doi = "10.1088/1361-6382/ac5378",
    journal = "Class. Quant. Grav.",
    volume = "39",
    number = "8",
    pages = "085004",
    year = "2022"
}

@article{Chrusciel:2003sr,
    author = "Chrusciel, Piotr T. and Delay, Erwann",
    title = "{On mapping properties of the general relativistic constraints operator in weighted function spaces, with applications}",
    eprint = "gr-qc/0301073",
    archivePrefix = "arXiv",
    journal = "Mem. Soc. Math. France",
    volume = "94",
    pages = "1--103",
    year = "2003"
}

@article{Burnett:1990tww,
    author = "Burnett, Gregory Allan and Wald, Robert M.",
    title = "{A conserved current for perturbations of Einstein-Maxwell space-times}",
    doi = "10.1098/rspa.1990.0080",
    journal = "Proc. Roy. Soc. Lond. A",
    volume = "430",
    number = "1878",
    pages = "57--67",
    year = "1990"
}

@article{Friedman:1978wla,
    author = "Friedman, John L.",
    title = "{Generic instability of rotating relativistic stars}",
    doi = "10.1007/BF01202527",
    journal = "Commun. Math. Phys.",
    volume = "62",
    number = "3",
    pages = "247--278",
    year = "1978"
}

@article{Carter:1995mj,
    author = "Carter, B. and Langlois, D.",
    title = "{Kalb-Ramond coupled vortex fibration model for relativistic superfluid dynamics}",
    eprint = "hep-th/9611082",
    archivePrefix = "arXiv",
    doi = "10.1016/0550-3213(95)00425-R",
    journal = "Nucl. Phys. B",
    volume = "454",
    pages = "402--424",
    year = "1995"
}

@article{Hollands:2012sf,
    author = "Hollands, Stefan and Wald, Robert M.",
    title = "{Stability of Black Holes and Black Branes}",
    eprint = "1201.0463",
    archivePrefix = "arXiv",
    primaryClass = "gr-qc",
    doi = "10.1007/s00220-012-1638-1",
    journal = "Commun. Math. Phys.",
    volume = "321",
    pages = "629--680",
    year = "2013"
}

@book{hadamard1903leçons,
  title={Le{\c{c}}ons sur la propagation des ondes et les {\'e}quations de l'hydrodynamique},
  author={Hadamard, J.},
  lccn={04035083},
  series={Cours du Coll{\`e}ge de France},
  url={https://books.google.com.hk/books?id=nC5WAAAAMAAJ},
  year={1903},
  publisher={A. Hermann}
}

@article{Rayner：1963,
 ISSN = {00804630},
 URL = {http://www.jstor.org/stable/2414614},
 abstract = {This paper develops a new theory of elasticity in general relativity distinct from the one proposed by Synge (1959). In the new theory, stress is related to strain by a formula analogous to Hooke's law, whereas Synge's theory is formulated in terms of rates of change of stress and strain. Special features of the new theory are: (1) With any elastic body motion there is a uniquely associated rigid motion (in the Born sense), which plays a role analogous to that of the rigid body in the ordinary elasticity theory. (2) The 4-velocity of matter is an eigenvector of the Einstein tensor. (3) At each point of an elastic body there are at most 21 independent elastic coefficients. (4) The differential equations of motion of an elastic body are of the second order. (In Synge's theory the corresponding equations are of the third order.) (5) As in Synge's theory, shock waves travel with the same speeds as occur in ordinary elasticity theory. In the formulation of the new theory, consistent use is made of Lie derivatives.},
 author = {C. B. Rayner},
 journal = {Proceedings of the Royal Society of London. Series A, Mathematical and Physical Sciences},
 number = {1348},
 pages = {44--53},
 publisher = {The Royal Society},
 title = {Elasticity in General Relativity},
 urldate = {2025-10-11},
 volume = {272},
 year = {1963}
}

@inproceedings{bennoun1965etude,
  title={{\'E}tude des milieux continus {\'e}lastiques et thermodynamiques en relativit{\'e} g{\'e}n{\'e}rale},
  author={Bennoun, Jean-Fran{\c{c}}ois},
  booktitle={Annales de l'institut Henri Poincar{\'e}. Section A, Physique Th{\'e}orique},
  volume={3},
  number={1},
  pages={41--110},
  year={1965}
}

@article{Carter:1995if,
    author = "Carter, Brandon and Langlois, David",
    title = "{The Equation of state for cool relativistic two constituent superfluid dynamics}",
    eprint = "hep-th/9507058",
    archivePrefix = "arXiv",
    doi = "10.1103/PhysRevD.51.5855",
    journal = "Phys. Rev. D",
    volume = "51",
    pages = "5855--5864",
    year = "1995"
}

@article{Chamel:2008ca,
    author = "Chamel, N. and Haensel, P.",
    title = "{Physics of Neutron Star Crusts}",
    eprint = "0812.3955",
    archivePrefix = "arXiv",
    primaryClass = "astro-ph",
    doi = "10.12942/lrr-2008-10",
    journal = "Living Rev. Rel.",
    volume = "11",
    pages = "10",
    year = "2008"
}

@article{Annala:2019puf,
    author = {Annala, Eemeli and Gorda, Tyler and Kurkela, Aleksi and N{\"a}ttil{\"a}, Joonas and Vuorinen, Aleksi},
    title = "{Evidence for quark-matter cores in massive neutron stars}",
    eprint = "1903.09121",
    archivePrefix = "arXiv",
    primaryClass = "astro-ph.HE",
    reportNumber = "CERN-TH-2019-031, HIP-2019-7/TH",
    doi = "10.1038/s41567-020-0914-9",
    journal = "Nature Phys.",
    volume = "16",
    number = "9",
    pages = "907--910",
    year = "2020"
}

@book{haensel2006neutron,
  title={Neutron Stars 1: Equation of State and Structure},
  author={Haensel, P. and Potekhin, A.Y. and Yakovlev, D.G.},
  isbn={9780387335438},
  lccn={2006923828},
  series={Astrophysics and Space Science Library},
  url={https://books.google.com.hk/books?id=iIrj9nfHnesC},
  year={2006},
  publisher={Springer New York}
}

@article{Ozel:2016oaf,
    author = {{\"O}zel, Feryal and Freire, Paulo},
    title = "{Masses, Radii, and the Equation of State of Neutron Stars}",
    eprint = "1603.02698",
    archivePrefix = "arXiv",
    primaryClass = "astro-ph.HE",
    doi = "10.1146/annurev-astro-081915-023322",
    journal = "Ann. Rev. Astron. Astrophys.",
    volume = "54",
    pages = "401--440",
    year = "2016"
}

@article{LIGOScientific:2018cki,
    author = "Abbott, B. P. and others",
    collaboration = "LIGO Scientific, Virgo",
    title = "{GW170817: Measurements of neutron star radii and equation of state}",
    eprint = "1805.11581",
    archivePrefix = "arXiv",
    primaryClass = "gr-qc",
    reportNumber = "LIGO-P1800115",
    doi = "10.1103/PhysRevLett.121.161101",
    journal = "Phys. Rev. Lett.",
    volume = "121",
    number = "16",
    pages = "161101",
    year = "2018"
}

@article{Miller:2021qha,
    author = "Miller, M. C. and others",
    title = "{The Radius of PSR J0740+6620 from NICER and XMM-Newton Data}",
    eprint = "2105.06979",
    archivePrefix = "arXiv",
    primaryClass = "astro-ph.HE",
    doi = "10.3847/2041-8213/ac089b",
    journal = "Astrophys. J. Lett.",
    volume = "918",
    number = "2",
    pages = "L28",
    year = "2021"
}

@article{Andersson:2004nv,
    author = "Andersson, Nils and Comer, G. L. and Grosart, K.",
    title = "{Lagrangian perturbation theory of nonrelativistic rotating superfluid stars}",
    eprint = "astro-ph/0402640",
    archivePrefix = "arXiv",
    doi = "10.1111/j.1365-2966.2004.08370.x",
    journal = "Mon. Not. Roy. Astron. Soc.",
    volume = "355",
    pages = "918",
    year = "2004"
}

\end{document}